\newcommand{\norm}[1]{\left\lVert#1\right\rVert}
\newcommand{\sumL}{\sum\limits}
\DeclareMathOperator*{\argmin}{arg\,min}
\theoremstyle{plain}
\newtheorem{theorem}{Theorem}[section]
\newtheorem{lemma}[theorem]{Lemma}
\newtheorem{cor}[theorem]{Corollary}
\newtheorem{remark}[theorem]{Remark}
\theoremstyle{plain}
\theoremstyle{definition}
\newtheorem{defn}[theorem]{Definition}
\theoremstyle{plain}
\theoremstyle{remark}
\newenvironment{proof}[1][\protect\proofname]{\par
	\normalfont\topsep6\p@\@plus6\p@\relax
	\trivlist
	\itemindent\parindent
	\item[\hskip\labelsep\scshape #1]\ignorespaces
}{%
	\endtrivlist\@endpefalse
}
\providecommand{\proofname}{Proof}
\newcommand{\ignore}[1]{}
\newcommand{\RR}{\mathbb{R}}
\newcommand{\ZZ}{\mathbb{Z}}
\newcommand{\cA}{\mathcal{A}}
\newcommand{\cD}{\mathcal{D}}
\newcommand{\cI}{\mathcal{I}}
\newcommand{\cN}{\mathcal{N}}
\newcommand{\cP}{\mathcal{P}}
\newcommand{\cS}{\mathcal{S}}
\newcommand{\cW}{\mathcal{W}}
\newcommand{\cY}{\mathcal{Y}}
\newcommand{\eps}{\varepsilon}
\newcommand{\eat}[1]{}
\newcommand{\poly}{\operatorname{poly}}
\newcommand{\opt}{\ensuremath{\text{OPT}}\xspace}
\DeclareRobustCommand{\fairCSSx}{PC-column subset selection}
\DeclareRobustCommand{\fairCSS}{\mbox{PC-CSS}}
\DeclareRobustCommand{\fairSAp}[1]{PC-$\ell_{#1}$-subspace approximation}
\DeclareRobustCommand{\fairSA}{PC-subspace approximation }
\DeclareRobustCommand{\proj}{\mbox{proj}}
\title{Guessing Efficiently for Constrained Subspace Approximation}
\date{}
\author{
  Adita Bhaskara\thanks{University of Utah: bhaskaraaditya@gmail.com}
  \and Sepideh Mahabadi\thanks{Microsoft Research--Redmond: smahabadi@microsoft.com}
  \and Madhusudhan Reddy Pittu\thanks{Carnegie Mellon University: mpittu@andrew.cmu.edu}
  \and Ali Vakilian\thanks{Toyota Technological Institute at Chicago (TTIC): vakilian@ttic.edu}
  \and David P. Woodruff\thanks{Carnegie Mellon University: dwoodruf@andrew.cmu.edu}
}
\begin{document}
\maketitle

\begin{abstract}
In this paper we study constrained subspace approximation problem. Given a set of $n$ points $\{a_1,\ldots,a_n\}$ in $\mathbb{R}^d$, the goal of the {\em subspace approximation} problem is to find a $k$ dimensional subspace that best approximates the input points. More precisely, for a given $p\geq 1$, we aim to minimize the $p$th power of the $\ell_p$ norm of the error vector $(\|a_1-\bm{P}a_1\|,\ldots,\|a_n-\bm{P}a_n\|)$, where $\bm{P}$ denotes the projection matrix onto the subspace and the norms are Euclidean.
In \emph{constrained} subspace approximation (CSA), we additionally have constraints on the projection matrix $\bm{P}$. In its most general form, we require $\bm{P}$ to belong to a given subset $\mathcal{S}$ that is described explicitly or implicitly.

We introduce a general framework for constrained subspace approximation. Our approach, that we term coreset-guess-solve, yields either $(1+\varepsilon)$-multiplicative or $\varepsilon$-additive approximations for a variety of constraints. We show that it provides new algorithms for partition-constrained subspace approximation with applications to {\it fair} subspace approximation, $k$-means clustering, and projected non-negative matrix factorization, among others. Specifically, while we reconstruct the best known bounds for $k$-means clustering in Euclidean spaces, we improve the known results for the remainder of the problems. 
\end{abstract}


\section{Introduction}\label{sec:intro}

Large data sets, often represented as collections of high-dimensional points, naturally arise in fields such as machine learning, data mining, and computational geometry. Despite their high-dimensional nature, these points typically exhibit low intrinsic dimensionality. Identifying (or summarizing) this underlying low-dimensional structure is a fundamental algorithmic question with numerous applications to data analysis. We study a general formulation, that we call the  {\em subspace approximation problem}. 

In subspace approximation, given a set of $n$ points $\{a_1,\ldots, a_n\} \in \mathbb{R}^d$ and a rank parameter $k$, we consider the problem of ``best approximating'' the input points with a $k$-dimensional subspace in a high-dimensional space. Here the goal is to find a rank $k$ projection $\bm{P}$ that minimizes the projection costs $\| a_i - \bm{P} a_i\|$, aggregated over $i\in [n]$. The choice of aggregation leads to different well-studied formulations. In the $\ell_p$ subspace approximation problem, the objective is $\left( \sum_i \| a_i - \bm{P} a_i \|_2^p \right)$. Formally, denoting by $A$ the $d\times n$ matrix whose $i$th column is $a_i$, the $\ell_p$-subspace approximation problem asks to find a rank $k$ projection matrix $\bm{P} \in \RR^{d\times d}$ that minimizes $\| \bm{A} - \bm{P} \bm{A} \|_{2,p}^p := \sum_{i=1}^n \|a_i-\bm{P}a_i\|_2^p$. For different choices of $p$, $\ell_p$-subspace approximation captures some well-studied problems, notably the {\em median hyperplane problem} (when $p=1$), the {\em principal component analysis (PCA) problem} (when $p=2$), and the {\em center hyperplane problem} (when $p=\infty$).  

Subspace approximation for general $p$ turns out to be NP-hard for all $p \ne 2$. For $p>2$, semidefinite programming helps achieve a constant factor approximation (for fixed $p$) for the problem~\cite{deshpande2011alg}. Matching hardness results were also shown for the case $p>2$, first assuming the Unique Games Conjecture~\cite{deshpande2011alg}, and then based only on $\text{P} \ne \text{NP}$~\cite{guruswami2016bypass}. For $p<2$, hardness results were first shown in the work of \cite{clarkson2015input}. 

Due to the ubiquitous applications of subspace approximation in various domains, several ``constrained'' versions of the problem have been extensively studied as well~\cite{drineas2004clustering,yuan2013truncated,papailiopoulos2013sparse,asteris2014nonnegative,boutsidis2014randomized,cohen2015dimensionality}. In the most general setting of the {\em constrained $\ell_p$-subspace approximation} problem, we are additionally given a collection $\mathcal{S}$ of rank-$k$ projection matrices (specified either explicitly or implicitly) and the goal is to find a projection matrix $\bm{P} \in \mathcal{S}$ minimizing the objective. I.e.,
\begin{align}\label{eq:csa}
    \min_{\bm{P} \in \mathcal{S}} \|\bm{A} - \bm{P}\bm{A}\|_{2,p}^p.
\end{align}

Some examples of problems in constrained subspace approximation include the well-studied {\em column subset selection} \cite{boutsidis2009improved, tropp2009column, deshpande2010efficient, civril2012column,guruswami2012optimal,boutsidis2014near,altschuler2016greedy} where the projection matrices are constrained to project on to the span of $k$ of the original vectors, {\em $(k,z)$-means clustering} in which the set of projection matrices can be specified by the partitioning of the points into $k$ clusters (see \cite{cohen2015dimensionality} for a reference), and many more which we will describe in this paper.

\subsection{Our Contributions and Applications}
In this paper, we provide a general algorithmic framework for constrained $\ell_p$-subspace approximation that yields either $(1+\varepsilon)$-multiplicative or $\varepsilon$-additive error approximations to the objective (depending on the setting), with running time  exponential in $k$. We apply the framework to several classes of constrained subspace approximation, leading to new results or results matching the state-of-the-art for these problems. Note that since the problems we consider are typically APX-hard (including $k$-means, and even the \emph{unconstrained} version of $\ell_p$-subspace approximation for $p > 2$), a running time exponential in $k$ is necessary for our results, assuming the Exponential Time Hypothesis; a discussion in Section~\ref{sec:prelims}. Before presenting our results, we start with an informal description of the framework. 

\vspace{-6pt}
\subparagraph*{Overview of Approach.} Our approach is based on coresets~\cite{feldman2007ptas} (also~\cite{kmeans-coreset-21, cohenkmeanscoreset2022, Huang2024coresetoptimallb} and references therein), but turns out to be different from the standard approach in a subtle yet important way. Recall that a (strong) coreset for an optimization problem $\mathcal{O}$ on set of points $\bm{A}$ is a subset $\bm{B}$ such that for any solution for $\mathcal{O}$, the cost on $\bm{B}$ is approximately the same as the cost on $\bm{A}$, up to an appropriate scaling. In the formulation of $\ell_p$-subspace approximation above, a coreset for a dataset $\bm{A}$ would be a subset $\bm{B}$ of its columns with $k' \ll n$ columns, such that for all $k$-dimensional subspaces, each defined by some $\bm{P}$, $\| \bm{B} - \bm{P} \bm{B}\|_{2,p}^p \approx \|\bm{A} - \bm{P} \bm{A}\|_{2,p}^p$, up to scaling. Thus the goal becomes to minimize the former quantity.

In the standard coreset approach, first a coreset is obtained, and then a problem-specific enumeration procedure is used to find a near optimal solution $\bm{P}$. For example, for the $k$-means clustering objective, one can consider all the $k$-partitions of the points in the coreset $\bm{B}$; each partition leads to a set of candidate centers, and the best of these candidate solutions will be an approximate solution to the full instance. Similarly for (unconstrained) $\ell_p$-subspace approximation, one observes that for an optimal solution, the columns of $\bm{P}$ must lie in the span of the vectors of $\bm{B}$, and thus one can enumerate over the combinations of the vectors of $\bm{B}$. Each combination gives a candidate $\bm{P}$, and the best of these candidate solutions is an approximate solution to the full instance.

However, this approach does not work in general for constrained subspace approximation. To see this, consider the very simple constraint of having the columns of $\bm{P}$ coming from some given subspace $S$. Here, the coreset for $\ell_p$-subspace approximation on $\bm{A}$ will be some set $\bm{B}$ that is ``oblivious'' of the subspace $S$. Thus, enumerating over combinations of $\bm{B}$ may not yield any vectors in $S$! 

\emph{Our main idea} is to avoid enumeration over candidate solutions, but instead, we view the solution (the matrix $\bm{P} \in \RR^{d \times k}$) as simply a set of variables. We then note that since the goal is to use $\bm{P}$ to approximate $\bm{B}$, there must be some combination of the vectors of $\bm{P}$ (equivalently, a set of $k$ coefficients) that approximates each vector $a_i$ in $\bm{B}$. If the coreset size is $k'$, there are only $k \cdot k'$ coefficients in total, and we can thus hope to enumerate these coefficients in time $\exp(k \cdot k')$. For every given choice of coefficients, we can then solve an optimization problem to find the optimal $\bm{P}$. For the constraints we consider (including the simple example above), this problem turns out to be convex, and can thus be solved efficiently!

This simple idea yields $\varepsilon$-additive approximation guarantees for a range of problems. We then observe that in specific settings of interest, we can obtain $(1+\varepsilon)$-multiplicative approximations by avoiding guessing of the coefficients. In these settings, once the coefficients have been guessed, there is a \emph{closed form} for the optimal basis vectors, in the form of low degree polynomials of the coefficients. We can then use the literature on solving polynomial systems of equations (viewing the coefficients as variables) to obtain algorithms that are more efficient than guessing. The framework is described more formally in Section~\ref{sec:framework}. 

We believe our general technique of using coresets to reduce the number of {\it coefficients} needed in order to turn a constrained non-convex optimization problem into a convex one, may be of broader applicability. We note it is fundamentally different than the ``guess a sketch'' technique for variable reduction in \cite{RSW16,BBBKLW19,regW19,mw20} and the techniques for reducing variables in non-negative matrix factorization \cite{m16}. To support this statement, the guess a sketch technique requires the existence of a small sketch, and consequently has only been applied to approximation with entrywise $p$-norms for $p \leq 2$ and weighted variants \cite{RSW16,BBBKLW19,mw20}, whereas our technique applies to a much wider family of norms. 
\vspace{-6pt}
\paragraph{Relation to Prior Work.}
We briefly discuss the connection to prior work on binary matrix factorization using coresets.
The work of \cite{VVWZ-23-bin_mat_fac} addresses binary matrix factorization by constructing a strong coreset that reduces the number of distinct rows via importance sampling, leveraging the discrete structure of binary inputs. 
Our framework generalizes these ideas to continuous settings: we use strong coresets not merely to reduce distinct rows, but to reduce the number of variables in a polynomial system for solving continuous constrained optimization problems. 
This enables us to extend the approach to real-valued matrices and to more general loss functions.
Overall, our framework can be seen as a generalization and unification of prior coreset-based ``guessing'' strategies, adapting them to significantly broader settings.

\vspace{-6pt}
\subparagraph*{Applications.} We apply our framework to the following applications.  Each of these settings can be viewed as subspace approximation with a constraint on the subspace (i.e., on the projection matrix), or on properties of the associated basis vectors. Below we describe these applications, mention how they can be formulated as Constrained Subspace Approximation, and state our results for them. See Table \ref{table:results} for a summary.

\begin{table}[ht]
\centering
\resizebox{\textwidth}{!}{
{\renewcommand{\arraystretch}{1.0}%
\begin{tabular}{|l|c|c|c|}
\hline
\multicolumn{1}{|c|}{\textbf{Problem}} & \textbf{Running Time} & \textbf{Approx.} & \rule{0pt}{12pt} \shortstack{\textbf{Prior Work}}\\
\hline
\multirow{2}{*}{PC-$\ell_p$-Subspace Approx.} & \rule{0pt}{12pt} $(\frac{\kappa}{\varepsilon})^{\poly(\frac{k}{\varepsilon})}\cdot \poly(n)$ (\ref{thm:additive-subspace-approx}) & $\left(O(\varepsilon p)\cdot \|\bm{A}\|_{p,2}^p\right)^+$ & - \\
\cline{2-2}
& \rule{0pt}{12pt} $n^{O(\frac{k^2}{\varepsilon})} \cdot \poly(H)$ (\ref{thm:multiplicative})&  $(1+\varepsilon)^{*}$  & - \\
\hline
\multirow{2}{*}{Constrained Subspace Est.} &  \rule{0pt}{12pt} $\poly(n)\cdot (\frac{1}{\delta})^{O(\frac{k^2}{\varepsilon})}$ (\ref{cor:CSE-additive}) & $(1+\varepsilon, O(\delta \cdot \|\bm{A}\|_F^2))^{\dagger}$ &  $\sim$ \\
\cline{2-2}
& \rule{0pt}{12pt} $O(\frac{nd\gamma}{\varepsilon})^{O(\frac{k^3}{\varepsilon})}$ (\ref{thm:CSE-multiplicative})& $(1+\varepsilon)^{*}$  &   $\sim$\\
\hline
\multirow{2}{*}{PNMF} &\rule{0pt}{12pt}  $O(\frac{dk^2}{\varepsilon})\cdot (\frac{1}{\delta})^{O(\frac{k^2}{\varepsilon})}$(\ref{thm:NMF-additive}) & $(1+\varepsilon, O(\delta \cdot \|\bm{A}\|_F^2))^{\dagger}$ & $\sim$  \\
\cline{2-2}
&\rule{0pt}{12pt}  $(\frac{nd\gamma}{\varepsilon})^{O(\frac{k^3}{\varepsilon})}$ (\ref{thm:NMF-multiplicative}) &  $(1+\varepsilon)^{*}$  &  $\sim$ \\
\hline 
\multicolumn{1}{|l|}{$k$-Means Clustering} & $O(nnz(\bm{A})+ 2^{\widetilde{O}(\frac{k}{\varepsilon})}+n^{o(1)})$ (\ref{thm:k-means-runtime}) & $(1+\varepsilon)^{*}$  & \cite{feldman2007ptas} \\
\hline
\multicolumn{1}{|l|}{Sparse PCA}&  $d^{O(\frac{k^3}{\varepsilon^2})}\cdot \frac{k^3}{\varepsilon}$ (\ref{thm:sparse-PCA}) & $(\varepsilon\|\bm{A}-\bm{A}_k\|_F^2)^{+}$  & \cite{del2022sparse} \\
\hline
\end{tabular}
}}
\caption{Summary of the upper bound results we get using our framework. In the approximation column, we use super scripts $*,+,\dagger$ to represent if its a multiplicative, additive, or multiplicative-additive approximation respectively. In the notes on prior work column, we use tilde ($\sim$) to indicate that no prior theoretical guarantees are known (only heuristics) and  hyphen ($-$) to specify that the problem is new. 
}\label{table:results}
\end{table}
\subsubsection{Subspace Approximation with Partition Constraints} 

First, we study a generalization of $\ell_p$-subspace approximation, where we have {\em partition constraints} on the subspace. 
More specifically, we consider {\em\fairSAp{p}}, where besides the point set $\{a_1,\cdots, a_n\}\in\mathbb{R}^d$, we are given $\ell$ subspaces $S_1, \cdots, S_\ell$ along with capacities $k_1, \cdots, k_\ell$ such that $\sum_{i=1}^\ell k_i = k$. Now the set of valid projections $\mathcal{S}$ is implicitly defined to be the set of projections onto the subspaces that are obtained by selecting $k_i$ vectors from $S_i$ for each $i\in [\ell]$, taking their span.

\fairSAp{p} can be viewed as a variant of data summarization with ``fair representation''. Specifically, when $S_i$ is the span of the vectors (or points) in group $i$, then by setting $k_i$ values properly (depending on the application or the choice of policy makers),~\fairSAp{} captures the problem of finding a summary of the input data in which groups are fairly represented. This corresponds to the equitable representation criterion, a popular notion studied extensively in the fairness of algorithms, e.g., clustering~\cite{kleindessner2019fair,jones2020fair,chiplunkar2020solve,hotegni2023approximation}.\footnote{We note that the fair representation definitions differ from those in the line of work on fair PCA and column subset selection~\cite{samadi2018price,tantipongpipat2019multi,matakos2023fair,song2024socially}, where the \emph{objective contributions} (i.e., projection costs) of different groups must either be equal (if possible) or ensure that the maximum incurred cost is minimized. 
We focus on the question of groups having equal, or appropriately bounded, \emph{representation} among the chosen low-dimensional subspace (i.e., directions). This distinction is also found in algorithmic fairness studies of other problems, such as clustering.}
We show the following results for PC-subspace approximation: 
\begin{itemize}
    \item First, in~\Cref{thm:additive-subspace-approx}, we show for any $p\ge 1$, an algorithm for \fairSAp{p} with runtime $(\frac{\kappa}{\varepsilon})^{\poly({k}/{\varepsilon})}\cdot \poly(n)$  that returns a solution with additive error at most $O(\varepsilon p)\cdot\|\bm{A}\|^p_{p,2}$, where $\kappa$ is the condition number of the optimal choice of vectors from the given subspaces.

    \item For $p=2$, which is one of the most common loss functions for \fairSAp{p}, we also present a multiplicative approximation guarantee. There exists a $(1+\varepsilon)$-approximation algorithm running in time $s^{O(k^2/\varepsilon)} \cdot \poly(H)$ where $H$ is the bit complexity of each element in the input and $s$ is the sum of the dimensions of the input subspaces $S_1, \cdots, S_\ell$, i.e., $s = \sum_{j=1}^\ell \dim(S_j)$. The formal statement is in Theorem~\ref{thm:multiplicative}. 

\end{itemize}

\subsubsection{Constrained Subspace Estimation}

The {\em Constrained Subspace Estimation} problem originates from the signal processing community~\cite{santamaria2017constrained}, and aims to find a subspace $V$ of dimension $k$, that best approximates a collection of experimentally measured subspaces $T_1,\cdots,T_m$, 
with the constraint that it intersects a model-based subspace $W$ in at least a predetermined number of dimensions $\ell$, i.e., $\textnormal{dim}(V \cap W)\geq \ell$. This problem arises in applications such as beamforming, where the model-based subspace is used to encode the available prior information about the problem. The paper of \cite{santamaria2017constrained} formulates and motivates that problem, and further present an algorithm based on a semidefinite relaxation of this non-convex
problem, where its performance is only
demonstrated via numerical simulation.

We show in Section \ref{sec:subspace-estimation}, that this problem can be reduced to at most $k$ instances of \fairSAp{2}, in which the number of parts is $2$. This will give us the following result for the constrained subspace estimation problem.
\begin{itemize}
    \item In \Cref{cor:CSE-additive}, we show a $(1+\eps, \delta\|A\|_F^2)$-multiplicative-additive approximation in time $\poly(n)\cdot (1/\delta)^{O(k^2/\varepsilon)}$.  
    \item In \Cref{thm:CSE-multiplicative}, we show a $(1+\varepsilon)$ multiplicative approximation in time $O(nd\gamma/\varepsilon)^{O(k^3/\varepsilon)}$ where we assume $A$ has integer entries of absolute value at most $\gamma$. We assume that $\gamma=\poly(n)$. 
\end{itemize}

\subsubsection{Projective Non-Negative Matrix Factorization}

Projective Non-Negative Matrix Factorization (PNMF)~\cite{yuan2005projective} (see also~\cite{Yuan2009ProjectiveNM, Yang2010LinearAN}) is a variant of Non-Negative Matrix Factorization (NMF), used for dimensionality reduction and data analysis, particularly for datasets with non-negative values such as images and texts. In NMF, a non-negative matrix $\bm{X}$ is factorized into the product of two non-negative matrices $\bm{W}$ and $\bm{H}$ such that $\bm{X} \approx \bm{WH}$ where $\bm{W}$ contains basis vectors, and $\bm{H}$ represents coefficients. 
In PNMF, the aim is to approximate the data matrix by projecting it onto a subspace spanned by non-negative vectors, similar to NMF. However, in PNMF, the factorization is constrained to be {\em projective}.

Formally, PNMF can be formulated as a constrained $\ell_2$-subspace approximation as follows: the set of feasible projection matrices $\mathcal{S}$, consists of all matrices that can be written as $\bm{P} = UU^T$, where $U$ is a $d \times k$ orthonormal matrix with all non-negative entries.

We show the following results:
\begin{itemize}
    \item In \Cref{thm:NMF-additive}, we show a $(1+\eps, \delta\|A\|_F^2)$-multiplicative-additive approximation in time $O(dk^2/\varepsilon)\cdot (1/\delta)^{O(k^2/\varepsilon)}$.
    \item In \Cref{thm:NMF-multiplicative}, we show a $(1+\varepsilon)$ multiplicative approximation in time $(nd\gamma)^{O(k^3/\varepsilon)}$, where we assume $A$ has integer entries of absolute value at most $\gamma$. 
\end{itemize}

\subsubsection{$k$-Means Clustering}
$k$-means is a popular clustering algorithm widely used in data analysis and machine learning. Given a set of $n$ vectors $a_1,\cdots,a_n$ and a parameter $k$, the goal of $k$-means clustering is to partition these vectors into $k$ clusters $\{C_1,\cdots,C_k\}$ such that the sum of the squared distances of all points to their corresponding cluster center $\sum_{i=1}^n \|a_i-\mu_{C(a_i)}\|_2^2$ is minimized, where $C(a_i)$ denotes the cluster that $a_i$ belongs to and $\mu_{C(a_i)}$ denotes its center. It is an easy observation that once the clustering is determined, the cluster centers need to be the centroid of the points in each cluster. It is shown in \cite{cohen2015dimensionality} 
that this problem is an instance of constrained subspace approximation. More precisely, the set of valid projection matrices are all those that can be written as ${\bm P} = X_CX_C^T$, where $X_C$ is a $n\times k$ matrix where $X_C(i,j)$ is $1/\sqrt{|C_j|}$ if $C(a_i)=j$ and $0$ otherwise. Note that this is an orthonormal matrix and thus $X_C X_C^T$ is an orthogonal projection matrix. Further, note that using our language we need to apply the constrained subspace approximation on the matrix $A^T$, i.e., $\min_{\bm{P} \in \mathcal{S}}\|A^T-\bm{P}A^T\|_F^2$.

In \Cref{thm:k-means-runtime}, we show a $(1+\eps)$ approximation algorithm for $k$-means that runs in $O(nnz(\bm{A})+ 2^{\widetilde{O}(k/\varepsilon)}+n^{o(1)})$ time,  whose dependency on $k$ and $\varepsilon$ matches that of \cite{feldman2007ptas}.

\subsubsection{Sparse PCA}
The goal of Principal Component Analysis (PCA) is to find $k$ linear combinations of the $d$ features (dimensions), which are called principal components, that captures most of the mass of the data. As mentioned earlier, PCA is the subspace approximation problem with $p=2$. However, typically the obtained principal components are linear combinations of all vectors which makes interpretability of the components more difficult. As such, {\em Sparse PCA} which is the optimization problem obtained from PCA by adding a sparsity
constraint on the principal components have been defined which provides higher data interpretability \cite{del2022sparse, zou2006sparse, cadima1995loading, hastie2015statistical, boutsidis2011sparse}.

Sparse PCA can be formulated as a constrained subspace approximation problem in which the set of projection matrices are constrained to those that can be written as $P=UU^T$ where $U$ is a $d\times k$ orthonormal matrix such that the total number of non-zero entries in the $U$ is at most $s$, for a given parameter $s$.

We give an algorithm that runs in time $d^{O(k^3/\varepsilon^2)}\left(dk^3/\varepsilon+ d\log d\right)$ that computes a $\varepsilon \|\bm{A}-\bm{A}_k\|_F^2$ additive approximate solution, which translates to a $(1+\varepsilon)$-multiplicative approximate solution to one formulation the problem (see \Cref{thm:sparse-PCA} for the exact statement).

\subsubsection{Column Subset Selection with Partition Constraint}
Column subset selection (CSS) is a popular data summarization technique~\cite{boutsidis2014randomized,cohen2015dimensionality,altschuler2016greedy}, where given a matrix $\bm{A}$, the goal is to find $k$ columns in $\bm{A}$ that best approximates all columns of $\bm{A}$. Since in CSS, a subset of columns in the matrix are picked as the summary of the matrix $\bm{A}$, enforcing partition constraints naturally captures the problem of column subset selection with fair representation. 
More formally, in {\em column subset selection with partition constraints} ({\em\fairCSSx{}}), given a partitioning of the columns of $\bm{A}$ into $\ell$ groups, $\bm{A}^{(1)}, \cdots, \bm{A}^{(\ell)}$, along with capacities $k_1,\cdots,k_\ell$, where $\sum_i k_i=k$, the set of valid subspaces are obtained by picking $k_i$ vectors from $\bm{A}^{(i)}$, and projecting onto the span of these $k$ columns of $\bm{A}$.

In Section~\ref{sec:css-hardness}, 
we show that \fairCSSx{} is hard to approximate to any factor $f$ in polynomial time, even if there are only two groups, or even when we allow for violating the capacity constraint by a factor of $O(\log n)$
(see Theorem~\ref{thm:hard-CSS} for the formal statement). This is in sharp contrast with the standard column subset selection problem for which efficient algorithms with tight guarantees are known. 


\section{Preliminaries}\label{sec:prelims}
We will heavily use standard notations for vector and matrix quantities. For a matrix $\bm{M}$, we denote by $\bm{M}_{.,i}$ the $i$th column of $\bm{M}$ and by $\bm{M}_{i,.}$ the $i$th row. We denote by $\norm{\bm{M}}_F$ the Frobenius norm, which is simply $\sqrt{\sum_{i,j} m_{ij}^2}$, where $m_{ij}$ is the entry in the $i$th row and $j$th column of $\bm{M}$. We also use mixed norms, 
where $\norm{\bm{M}}_{2,p} = \left( \sum_i \norm{\bm{M}_{.,i}}_2^p \right)^{1/p}$. I.e., it is the $\ell_p$ norm of the vector whose entries are the $\ell_2$ norm of the columns of $\bm{M}$.

We also use $\sigma_{\min} (\bm{M})$ to denote the least singular value of a matrix, and $\sigma_{\max} (\bm{M})$ to denote the largest singular value. The value $\kappa (\bm{M})$ is used to denote the condition number, which is the ratio of the largest to the smallest singular value.

In analyzing the running times of our algorithms, we will use the following basic primitives, the running times of which we denote as $T_0$ and $T_1$ respectively. These are standard results from numerical linear algebra; while there are several improvements using randomization, these bounds will not be the dominant ones in our running time, so we do not optimize them.

\begin{lemma}[SVD Computation; see~\cite{golub2013matrix}]
\label{lem:svd-time}
Given $\bm{A}\in \RR^{d\times n}$, computing the reduced matrix $\bm{B}$ as in \Cref{lem:coreset_p=2} takes time $T_0 := H \cdot \min \{ O(nd^2), O(nd \cdot \frac{k}{\eps})\}$, where $H$ is the maximum bit complexity of any element of $\bm{A}$.
\end{lemma}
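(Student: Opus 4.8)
The plan is to reduce the claim to two standard facts about computing (partial) singular value decompositions, one for each term inside the minimum. Recall that the reduced matrix $\bm{B}$ of \Cref{lem:coreset_p=2} is obtained from $\bm{A}\in\RR^{d\times n}$ by retaining only its top $m=O(k/\eps)$ singular directions (equivalently, applying the orthogonal projector onto the span of the top $m$ left singular vectors and recording the result in the corresponding $m$-dimensional coordinate system, together with the associated singular values). Hence it suffices to bound the time to compute these top $m$ singular pairs of $\bm{A}$. Without loss of generality assume $d\le n$ (otherwise work with $\bm{A}^\top$), so that the two target bounds are $O(nd^2)$ and $O(nd\cdot k/\eps)$, up to the per-operation factor $H$ discussed below.

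For the first bound I would invoke the classical Golub--Kahan bidiagonalization followed by the implicit-shift QR iteration on the resulting bidiagonal matrix~\cite{golub2013matrix}: this computes the full SVD of $\bm{A}$ in $O(\min\{nd^2, n^2 d\})=O(nd^2)$ arithmetic operations in our regime, after which truncating to the top $m$ columns is free. For the second bound, instead of the full decomposition I would run a partial Krylov-subspace method --- e.g. block Lanczos, or simply $\widetilde{O}(1/\eps)$ rounds of subspace (power) iteration on a random $d\times m$ starting matrix, followed by an SVD of the resulting $m\times n$ sketch: each round costs one multiplication by $\bm{A}$ and one by $\bm{A}^\top$, i.e. $O(nd\,m)$ operations, and the final small SVD is lower order, giving a total of $O(nd\cdot k/\eps)$ operations. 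Taking whichever of the two procedures is cheaper produces the $\min\{\cdot,\cdot\}$ in the statement. To pass from arithmetic-operation counts to bit-operation counts, I would charge a factor of $H$ per scalar operation: every intermediate quantity is a bounded-degree rational function of the $H$-bit entries of $\bm{A}$ with numerators and denominators of $\poly$ bit length, so working in $O(H)$-bit fixed precision each addition and multiplication costs $\widetilde{O}(H)$ bit operations, which we absorb into the stated $H$ factor without optimizing logarithmic terms.

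The one point needing a word of care --- and the closest thing to an obstacle --- is that the iterative partial method returns only an \emph{approximate} top-$m$ subspace (indeed, even the bidiagonal QR iteration does not terminate in finite time in exact arithmetic). I would note that the coreset guarantee of \Cref{lem:coreset_p=2} is robust to choosing the top $m$ directions up to a $(1+\poly(\eps/k))$ factor, which follows in the standard way from Weyl's inequality and the Davis--Kahan theorem; the number of iterations needed to reach this accuracy only contributes the logarithmic factors already suppressed above. Since this lemma is stated purely to fix notation for later running-time bookkeeping and is explicitly not optimized, this level of detail suffices.
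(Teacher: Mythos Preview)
The paper does not actually prove this lemma: it is stated with the annotation ``see~\cite{golub2013matrix}'' and the surrounding text explicitly says these are standard numerical linear algebra facts that are not optimized because they are never the dominant cost. In other words, the paper's ``proof'' is simply a citation.

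Your proposal is therefore strictly more than what the paper provides, and it is a reasonable justification of the two bounds: the $O(nd^2)$ term from full Golub--Kahan bidiagonalization plus QR, and the $O(nd\cdot k/\eps)$ term from a block power/Lanczos method computing only the top $r=O(k/\eps)$ singular directions. Your caveat about the iterative method yielding only an approximate top-$r$ subspace, and the appeal to robustness of the coreset guarantee (\Cref{lem:coreset_p=2}) under small perturbations, is exactly the kind of remark the paper suppresses by declaring it will not optimize these bounds. So there is no gap; you have simply filled in what the paper deliberately left as a black-box reference.
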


\begin{lemma}[Least Squares Regression; see~\cite{golub2013matrix}]
\label{lem:regression-time}
Given $\bm{A} \in \RR^{d\times n}$ and given a target matrix $\bm{B}$ with $r$ columns, the optimization problem $\min_{\bm{C}} \norm{\bm{B} - \bm{A} \bm{C}}_F^2$ can be solved in time $T_1 := O(nrd^2 \cdot $H$)$, where $H$ is the maximum bit length of any entry in $A, B$.
\end{lemma}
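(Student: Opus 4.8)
The plan is to split the matrix least-squares problem into $r$ independent vector least-squares problems, reduce each to a full-column-rank instance, solve it by the normal equations (Gaussian elimination), and finally bound the bit-complexity growth incurred by working in exact arithmetic. The statement is textbook, so the only step that needs genuine care is the last, bookkeeping one.

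Write $\bm{B}=[b_1\mid\cdots\mid b_r]$ and $\bm{C}=[c_1\mid\cdots\mid c_r]$ with $b_j\in\RR^d$ and $c_j\in\RR^n$. Since $\norm{\bm{B}-\bm{A}\bm{C}}_F^2=\sum_{j=1}^r\norm{b_j-\bm{A}c_j}_2^2$, it suffices to minimize $\norm{b_j-\bm{A}c_j}_2^2$ over $c_j\in\RR^n$ for each $j\in[r]$ and to concatenate the results. I would first compute, by Gaussian elimination on $\bm{A}$ (cost $O(nd\min\{n,d\})$ operations), a maximal linearly independent subset $S$ of the columns of $\bm{A}$; write $\bm{A}_S$ for the $d\times t$ submatrix of those columns, where $t:=|S|=\operatorname{rank}(\bm{A})\le\min\{n,d\}$. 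Replacing $\bm{A}$ by $\bm{A}_S$ leaves the column space, hence the optimal value of each subproblem, unchanged, and the reduced instance $\min_{c'\in\RR^t}\norm{b_j-\bm{A}_S c'}_2^2$ has full column rank, so its unique minimizer solves the normal equations $(\bm{A}_S^{T}\bm{A}_S)\,c'=\bm{A}_S^{T}b_j$; padding $c'$ with zeros on the coordinates outside $S$ yields a minimizer $c_j$ over all of $\RR^n$ (non-unique when $S\neq[n]$, in which case we output this one). Concretely, form the $t\times t$ symmetric positive-definite matrix $\bm{A}_S^{T}\bm{A}_S$ in $O(t^2 d)$ operations and the $t\times r$ right-hand side $\bm{A}_S^{T}\bm{B}$ in $O(tdr)$ operations, then solve this one linear system with $r$ right-hand sides in $O(t^3+t^2 r)$ operations. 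Using $t\le\min\{n,d\}$, a short term-by-term check shows each of these quantities is $O(nrd^2)$, which gives the claimed arithmetic-operation count.

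Because we want an exact minimizer and every entry of $\bm{A}$ and $\bm{B}$ has bit length at most $H$, the computation should be carried out in exact rational arithmetic; using fraction-free (Bareiss-type) elimination, every intermediate quantity is, up to sign, a minor of a matrix assembled from $\bm{A}$ and $\bm{B}$, so all numerators and denominators have bit length $\poly(n,H)$ and each of the $O(nrd^2)$ arithmetic operations costs $\poly(n,H)$ bit operations. Absorbing these lower-order polynomial factors — which, as the remark preceding the lemma notes, we do not attempt to optimize and which never dominate the running times elsewhere in the paper — gives total running time $T_1=O(nrd^2\cdot H)$; the same bound holds a fortiori if only the optimal value, or an approximate minimizer to a prescribed accuracy, is needed. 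The one point demanding attention is this exact-arithmetic bookkeeping; everything else is standard (see~\cite{golub2013matrix}).
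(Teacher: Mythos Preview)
Your proposal is correct and follows the standard textbook approach. The paper itself does not prove this lemma at all---it is stated as a basic primitive with a citation to Golub and Van Loan, so there is nothing to compare against beyond the fact that both rely on the same classical normal-equations argument.

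One small remark on the bit-complexity step: as you yourself flag, the intermediate rationals under Bareiss elimination have bit length $\poly(n,d,H)$ rather than $O(H)$, so the literal bound $O(nrd^2\cdot H)$ hides polynomial factors in $n,d$. This is consistent with how the paper treats $T_1$ (explicitly ``not optimized'' and never the dominant term), and your parenthetical already acknowledges it, so there is no gap---just be aware that if the stated bound were taken at face value it would be too optimistic for exact arithmetic.
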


\subparagraph*{Remark on the Exponential in $k$ Running Times.} In all of our results, it is natural to ask if the exponential dependence on $k$ is necessary. We note that many of the problems we study are APX hard, and thus obtaining \emph{multiplicative} $(1+\eps)$ factors will necessarily require exponential time in the worst case. For problems that generalize $\ell_p$-subspace approximation (e.g., the \fairSAp{p} problem, Section~\ref{sec:subspace-approximation-main}), the works of~\cite{guruswami2016bypass} and \cite{clarkson2015input} showed APX hardness. In these reductions, we in fact have the stronger property that the YES and NO instances differ in objective value by $\frac{1}{\text{poly}(k)} \cdot \norm{\bm{A}}_{2,p}^p$, where $\bm{A}$ is the matrix used in the reduction. Thus, assuming the Exponential Time Hypothesis, even the additive error guarantee in general requires an exponential dependence on either $k$ or $1/\eps$.

\section{Framework for Constrained Subspace Approximation}\label{sec:framework}
Given a $d\times n$ matrix $\bm{A}$ and a special collection $\cS$ of rank $k$ projection matrices, we are interested in selecting the projection matrix $\bm{P}\in \cS$ that minimizes the sum of projection costs (raised to the $p^{\textnormal{th}}$ power) of the columns of $\bm{A}$ onto $\bm{P}$. More compactly, the optimization problem is 
\begin{align}
\label{prog:CSA}   \min\limits_{\bm{P} \in \cS}&:\, \|\bm{A}-\bm{P}\bm{A}\|_{2,p}^p. \tag{CSA}
\intertext{
A more geometric and equivalent interpretation is that we have a collection of $n$ data-points $\{a_1,a_2,\dots,a_n\}\subseteq \RR^d$ and we would like to approximate these data points by a subspace while satisfying certain constraints on the subspace:
}
\label{prog:CSA-geo}    \min&:\, \sumL_{i=1}^n \|a_i -\widehat{a}_i\|_2^p  \tag{CSA-geo} \\
    & \widehat{a}_i \in \textnormal{ColumnSpan}(\bm{P}) \nonumber\\
    &\bm{P}\in \cS. \nonumber
\end{align}
See \Cref{lem:csa-alt-equiv} for a proof of the equivalence. We provide a unified framework to obtain approximately optimal solutions for various special collections of $\cS$. 
In our framework, there are three steps to obtaining an approximate solution to any instance of \ref{prog:CSA}.
\begin{enumerate}
    \item \textbf{Build a coreset:} Reduce the size of the problem by replacing $\bm{A}$ with a different matrix $\bm{B}\in \RR^{d\times r}$ with fewer number of columns typically $\poly(k,1/\varepsilon)$. The property we need to guarantee is that the projection cost is approximately preserved possibly with an additive error $c\geq 0$ independent of $\bm{P}$:
    \begin{align}
    \label{eqn:approximate_cost}    \|\bm{B}-\bm{P}\bm{B}\|_{2,p}^p \in (1,1+\varepsilon)\cdot   \|\bm{A}-\bm{P}\bm{A}\|_{2,p}^p-c \quad \forall \bm{P} \textnormal{ with rank at most } k.
    \end{align}
    Such a $\bm{P}$ (for $p=2$) has been referred to as a \emph{Projection-Cost-Preserving Sketch with one sided error} in \cite{cohen2015dimensionality} . See \Cref{defn:strong_coreset}, \Cref{thm:coresets-lp}, and \Cref{lem:coreset_p=2} for results obtaining such a $\bm{B}$ for various $1\leq p<\infty$. \Cref{lem:approximate_soln} shows that approximate solutions to reduced instances $(\bm{B}, \cS)$ satisfying \Cref{eqn:approximate_cost} are also approximate solutions to the original instance $(\bm{A}, \cS)$.
    \item \textbf{Guess Coefficients:} Since the projection matrix $\bm{P}$ is of rank $k$, it can be represented as $\bm{U}\bm{U}^T$ such that $\bm{U}^T\bm{U}=\bm{I}_k$. Using this, observe that the residual matrix 
    \begin{align*}
    \bm{B}-\bm{P}\bm{B} = \bm{B}-\bm{U}(\bm{U}^T\bm{B})
    \end{align*}
     can be represented as $\bm{B}-\bm{U}\bm{C}$ where $\bm{C}=\bm{U}^T\bm{B}$ is a $\RR^{k\times r}$ matrix. The norm of the $i^{\textnormal{th}}$ column of $\bm{C}$ can be bounded by $\|b_i\|_2$ the norm of the $i^{\textnormal{th}}$ column of $\bm{B}$. This allows us to guess every column of $\bm{C}$ inside a $k$ dimensional ball of radius at most the norm of the corresponding column in $\bm{B}$. Using a net with appropriate granularity, we guess the optimal $\bm{C}$ up to an additive error. 
    \item \textbf{Solve:} For every fixed $\bm{C}$ in the search space above,  we solve the constrained regression problem 
    \begin{align*}
        \min\limits_{\bm{U}\in \RR^{d\times k}: \bm{U}\bm{U}^T\in \cS}\|\bm{B}-\bm{U}\bm{C}\|_{2,p}^p
    \end{align*}
    exactly. If $\widehat{\bm{C}}$ is the $\bm{C}$ matrix that induces the minimum cost, and $\widehat{\bm{U}}$ is the minimizer to the constrained regression problem, we return the projection matrix $\widehat{\bm{U}}\widehat{\bm{U}}^T$. 
\end{enumerate}

The following lemma formalizes the framework above and can be used as a black box application for several specific instances of \ref{prog:CSA}.
\begin{lemma}
\label{lem:structural-CSA}
Given an instance $(\bm{A}, \cS)$ of \ref{prog:CSA}, for $1\leq p<\infty$,
\begin{enumerate}
    \item Let $T_s$ be the time taken to obtain a smaller instance $(\bm{B}, \cS)$ such that the approximate cost property in \Cref{eqn:approximate_cost} is satisfied and the number of columns in $\bm{B}$ is $r$.
    \item Let $T_r$ be the time taken to solve the constrained regression problem for any fixed $\bm{B}\in \RR^{d\times r}$ and $\bm{C}\in \RR^{k\times r}$
    \begin{align}
  \label{eqn:constrained_regression}      \min\limits_{\bm{U}\in \RR^{d\times k}: \bm{U}\bm{U}^T\in \cS}\|\bm{U}\bm{C}-\bm{B}\|_{2,p}^p.
    \end{align}
Then for any granularity parameter $0<\delta<1$, we obtain a solution $\bm{P}\in \cS$ such that 
\begin{align}
   \|\bm{A}-\bm{P}\bm{A}\|_{2,p}^p \leq (1+\varepsilon)\opt +\Delta
\end{align}
in time $T_s+T_r\cdot O((1/\delta)^{kr})$.
\end{enumerate}
Here, $\Delta=(1+\varepsilon)\|\bm{A}\|_{2,p}^p\cdot\left((1+\delta)^p-1\right)$ and $\opt=\min\limits_{\bm{P}' \in \cS}\, \|\bm{A}-\bm{P}'\bm{A}\|_{2,p}^p$.
\end{lemma}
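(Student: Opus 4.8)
The plan is to chain the three framework steps together and track how errors accumulate. First I would invoke the coreset step: by hypothesis we obtain $\bm{B} \in \RR^{d \times r}$ in time $T_s$ satisfying \Cref{eqn:approximate_cost}, and I would cite \Cref{lem:approximate_soln} to reduce the task to: find a good solution to the reduced instance $(\bm{B}, \cS)$. More precisely, if $\bm{P}$ satisfies $\|\bm{B} - \bm{P}\bm{B}\|_{2,p}^p \le (1+\varepsilon)\min_{\bm{P}'}\|\bm{B}-\bm{P}'\bm{B}\|_{2,p}^p + \Delta'$ for a suitable $\Delta'$, then it is a $(1+O(\varepsilon))$-multiplicative, $\Delta$-additive solution on $\bm{A}$. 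I would be a little careful here about how the $(1,1+\varepsilon)$ sandwich in \Cref{eqn:approximate_cost} interacts with the multiplicative factor, possibly reparametrizing $\varepsilon$ by a constant; this is routine.

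The heart of the argument is the guessing step. Fix an optimal (or near-optimal) $\bm{P}^\star = \bm{U}^\star (\bm{U}^\star)^T \in \cS$ for the reduced instance, with $(\bm{U}^\star)^T \bm{U}^\star = \bm{I}_k$, and set $\bm{C}^\star = (\bm{U}^\star)^T \bm{B} \in \RR^{k \times r}$. The key observation is that column $i$ of $\bm{C}^\star$ has $\ell_2$-norm at most $\|b_i\|_2$, since $\bm{U}^\star$ has orthonormal columns; hence $\bm{C}^\star$ lives in a product of $k$-dimensional balls of radii $\|b_1\|_2, \dots, \|b_r\|_2$. I would place a $\delta'$-net (in a suitably scaled sense — granularity $\delta' \|b_i\|_2$ in column $i$) over this product region; its size is $O((1/\delta')^{kr})$, and it is enumerable in that time. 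For the net point $\widehat{\bm{C}}$ closest to $\bm{C}^\star$, each column satisfies $\|\widehat{\bm{C}}_{.,i} - \bm{C}^\star_{.,i}\|_2 \le \delta' \|b_i\|_2$. Then $\|\bm{B} - \bm{U}^\star \widehat{\bm{C}}\|_{2,p}^p$ is close to $\|\bm{B} - \bm{U}^\star \bm{C}^\star\|_{2,p}^p = \|\bm{B} - \bm{P}^\star \bm{B}\|_{2,p}^p$: column-wise, $\|b_i - \bm{U}^\star \widehat{\bm{C}}_{.,i}\|_2 \le \|b_i - \bm{U}^\star \bm{C}^\star_{.,i}\|_2 + \delta'\|b_i\|_2$, and raising to the $p$th power and summing gives an additive overhead controlled by $\|\bm{B}\|_{2,p}^p \cdot ((1+\delta')^p - 1)$ after a convexity/triangle-inequality manipulation (expanding $(x+y)^p$ and using $\|\bm{B}\|_{2,p}^p \le \|\bm{A}\|_{2,p}^p$ up to the coreset factor). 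Choosing $\delta'$ in terms of $\delta$ yields the stated $\Delta = (1+\varepsilon)\|\bm{A}\|_{2,p}^p((1+\delta)^p - 1)$.

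The solve step is then immediate: for the particular net point $\widehat{\bm{C}}$ above, solving the constrained regression \Cref{eqn:constrained_regression} exactly produces some $\bm{U}$ with $\|\bm{B} - \bm{U}\widehat{\bm{C}}\|_{2,p}^p \le \|\bm{B} - \bm{U}^\star \widehat{\bm{C}}\|_{2,p}^p$, and since we return the best $\bm{U}$ over all net points, the output is at least this good. Putting the three bounds together — coreset distortion, net overhead, exact regression — and unwinding via \Cref{lem:approximate_soln} gives $\|\bm{A} - \bm{P}\bm{A}\|_{2,p}^p \le (1+\varepsilon)\opt + \Delta$. The running time is $T_s$ for the coreset plus $O((1/\delta)^{kr})$ calls to the regression solver, each costing $T_r$, i.e., $T_s + T_r \cdot O((1/\delta)^{kr})$.

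The main obstacle I anticipate is purely the error bookkeeping for general $p \ge 1$: unlike $p = 2$, the map $x \mapsto x^p$ is not additive over columns, so I cannot just add squared errors. The clean way is a per-column bound $\|b_i - \bm{U}\widehat{\bm{C}}_{.,i}\|_2 \le (1+\delta)\,(\text{something}) $-type inequality followed by summing $p$th powers, using $\|\bm{B}\|_{2,p}^p = \sum_i \|b_i\|_2^p$; the inequality $(a+b)^p \le$ (convex combination bound) or simply $(\|b_i - \bm{U}^\star\bm{C}^\star_{.,i}\|_2 + \delta'\|b_i\|_2)^p$ expanded and summed is what produces the factor $(1+\delta)^p - 1$ cleanly once one also absorbs the optimal residual. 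A secondary subtlety is ensuring the net is over the right domain: the radii $\|b_i\|_2$ are data-dependent but known, so the net is genuinely constructible, and the bound $\|\bm{C}^\star_{.,i}\|_2 \le \|b_i\|_2$ guarantees $\bm{C}^\star$ is actually inside it. Neither difficulty is conceptually deep; the work is in choosing constants so the final $\varepsilon$ and $\Delta$ match the statement exactly.
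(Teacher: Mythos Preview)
Your proposal is correct and follows essentially the same route as the paper: bound the columns of $\bm{C}^\star=(\bm{U}^\star)^T\bm{B}$ by $\|b_i\|_2$, cover with a column-wise $\delta\|b_i\|_2$-net of size $O((1/\delta)^{kr})$, use the per-column triangle inequality plus the monotonicity of $(x+\delta)^p-x^p$ on $[0,1]$ to get the $\|\bm{B}\|_{2,p}^p((1+\delta)^p-1)$ overhead, and transfer back via the coreset sandwich (the paper does this last step directly rather than through \Cref{lem:approximate_soln}, but the arithmetic is identical). The only small step you leave implicit is $\|\bm{B}-\widehat{\bm{U}}\widehat{\bm{U}}^T\bm{B}\|_{2,p}^p \le \|\bm{B}-\widehat{\bm{U}}\widehat{\bm{C}}\|_{2,p}^p$, which follows because $\widehat{\bm{U}}^T\bm{B}$ is the optimal coefficient matrix for orthonormal $\widehat{\bm{U}}$.
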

\begin{proof}
    Let the optimal solution to the instance $(\bm{A}, \cS)$ be $\bm{P}^* = \bm{U}^*{\bm{U}^*}^T$ and let $\bm{C}^*={\bm{U}^*}^T\bm{B}$. Since the columns of $\bm{U}^*$ are unit vectors, the norm of the $i^{\textnormal{th}}$ column of $\bm{C}^*$ is at most $\|b_i\|_2$ the norm of the $i^{\textnormal{th}}$ column of $\bm{B}$. We will try to approximately guess the columns of $\bm{C}^*$ using epsilon nets. For each $i$, we search for the $i^{th}$ column of $\bm{C}$ using a $(\|b_i\|_2\cdot \delta)$-net inside a $k$ dimensional ball of radius $\|b_i\|_2$ centered at origin. The size of the net for each column of $\bm{C}$ is $O((1/\delta)^k)$ and hence the total search space over matrices $\bm{C}$ has $O((1/\delta)^{kr})$ possibilities. 

    For each $\bm{C}$, we solve the constrained regression problem in \Cref{eqn:constrained_regression}. Let $\widehat{\bm{C}}$ be the matrix for which the cost is minimized and $\widehat{\bm{U}}$ be the corresponding minimizer to the constrained regression problem respectively. Consider the solution $\widehat{\bm{P}}=\widehat{\bm{U}}\widehat{\bm{U}}^T$. The cost of this solution on reduced instance $(\bm{B}, \cS)$ is 
    \begin{align}
        \|\bm{B}-\widehat{\bm{U}}\widehat{\bm{U}}^T\bm{B}\|_{2,p}^p &\leq  \|\bm{B}-\widehat{\bm{U}}\widehat{\bm{C}}\|_{2,p}^p. 
        \intertext{  Let $\overline{\bm{C}}$ be the matrix in the search space such that $\|\overline{\bm{C}}_{.,i} - \bm{C}^*_{.,i}\|_2 \leq \|b_i\|_2\cdot \delta$ for every $i\in [r]$. Using the cost minimality of $\widehat{\bm{C}}$, we can imply that the above cost is }
        &\leq \min\limits_{\bm{U}\in \RR^{d\times k}: \bm{U}\bm{U}^T\in \cS} \|\bm{B}-\bm{U}\overline{\bm{C}}\|_{2,p}^p\\
        & \leq \|\bm{B}-\bm{U}^*\overline{\bm{C}}\|_{2,p}^p. 
    \end{align}
It remains to upper bound the difference $\Delta=\|\bm{B}-\bm{U}^*\overline{\bm{C}}\|_{2,p}^p- \|\bm{B}-\bm{U}^*\bm{C}^*\|_{2,p}^p$. If we let $b_i^*:= (\bm{U}^*\bm{C}^*)_{.,i}$ and $\overline{b}_i:= (\bm{U}^*\overline{\bm{C}})_{.,i}$ for $i\in [r]$, then 
\begin{align}
    &\Delta = \sumL_{i=1}^r \left(\|b_i-\overline{b}_i\|_2^p- \|b_i-b_i^*\|_2^p\right).
    \intertext{Using the fact that $\|\overline{\bm{C}}_{.,i} - \bm{C}^*_{.,i}\|_2 \leq \|b_i\|_2\cdot \delta$, we know that} 
    \|\overline{b}_i- b_i^*\|_2 &= \|\bm{U}^*(\overline{\bm{C}}_{.,i} - \bm{C}^*_{.,i})\|_2\leq \|\overline{\bm{C}}_{.,i} - \bm{C}^*_{.,i}\|_2 \leq \|b_i\|_2\cdot \delta. 
\end{align}
     This implies that each error term 
\begin{align}
  \Delta_i &:= \|b_i-\overline{b}_i\|_2^p- \|b_i-b_i^*\|_2^p \\
  &\leq (\|b_i-b_i^*\|_2+ \|b_i^*-\overline{b}_i\|_2)^p- \|b_i-b_i^*\|_2^p \tag{Triangle inequality}\\
  &\leq (\|b_i-b_i^*\|_2+ \|b_i\|\cdot \delta)^p- \|b_i-b_i^*\|_2^p \tag{$\|\overline{b}_i- b_i^*\|_2\leq \|b_i\|_2\cdot \delta$}\\
  &\leq \|b_i\|_2^p\cdot \left((1+\delta)^p-1\right). \tag{$(x+\delta)^p-x^p$ is increasing in $[0,1], \,\|b_i-b_i^*\|_2\leq \|b_i\|_2$}
\end{align}
    Summing up, the total error $\Delta$ is at most $\|\bm{B}\|_{2,p}^p\cdot\left((1+\delta)^p-1\right)=O(\delta p)\cdot \|\bm{B}\|_{2,p}^p$ for $\delta\leq 1/p$. This implies that 
    \begin{align}
        \|\bm{B}-\widehat{\bm{P}}\bm{B}\|_{2,p}^p&\leq \|\bm{B}-\bm{P}^*\bm{B}\|_{2,p}^p+ \|\bm{B}\|_{2,p}^p\cdot\left((1+\delta)^p-1\right)
        \intertext{Using the property of $\bm{B}$ from \Cref{eqn:approximate_cost}, we can imply that}
        \|\bm{A}-\widehat{\bm{P}}\bm{A}\|_{2,p}^p& \leq (1+\varepsilon)\|\bm{A}-\bm{P}^*\bm{A}\|+ \|\bm{B}\|_{2,p}^p\cdot\left((1+\delta)^p-1\right).
\intertext{setting $\bm{P}=0$ in \Cref{eqn:approximate_cost} and using the fact that $c\geq 0$ gives $\|\bm{B}\|_{2,p}^p\leq (1+\varepsilon)\|\bm{A}\|_{2,p}^p$. Plugging this in the equation above gives}
\|\bm{A}-\widehat{\bm{P}}\bm{A}\|_{2,p}^p& \leq (1+\varepsilon)\|\bm{A}-\bm{P}^*\bm{A}\|+ (1+\varepsilon)\|\bm{A}\|_{2,p}^p\cdot\left((1+\delta)^p-1\right)
    \end{align}
    The total time taken by the algorithm is $T_s+T_r\cdot O((1/\delta)^{kr})$.
\end{proof}

\begin{lemma}
\label{lem:csa-alt-equiv}
The mathematical programs \ref{prog:CSA} and \ref{prog:CSA-geo} equivalent to the following ``constrained factorization'' problem:
\begin{align}
\label{prog:CSA-factorization}   \min\limits_{\bm{U}\bm{U}^T \in \cS,\, \bm{H}\in \RR^{d\times n}}&\, \|\bm{A}-\bm{U}\bm{H}\|_{2,p}^p. \tag{CSA-fac}
 \end{align}
\end{lemma}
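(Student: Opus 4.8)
The plan is to establish equality of the optimal values by chaining inequalities in both directions, using two elementary facts. First, for a rank-$k$ orthogonal projection $\bm{P}$ (which is what $\cS$ consists of, since every $\bm{P}\in\cS$ can be written $\bm{U}\bm{U}^T$ with $\bm{U}^T\bm{U}=\bm{I}_k$), the point $\bm{P}a_i$ is the closest point to $a_i$ in $\mathrm{ColumnSpan}(\bm{P})$ in Euclidean norm --- this is the orthogonal projection theorem. Second, $t\mapsto t^p$ is nondecreasing on $[0,\infty)$ for $p\ge 1$, so a minimizer of $\|a_i-\widehat a_i\|_2$ over a subspace is also a minimizer of $\|a_i-\widehat a_i\|_2^p$. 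Since the objectives $\sum_i\|a_i-\widehat a_i\|_2^p$ in \ref{prog:CSA-geo} and $\|\bm{A}-\bm{U}\bm{H}\|_{2,p}^p=\sum_i\|a_i-(\bm{U}\bm{H})_{.,i}\|_2^p$ in \ref{prog:CSA-factorization} both decouple across the column index $i$, for any fixed feasible $\bm{P}$ the inner minimization (over the $\widehat a_i$, resp.\ over $\bm{H}$) is solved column-by-column and attains value $\sum_i\|a_i-\bm{P}a_i\|_2^p=\|\bm{A}-\bm{P}\bm{A}\|_{2,p}^p$.

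For \ref{prog:CSA} $=$ \ref{prog:CSA-geo}: fix $\bm{P}\in\cS$; by the two facts above the best choice of $\widehat a_i\in\mathrm{ColumnSpan}(\bm{P})$ is $\widehat a_i=\bm{P}a_i$, and since the $n$ terms are independent the optimal \ref{prog:CSA-geo}-objective for this $\bm{P}$ equals $\|\bm{A}-\bm{P}\bm{A}\|_{2,p}^p$. Minimizing both sides over $\bm{P}\in\cS$ gives equality of the optimal values, and an optimizer of one program yields an optimizer of the other.

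For \ref{prog:CSA} $=$ \ref{prog:CSA-factorization}: on one hand, any feasible pair $(\bm{U},\bm{H})$ for \ref{prog:CSA-factorization} has objective at least that of \ref{prog:CSA} --- set $\bm{P}=\bm{U}\bm{U}^T\in\cS$; since every column of $\bm{U}\bm{H}$ lies in $\mathrm{ColumnSpan}(\bm{U})=\mathrm{ColumnSpan}(\bm{U}\bm{U}^T)=\mathrm{ColumnSpan}(\bm{P})$, the closest-point fact gives $\|a_i-(\bm{U}\bm{H})_{.,i}\|_2\ge\|a_i-\bm{P}a_i\|_2$ for all $i$, so $\|\bm{A}-\bm{U}\bm{H}\|_{2,p}^p\ge\|\bm{A}-\bm{P}\bm{A}\|_{2,p}^p$, a feasible value of \ref{prog:CSA}. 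On the other hand, each $\bm{P}\in\cS$ yields a feasible pair for \ref{prog:CSA-factorization} with objective exactly $\|\bm{A}-\bm{P}\bm{A}\|_{2,p}^p$: write $\bm{P}=\bm{U}\bm{U}^T$ with $\bm{U}^T\bm{U}=\bm{I}_k$ and take $\bm{H}=\bm{U}^T\bm{A}$, so $\bm{U}\bm{H}=\bm{P}\bm{A}$. These two observations force the optimal values to coincide.

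I do not anticipate a real obstacle; the only point needing care is handling feasible $(\bm{U},\bm{H})$ in \ref{prog:CSA-factorization} where $\bm{U}$ is not a priori assumed orthonormal. The identity $\mathrm{ColumnSpan}(\bm{U}\bm{U}^T)=\mathrm{ColumnSpan}(\bm{U})$, valid for any real matrix, already handles this and sidesteps the question; for completeness one may note that $\bm{U}\bm{U}^T$ being a rank-$k$ orthogonal projection in fact forces $\bm{U}$ to have orthonormal columns, via the SVD of $\bm{U}$. One should also fix the intended shape of $\bm{H}$: consistency with the factor $\bm{C}\in\RR^{k\times r}$ used elsewhere in the framework indicates $\bm{H}\in\RR^{k\times n}$ rather than $\RR^{d\times n}$ as literally written.
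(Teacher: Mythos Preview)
Your proposal is correct and follows essentially the same route as the paper: both arguments decouple the objective column-by-column, use that $\bm{P}a_i$ is the Euclidean-closest point of $\mathrm{ColumnSpan}(\bm{P})$ to $a_i$ (the paper phrases this via the normal equation $\bm{U}^T\bm{U}h_i=\bm{U}^Ta_i$ with $\bm{U}^T\bm{U}=\bm{I}_k$), and plug in $\bm{H}=\bm{U}^T\bm{A}$ for the reverse inequality. Your remark that $\bm{H}\in\RR^{k\times n}$ (not $\RR^{d\times n}$) is also on point.
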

\begin{proof}
First, we will prove the equivalence between \ref{prog:CSA} and \ref{prog:CSA-factorization}. 
        \begin{enumerate}
            \item The easier direction to see is $\min_{\bm{U}\bm{U}^T \in \cS,\, \bm{H}\in \RR^{d\times n}}\|\bm{A}-\bm{U}\bm{H}\|_{2,p}^p \leq \min_{\bm{U}\bm{U}^T\in \cS}\|\bm{A}-\bm{U}\bm{U}^T\bm{A}\|_{2,p}^p$ because setting $\bm{H}=\bm{U}^T\bm{A}$ in \ref{prog:CSA-factorization} gives \ref{prog:CSA}.
            \item For the other direction, it suffices to show that for any fixed choice of $\bm{U}$ such that $\bm{U}\bm{U}^T\in \cS$, an optimal choice of $\bm{H}$ is $\bm{U}^T\bm{A}$. In order to see this, observe that the problem 
            \begin{align}
                \min\limits_{\bm{H}}\|\bm{A}-\bm{U}\bm{H}\|_{2,p}^p &= \min\limits_{\bm{H}}\sumL_{i=1}^n\|a_i-\bm{U}h_i\|_2^p
                \intertext{where $a_i$ and $h_i$ are the $i^{th}$ columns of $\bm{A}$ and $\bm{H}$ respectively. Since the cost function decomposes into separate problems for each column, we can push the minimization inside. }
                &= \sumL_{i=1}^n\left(\min\limits_{h_i}\|a_i-\bm{U}h_i\|_2\right)^p.
            \end{align}
Using normal equation, the optimal choice for $h_i$ satisfies $\bm{U}^T\bm{U}h_i=\bm{U}^Ta_i$. Since the columns of $\bm{U}$ are orthonormal, this implies that $h_i=\bm{U}^Ta_i$ for each $i\in [n]$ and hence $\bm{H}=\bm{U}^T\bm{A}$.
        \end{enumerate}

Now we show the equivalence between \ref{prog:CSA-factorization} and \ref{prog:CSA-geo}. Observe that \ref{prog:CSA-geo} can be re-written as 
\begin{align*}
  \min&\, \sumL_{i=1}^n \|a_i -\widehat{a}_i\|_2^p  \\
    & \widehat{a}_i \in \textnormal{ColumnSpan}(\bm{U}) \nonumber\\
    &\bm{U}\bm{U}^T\in \cS. \nonumber
\end{align*}
Because the column span of $\bm{P}=\bm{U}\bm{U}^T$ is identical to the column span of $\bm{U}$. Replacing $\widehat{a}_i\in \textnormal{ColumnSpan}(\bm{U})$ by $\widehat{a}_i = \bm{U}h_i$ gives \ref{prog:CSA-factorization}.
\end{proof}

\begin{defn}[Strong coresets; as defined in \cite{WY24}]
\label{defn:strong_coreset}
    Let $1\leq p<\infty$ and $0<\varepsilon <1 $. Let $\bm{A}\in \RR^{d\times n}$. Then, a diagonal matrix $\bm{S}\in \RR^{n\times n}$ is a $(1\pm\varepsilon)$ strong coreset for $\ell_p$ subspace approximation if for all rank $k$ projection matrices $\bm{P}_F$, we have
    \begin{align}
\label{eqn:strong_coreset_property}        \|(\bm{I}-\bm{P}_F)\bm{A}\bm{S}\|_{2,p}^p \in  (1\pm \varepsilon)\|(\bm{I}-\bm{P}_F)\bm{A}\|_{2,p}^p.
    \end{align}
The number of non-zero entries  $\textnormal{nnz}(\bm{S})$ of $\bm{S}$ will be referred to as the size of the coreset. 
\end{defn}
\begin{theorem}[Theorems 1.3 and 1.4 of~\cite{WY25}] \label{thm:coresets-lp}
Let $p \in [1, 2) \cup (2, \infty)$ and $\eps >0$ be given, and let $\bm{A}\in \RR^{d\times n}$. There is an algorithm running in $\widetilde{O}(\textnormal{nnz}(\bm{A})+d^\omega)$ time which, with probability at least $1-\delta$, constructs a strong coreset $\bm{S}$ that satisfies \cref{defn:strong_coreset} and has size:
\begin{equation}\label{eq:coreset-size-lp}
\textnormal{nnz}(\bm{S}) = \begin{cases}
\frac{k}{\varepsilon^{4/p}}(\log(k/\varepsilon \delta))^{O(1)} \qquad \text{ if $p \in [1, 2)$}, \\
\frac{k^{p/2}}{\varepsilon^{p}}(\log(k/\varepsilon \delta))^{O(p^2)} \qquad \text{ if $p \in (2, \infty)$}.
\end{cases}
\end{equation}
\subparagraph*{Remark.} Note that for any $\bm{S}$ that satisfies the property in \Cref{defn:strong_coreset}, we can scale it up to satisfy  $\|(\bm{I}-\bm{P}_F)\bm{A}\bm{S}\|_{p,2}^p \in  (1, 1+\varepsilon)\|(\bm{I}-\bm{P}_F)\bm{A}\|_{p,2}^p$ matching the condition in \Cref{eqn:approximate_cost}.
\end{theorem}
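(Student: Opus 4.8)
The plan is to build $\bm{S}$ by the standard two-stage recipe for strong coresets: a dimension-reduction step that makes all later computation cheap, followed by an importance-sampling step that selects and reweights the columns, all set within the sensitivity-sampling framework \cite{feldman2007ptas}. For the first stage, compute constant-factor estimates of the $\ell_p$ Lewis weights (leverage-type scores) of the columns of $\bm{A}$; applying a sparse subspace embedding on the rows and then working on the embedded $\poly(d/\eps)\times d$ matrix gives all the relevant estimates in $\widetilde{O}(\textnormal{nnz}(\bm{A})+d^{\omega})$ time, and this is where the $d^{\omega}$ term (and nothing else) enters the running time. The key observation making this possible is that although the residual cost $\|(\bm{I}-\bm{P}_F)a_i\|_2=\min_{y}\|a_i-\bm{V}y\|_2$ (with $\bm{V}$ orthonormal, spanning the complement of $\textnormal{range}(\bm{P}_F)$) ranges over all rank-$k$ subspaces, its worst case over $\bm{P}_F$ is governed by a single well-conditioned-basis computation on $\bm{A}$, so these scores cost essentially no more than ordinary $\ell_p$ leverage scores.

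For the second stage, consider the function family $\mathcal{F}=\{\,x\mapsto\|(\bm{I}-\bm{P}_F)x\|_2^p:\bm{P}_F\text{ a rank-}k\text{ projection}\,\}$ with sensitivities $s_i=\sup_{\bm{P}_F}\|(\bm{I}-\bm{P}_F)a_i\|_2^p\big/\sum_{j}\|(\bm{I}-\bm{P}_F)a_j\|_2^p$. The core of the argument is the bound $\sum_i s_i=\poly(k)$ with the correct $p$-dependence, proved by using the $\min_y\|a_i-\bm{V}y\|_2$ description to reduce to a worst-case $\ell_p$-regression sensitivity estimate and then invoking $\ell_p$ Lewis-weight bounds for a suitably augmented basis; the dichotomy at $p=2$ — roughly total sensitivity $k$ together with an $\eps^{-4/p}$ sampling overhead for $p<2$, versus total sensitivity $k^{p/2}$ together with an $\eps^{-p}$ overhead for $p>2$ — reflects the usual fact that $\ell_p$ balls are flat below $2$ and spiky above $2$. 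Sampling each column with probability proportional to $s_i$ and reweighting by the inverse probability, a sufficient sample size for uniform convergence of the estimated cost over all $\bm{P}_F\in\mathcal{F}$ has the naive form $\widetilde{O}\big((\sum_i s_i)\cdot\textnormal{pdim}(\mathcal{F})/\eps^{2}\big)$, where $\textnormal{pdim}(\mathcal{F})$ is $\poly(k/\eps)$ after the dimension reduction; this is suboptimal, and a refined chaining analysis (discussed below) sharpens it to exactly the sizes claimed, after which rescaling $\bm{S}$ by $1+\eps$ converts the two-sided guarantee into the one-sided form of the Remark.

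The main obstacle — and where essentially all the work in \cite{WY25} lies — is obtaining the \emph{sharp} $\eps$-dependence rather than an extra $\poly(1/\eps)$ or $\log$ factor in the exponent. A naive $\eps$-net over rank-$k$ subspaces together with a Bernstein bound is lossy; achieving $\eps^{-4/p}$ for $p<2$ and $\eps^{-p}$ for $p>2$ requires a Dudley-type chaining argument over $\mathcal{F}$ combined with a ``heavy/light'' split of the columns, keeping the heavy columns deterministically and controlling the light ones only in aggregate. A related difficulty is that the regression design in the inner problem is not fixed but varies over all rank-$k$ subspaces, so the Lewis-weight machinery must be applied uniformly over that family, which is what produces the $(\log(k/\eps\delta))^{O(p^{2})}$ overhead in the $p>2$ case. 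The remaining ingredients — the dimension reduction, the reweighting, boosting the success probability to $1-\delta$, and the final rescaling — are routine.
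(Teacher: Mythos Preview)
This theorem is not proved in the paper. It is stated as a black-box citation of Theorems~1.3 and~1.4 of~\cite{WY25} and is used without proof; the paper supplies no argument of its own for it (contrast this with \Cref{lem:coreset_p=2}, which \emph{is} proved in the paper). So there is no ``paper's own proof'' to compare your proposal against.

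Your sketch is a reasonable high-level outline of the sensitivity-sampling/chaining machinery that underlies results of this type, and it correctly identifies where the $p<2$ versus $p>2$ dichotomy and the sharp $\eps$-dependence come from. But since the paper only quotes the result, any detailed verification would have to be done against~\cite{WY25} itself, not against anything in this paper.
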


For many of the applications, we have $p=2$. For this case, the choice of the reduced matrix $\bm{B}$ that replaces $\bm{A}$ is simply the matrix of scaled left singular vectors of $\bm{A}$. More formally, 
\begin{lemma}
\label{lem:coreset_p=2}
When $p=2$, if $\bm{A}=\sumL_{i=1}^n \sigma_i p_iq_i^T$ be the singular value decomposition of $\bm{A}$ (where $\sigma_i$ is the $i^{\text{th}}$ largest singular value and $p_i\in \RR^d,q_i\in \RR^n$ are the left singular vector and right singular vector corresponding to $\sigma_i$), then $\bm{B}=\sumL_{i=1}^{r} \sigma_i p_iq_i^T$ satisfies \Cref{eqn:approximate_cost} for $r=k+k/\varepsilon$.
\end{lemma}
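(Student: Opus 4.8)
The plan is to identify $\bm{B}$ with the rank-$r$ truncated SVD $\bm{A}_r$ of $\bm{A}$, decompose $\bm{A}=\bm{B}+\bm{E}$ where $\bm{E}=\sum_{i>r}\sigma_i p_iq_i^\top$ is the tail, and take the additive constant in \Cref{eqn:approximate_cost} to be $c:=\|\bm{E}\|_F^2=\sum_{i>r}\sigma_i^2$, which is manifestly independent of $\bm{P}$. Since $p=2$, all norms in \Cref{eqn:approximate_cost} are Frobenius norms, so it suffices to prove $\|(\bm{I}-\bm{P})\bm{A}\|_F^2-c\le\|(\bm{I}-\bm{P})\bm{B}\|_F^2\le(1+\varepsilon)\|(\bm{I}-\bm{P})\bm{A}\|_F^2-c$ for every rank-$\le k$ orthogonal projection $\bm{P}$.

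First I would put the cost in trace form: using idempotence of $\bm{I}-\bm{P}$ and cyclicity of the trace, $\|(\bm{I}-\bm{P})\bm{A}\|_F^2=\operatorname{tr}\!\big((\bm{I}-\bm{P})\bm{A}\bm{A}^\top\big)$. The key structural observation is that $\bm{B}$ and $\bm{E}$ have orthogonal row spaces (the right singular vectors $q_i$ are orthonormal), so $\bm{B}\bm{E}^\top=\bm{E}\bm{B}^\top=0$ and hence $\bm{A}\bm{A}^\top=\bm{B}\bm{B}^\top+\bm{E}\bm{E}^\top$. This yields the exact identity
\[
\|(\bm{I}-\bm{P})\bm{B}\|_F^2 \;=\; \|(\bm{I}-\bm{P})\bm{A}\|_F^2 \;-\; c \;+\; \operatorname{tr}\!\big(\bm{P}\,\bm{E}\bm{E}^\top\big).
\]
The error term $\operatorname{tr}(\bm{P}\bm{E}\bm{E}^\top)$ is nonnegative, being the trace of a product of PSD matrices, which gives the lower inequality at once. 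For the upper inequality I would bound it using that $\bm{P}$ is an orthogonal projection of rank at most $k$: diagonalizing $\bm{E}\bm{E}^\top$, the quantity $\operatorname{tr}(\bm{P}\bm{E}\bm{E}^\top)$ is a sub-convex combination of the eigenvalues $\sigma_{r+1}^2\ge\sigma_{r+2}^2\ge\cdots$ with coefficients in $[0,1]$ summing to $\operatorname{tr}(\bm{P})\le k$, hence is at most $\sum_{i=r+1}^{r+k}\sigma_i^2\le k\,\sigma_{r+1}^2$.

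The only quantitative step left is $k\,\sigma_{r+1}^2\le\varepsilon\,\|(\bm{I}-\bm{P})\bm{A}\|_F^2$, which upgrades the bound $\|(\bm{I}-\bm{P})\bm{B}\|_F^2\le\|(\bm{I}-\bm{P})\bm{A}\|_F^2-c+k\sigma_{r+1}^2$ to the claimed one. Here $\bm{P}\bm{A}$ has rank at most $k$, so by Eckart--Young $\|(\bm{I}-\bm{P})\bm{A}\|_F^2\ge\|\bm{A}-\bm{A}_k\|_F^2=\sum_{i>k}\sigma_i^2$; and with $r=k+k/\varepsilon$ the block of indices $k<i\le r$ has size $k/\varepsilon$ with each $\sigma_i^2\ge\sigma_{r+1}^2$ there, so $\sum_{i>k}\sigma_i^2\ge(k/\varepsilon)\,\sigma_{r+1}^2$, i.e.\ $k\,\sigma_{r+1}^2\le\varepsilon\sum_{i>k}\sigma_i^2\le\varepsilon\|(\bm{I}-\bm{P})\bm{A}\|_F^2$, giving $\|(\bm{I}-\bm{P})\bm{B}\|_F^2\le(1+\varepsilon)\|(\bm{I}-\bm{P})\bm{A}\|_F^2-c$. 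I do not anticipate a real obstacle; the points to be careful about are the eigenvalue-majorization estimate for $\operatorname{tr}(\bm{P}\bm{E}\bm{E}^\top)$, using the \emph{same} constant $c$ in both inequalities, and taking $r=\lceil k+k/\varepsilon\rceil$ so that $r$ is an integer (the estimates only improve).
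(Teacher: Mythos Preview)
Your proposal is correct and follows essentially the same argument as the paper: both use the trace identity $\|(\bm{I}-\bm{P})\bm{A}\|_F^2=\langle \bm{A}\bm{A}^\top,\bm{I}-\bm{P}\rangle$, the fact that $\bm{A}\bm{A}^\top-\bm{B}\bm{B}^\top=\bm{E}\bm{E}^\top\succeq 0$, the Ky~Fan-type bound $\langle \bm{E}\bm{E}^\top,\bm{P}\rangle\le\sum_{i=r+1}^{r+k}\sigma_i^2\le k\sigma_{r+1}^2$, and the averaging step $k\sigma_{r+1}^2\le\varepsilon\|\bm{A}-\bm{A}_k\|_F^2$ combined with Eckart--Young. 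The only cosmetic difference is that the paper defines $c$ implicitly as $\max_{\operatorname{rank}(\bm{P})\le k}\big(\|(\bm{I}-\bm{P})\bm{A}\|_F^2-\|(\bm{I}-\bm{P})\bm{B}\|_F^2\big)$ and then bounds the oscillation of this quantity over pairs $\bm{P},\bm{P}'$, whereas you identify $c=\|\bm{E}\|_F^2$ explicitly from the outset (these two constants in fact coincide, since the minimum of $\langle\bm{E}\bm{E}^\top,\bm{P}\rangle$ is $0$ at $\bm{P}=0$); your route is slightly more direct but mathematically identical.
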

\begin{proof}
For any two arbitrary projection matrices $\bm{P}$ and $\bm{P}'$ of rank $\leq k$, consider the difference 
\begin{align}
&   \left( \|\bm{A}-\bm{P}\bm{A}\|_F^2-\|\bm{B}-\bm{P}\bm{B}\|_F^2\right)-\left(\|\bm{A}-\bm{P}'\bm{A}\|_F^2-\|\bm{B}-\bm{P}'\bm{B}\|_F^2\right) \\
& =\langle\bm{A}\bm{A}^T, \bm{I}-\bm{P}\rangle-\langle\bm{B}\bm{B}^T, \bm{I}-\bm{P}\rangle-\langle\bm{A}\bm{A}^T, \bm{I}-\bm{P}'\rangle+\langle\bm{B}\bm{B}^T, \bm{I}-\bm{P}'\rangle \\
& = \langle\bm{A}\bm{A}^T-\bm{B}\bm{B}^T, \bm{P}'\rangle -\langle\bm{A}\bm{A}^T-\bm{B}\bm{B}^T, \bm{P}\rangle \\
& \leq \langle\bm{A}\bm{A}^T-\bm{B}\bm{B}^T, \bm{P}'\rangle \tag{$\bm{A}\bm{A}^T-\bm{B}\bm{B}^T\succeq 0,\, \bm{P}\succeq 0$}\\
&\leq \sumL_{i=r+1}^{r+k} \sigma_i \tag{rank of $\bm{P}' \leq k$ }\\
            & \leq k\cdot \sigma_r \tag{$\sigma_r\geq \sigma_{r'},\, r'\geq r$}\\
            &\leq \frac{k}{r-k}\cdot \left(\sumL_{i=k+1}^r \sigma_i\right) \tag{$\sigma_r \leq \sigma_{r'},\, r'\leq r$}\\
            &\leq \frac{k}{r-k}\|\bm{A}-\bm{A}_k\|_F^2 =\varepsilon \|\bm{A}-\bm{A}_k\|_F^2. \tag{$\|\bm{A}-\bm{A}_k\|_F^2 =\sumL_{i=k+1}^d \sigma_i$}
\end{align}
If we let $c:=\max_{\textnormal{rank}(\bm{P})\leq k}\left( \|\bm{A}-\bm{P}\bm{A}\|_F^2-\|\bm{B}-\bm{P}\bm{B}\|_F^2\right)$, then we have  
\begin{align*}
     c-\varepsilon\|\bm{A}-\bm{A}_k\|_F^2 \leq \|\bm{A}-\bm{P}\bm{A}\|_F^2-\|\bm{B}-\bm{P}\bm{B}\|_F^2 \leq c
\end{align*}
for any projection matrix $\bm{P}$ of rank at most $k$. This can we re written as 
\begin{align}
\label{eqn:approximate_cost_2}
\|\bm{B}-\bm{P}\bm{B}\|_F^2 \in (0,\varepsilon)\cdot\|\bm{A}-\bm{A}_k\|_F^2+ \|\bm{A}-\bm{P}\bm{A}\|_F^2 -c.
\end{align}
Using the fact that $\|\bm{A}-\bm{A}_k\|_F^2\leq \|\bm{A}-\bm{P}\bm{A}\|_F^2$, we get 
\begin{align*}
\|\bm{B}-\bm{P}\bm{B}\|_F^2 \in (1,1+\varepsilon)\cdot \|\bm{A}-\bm{P}\bm{A}\|_F^2 -c.
\end{align*}
The fact that $c\geq 0$ follows from the fact that 
\begin{align}
    \|\bm{A}-\bm{P}\bm{A}\|_F^2-\|\bm{B}-\bm{P}\bm{B}\|_F^2&= \langle\bm{A}\bm{A}^T- \bm{B}\bm{B}^T, \bm{I}-\bm{P} \rangle\\
    &\geq 0. \tag{$\bm{A}\bm{A}^T -\bm{B}\bm{B}^T \succeq 0, \; \bm{I}-\bm{P}\succeq 0$}
\end{align}
\end{proof}

\begin{remark}
\label{rem:stronger_PCPS_p=2}
    Notice that when $p=2$, \Cref{lem:coreset_p=2} proves the condition in \Cref{eqn:approximate_cost_2}:
    \begin{align*}
\|\bm{B}-\bm{P}\bm{B}\|_F^2 \in (0,\varepsilon)\cdot\|\bm{A}-\bm{A}_k\|_F^2+ \|\bm{A}-\bm{P}\bm{A}\|_F^2 -c        
    \end{align*}
    which is stronger than the condition in \Cref{eqn:approximate_cost}.
\end{remark} 
\begin{lemma}
    \label{lem:approximate_soln}
    If $(\bm{A}, \cS)$ is an instance of \ref{prog:CSA} and $\bm{B}\in \RR^{d\times r}$ is a matrix that satisfies \Cref{eqn:approximate_cost}, and  
    \begin{align}
        \widehat{\bm{P}}:= \argmin_{\bm{P}\in \cS}\|\bm{B}-\bm{P}\bm{B}\|_{2,p}
^p ,\quad \bm{P}^*:= \argmin_{\bm{P}\in \cS}\|\bm{A}-\bm{P}\bm{A}\|_{2,p}
^p,
\end{align}
then $\widehat{\bm{P}}$ is an $(1+\varepsilon)$-approximate solution to the instance $(\bm{A}, \cS)$ i.e.,
\begin{align}
   \|\bm{A}-\widehat{\bm{P}}\bm{A}\|_{2,p}
^p\leq (1+\varepsilon)\|\bm{A}-\bm{P}^*\bm{A}\|_{2,p}
^p.
\end{align}
\begin{enumerate}
    \item More generally, if $\widehat{\bm{P}}$ is an approximate solution to $(\bm{B}, \cS)$ such that 
\begin{align*}
    \|\bm{B}-\widehat{\bm{P}}\bm{B}\|_{2,p}^p \leq \alpha \|\bm{B}-\bm{P}\bm{B}\|_{2,p}^p +\beta \quad \forall \bm{P} \in \cS,
\end{align*}
 for some $\alpha\geq 1,\, \beta\geq 0$, then we have 
\begin{align*}
    \|\bm{A}-\widehat{\bm{P}}\bm{A}\|_{2,p}
^p \leq \alpha(1+\varepsilon)\|\bm{A}-\bm{P}^*\bm{A}\|_{2,p}^p + \beta.
\end{align*}
\item For the specific case when $p=2$, if $\widehat{\bm{P}}$ is an exact solution to $(\bm{B}, \cS)$, then we have 
\begin{align*}
     \|\bm{A}-\widehat{\bm{P}}\bm{A}\|_F^2
 \leq \|\bm{A}-\bm{P}^*\bm{A}\|_F^2 + \varepsilon\|\bm{A}-\bm{A}_k\|_F^2.
\end{align*}
\end{enumerate}

\end{lemma}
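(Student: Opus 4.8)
The plan is to exploit the two-sided form of \Cref{eqn:approximate_cost} together with the fact that the additive offset $c$ is independent of the projection $\bm{P}$, so it cancels whenever we chain inequalities that pass through both minimizers $\widehat{\bm{P}}$ and $\bm{P}^*$. Note first that $\widehat{\bm{P}}\in\cS$, so the left-hand sides in the conclusions are feasible values and the statements are meaningful.

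For the base claim I would proceed in three steps. First, the lower side of \Cref{eqn:approximate_cost} applied to $\widehat{\bm{P}}$ gives $\|\bm{A}-\widehat{\bm{P}}\bm{A}\|_{2,p}^p \le \|\bm{B}-\widehat{\bm{P}}\bm{B}\|_{2,p}^p + c$. Second, since $\widehat{\bm{P}}$ minimizes the cost on $\bm{B}$ over $\cS$ and $\bm{P}^*\in\cS$, we have $\|\bm{B}-\widehat{\bm{P}}\bm{B}\|_{2,p}^p \le \|\bm{B}-\bm{P}^*\bm{B}\|_{2,p}^p$. Third, the upper side of \Cref{eqn:approximate_cost} applied to $\bm{P}^*$ gives $\|\bm{B}-\bm{P}^*\bm{B}\|_{2,p}^p \le (1+\varepsilon)\|\bm{A}-\bm{P}^*\bm{A}\|_{2,p}^p - c$. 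Concatenating the three, the $+c$ and $-c$ cancel and yield exactly $\|\bm{A}-\widehat{\bm{P}}\bm{A}\|_{2,p}^p \le (1+\varepsilon)\|\bm{A}-\bm{P}^*\bm{A}\|_{2,p}^p$.

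For part~1 I would run the same chain but replace the middle step by the approximate-optimality hypothesis $\|\bm{B}-\widehat{\bm{P}}\bm{B}\|_{2,p}^p \le \alpha\|\bm{B}-\bm{P}^*\bm{B}\|_{2,p}^p + \beta$. The offsets no longer cancel perfectly: one picks up an extra term $c - \alpha c = (1-\alpha)c$, which is $\le 0$ precisely because $\alpha \ge 1$ and $c \ge 0$; this is the one place the hypothesis $\alpha\ge1$ is used. Hence $\|\bm{A}-\widehat{\bm{P}}\bm{A}\|_{2,p}^p \le \alpha(1+\varepsilon)\|\bm{A}-\bm{P}^*\bm{A}\|_{2,p}^p + \beta$. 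For part~2, the only modification is to use the sharper $p=2$ guarantee from \Cref{eqn:approximate_cost_2} (recorded in \Cref{rem:stronger_PCPS_p=2}) in the third step: its upper side reads $\|\bm{B}-\bm{P}^*\bm{B}\|_F^2 \le \|\bm{A}-\bm{P}^*\bm{A}\|_F^2 + \varepsilon\|\bm{A}-\bm{A}_k\|_F^2 - c$, so after the $\pm c$ cancellation the multiplicative $(1+\varepsilon)$ is replaced by the additive $\varepsilon\|\bm{A}-\bm{A}_k\|_F^2$.

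There is essentially no obstacle here; the proof is a two-line chaining argument. The only thing one has to be careful about is the direction of each inequality in the two-sided relation \Cref{eqn:approximate_cost} and keeping the sign conventions for $c$ consistent throughout, and — in part~1 — observing that $\alpha\ge1$ is exactly what makes the leftover offset $(1-\alpha)c$ non-positive.
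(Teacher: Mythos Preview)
Your proposal is correct and essentially identical to the paper's proof: both chain the lower bound of \Cref{eqn:approximate_cost} at $\widehat{\bm{P}}$, the (approximate) optimality of $\widehat{\bm{P}}$ on $\bm{B}$, and the upper bound at $\bm{P}^*$, letting the $\pm c$ cancel (and using $\alpha\ge 1$, $c\ge 0$ to discard the $(1-\alpha)c$ leftover in part~1). For part~2 you, like the paper, invoke the sharper two-sided bound \Cref{eqn:approximate_cost_2} from \Cref{rem:stronger_PCPS_p=2} in place of \Cref{eqn:approximate_cost}.
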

\begin{proof}
\begin{enumerate}
    \item Using the approximate optimality of $\widehat{\bm{P}}$ for the instance $(\bm{B}, \cS)$, we have 
    \begin{align}
      \|\bm{B}-\widehat{\bm{P}}\bm{B}\|_{2,p}^p &\leq \alpha\|\bm{B}-\bm{P}^*\bm{B}\|_{2,p}^p+\beta.
      \intertext{Using the lower-bound and upper-bound from \Cref{eqn:approximate_cost} for the LHS and RHS, we get}
      \|\bm{A}-\widehat{\bm{P}}\bm{A}\|_{2,p}^p -c & \leq \alpha(1+\varepsilon)\|\bm{A}-\bm{P}^*\bm{A}\|_{2,p}^p-\alpha c+\beta.
\intertext{Since $\alpha \geq 1$ and $c\geq 0$, we get}
 \|\bm{A}-\widehat{\bm{P}}\bm{A}\|_{2,p}
^p &\leq \alpha(1+\varepsilon)\|\bm{A}-\bm{P}^*\bm{A}\|_{2,p}^p + \beta.
    \end{align}
    \item Using the optimality of $\widehat{\bm{P}}$ for the instance $(\bm{B}, \cS)$ for with $p=2$, we have 
    \begin{align}
        \|\bm{B}-\widehat{\bm{P}}\bm{B}\|_F^2 \leq \|\bm{B}-\bm{P}^*\bm{B}\|_F^2.
    \end{align}
    Using \Cref{rem:stronger_PCPS_p=2}, we know that $\|\bm{B}-\bm{P}\bm{B}\|_F^2 \in (0,\varepsilon)\cdot\|\bm{A}-\bm{A}_k\|_F^2+ \|\bm{A}-\bm{P}\bm{A}\|_F^2 -c$ for any rank $k$ projection matrix $\bm{P}$ for some $c\geq 0$ independent of $\bm{P}$ (see \Cref{eqn:approximate_cost_2}). Using this, we get 
    \begin{align*}
        \|\bm{A}-\widehat{\bm{P}}\bm{A}\|_F^2-c \leq  \|\bm{B}-\widehat{\bm{P}}\bm{B}\|_F^2 \leq \|\bm{B}-\bm{P}^*\bm{B}\|_F^2\leq \|\bm{A}-\bm{P}^*\bm{A}\|_F^2+\varepsilon\|\bm{A}-\bm{A}_k\|_F^2-c.
    \end{align*}
    Canceling out the $-c$ gives the inequality we claimed. 
\end{enumerate}
\end{proof}

\begin{lemma}[Lemma 4.1 in \cite{Numercal_LinearAlgebra_Woodruff}]
\label{lemma:subspace_error_lb}
If $n\times d$ matrix $\bm{A}$ has integer entries bounded in magnitude by $\gamma$, and has rank $\rho\geq k$, then the $k^{\textnormal{th}}$ singular value $\sigma_k$ of $\bm{A}$ has $|\log \sigma_k|=O(\log(nd\gamma))$ as $nd\rightarrow \infty$. This implies that $\|\bm{A}\|_F/\Delta_k\leq (nd\gamma)^{O(k/(\rho-k))}$ as $nd\rightarrow \infty$. Here $\Delta_k:= \|\bm{A}-\bm{A}_k\|_F$
\end{lemma}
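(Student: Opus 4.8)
The statement splits into two halves, and only one is substantial. The inequality $\log\sigma_k \le O(\log(nd\gamma))$ is immediate: $\sigma_k \le \sigma_1 = \norm{\bm A}_2 \le \norm{\bm A}_F \le \sqrt{nd}\,\gamma$, since every entry of $\bm A$ is an integer of magnitude at most $\gamma$. The real content is the matching \emph{lower} bound $\sigma_k \ge \sigma_\rho \ge (nd\gamma)^{-O(\rho)}$, and the plan is to get it by controlling the product of the nonzero squared singular values, $\Pi := \prod_{i=1}^{\rho}\sigma_i^2$.

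The key observation is that $\Pi$ is a positive integer. Indeed, $\Pi$ is the product of the nonzero eigenvalues of $\bm A^T\bm A$, and since $\bm A^T\bm A$ is a PSD matrix of rank exactly $\rho$, this product equals its $\rho$-th elementary symmetric function of eigenvalues, i.e.\ the sum of all $\rho\times\rho$ principal minors of $\bm A^T\bm A$. Expanding each principal minor by the Cauchy--Binet formula, $\det\!\left((\bm A^T\bm A)_{S,S}\right) = \det(\bm A_{:,S}^T\bm A_{:,S}) = \sum_{T}\det(\bm A_{T,S})^2$; hence $\Pi$ is the sum of the squares of all $\rho\times\rho$ minors of $\bm A$. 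Each such minor is the determinant of an integer matrix, hence an integer, and since $\operatorname{rank}(\bm A)=\rho$ at least one of them is nonzero, so $\Pi \ge 1$. Combining this with $\sigma_i \le \sigma_1 \le \sqrt{nd}\,\gamma$ for every $i$ gives $\sigma_\rho^2 \ge \Pi\big/\prod_{i=1}^{\rho-1}\sigma_i^2 \ge \sigma_1^{-2(\rho-1)} \ge (nd\gamma^2)^{-(\rho-1)}$, so $|\log\sigma_k| \le |\log\sigma_\rho| = O(\rho\log(nd\gamma))$, which is $O(\log(nd\gamma))$ as $nd\to\infty$ with $\rho$ fixed.

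For the corollary on $\norm{\bm A}_F/\Delta_k$, I would write $\norm{\bm A}_F^2 = \sum_{i=1}^{\rho}\sigma_i^2$ and $\Delta_k^2 = \sum_{i=k+1}^{\rho}\sigma_i^2$, so $\norm{\bm A}_F^2/\Delta_k^2 = 1 + \left(\sum_{i=1}^{k}\sigma_i^2\right)/\Delta_k^2 \le 1 + k\,\sigma_1^2/\sigma_{k+1}^2$. To lower-bound $\sigma_{k+1}$ I would again use $\Pi$: since $\sigma_{k+1}$ is the largest of $\sigma_{k+1},\dots,\sigma_\rho$, we have $\sigma_{k+1}^{2(\rho-k)} \ge \prod_{i=k+1}^{\rho}\sigma_i^2 = \Pi\big/\prod_{i=1}^{k}\sigma_i^2 \ge \sigma_1^{-2k}$, i.e.\ $\sigma_{k+1}^2 \ge \sigma_1^{-2k/(\rho-k)}$. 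Plugging this into the previous display and using $\sigma_1 \le \sqrt{nd}\,\gamma$ once more yields $\norm{\bm A}_F/\Delta_k \le \poly(k)\cdot(nd\gamma)^{O(k/(\rho-k))}$, which is the claimed bound after absorbing the $\poly(k)$ prefactor and the additive constant in the exponent into the $O(\cdot)$ (valid as $nd\to\infty$ with $k$ fixed).

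The only non-routine step is establishing $\Pi\ge1$: recognizing that the product of the nonzero squared singular values of an integer matrix of rank $\rho$ is a nonzero integer, being a sum of squared $\rho\times\rho$ integer minors. This is what forces $\sigma_k$ to be bounded below on the $(nd\gamma)^{-O(\rho)}$ scale; everything else is monotonicity of the $\sigma_i$, a Hadamard/Frobenius-type size bound on the $\sigma_i$, and elementary algebra. The one mild subtlety worth flagging is that the constants hidden in the $O(\cdot)$ depend on $\rho$ (equivalently $k$), which is exactly why the statement is phrased asymptotically in $nd$.
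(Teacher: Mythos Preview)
The paper does not give its own proof of this lemma: it is quoted verbatim as Lemma~4.1 of the cited monograph and used as a black box, so there is nothing in the present paper to compare against. That said, your argument is correct and is essentially the standard one. The crux---that $\Pi=\prod_{i=1}^{\rho}\sigma_i^2$ is a positive integer because it equals the sum of squared $\rho\times\rho$ minors of the integer matrix $\bm{A}$ (via the characteristic polynomial of $\bm{A}^T\bm{A}$ and Cauchy--Binet)---is exactly the mechanism behind the cited result, and your derivation of both $|\log\sigma_k|$ and the ratio $\|\bm{A}\|_F/\Delta_k$ from $\Pi\ge 1$ together with $\sigma_1\le\sqrt{nd}\,\gamma$ is clean.

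One small remark on the corollary: your chain gives $\|\bm{A}\|_F/\Delta_k \le \poly(k)\cdot\sigma_1^{\rho/(\rho-k)}$, i.e.\ exponent $\rho/(\rho-k)=1+k/(\rho-k)$ rather than $k/(\rho-k)$. This matches the stated bound only under the asymptotic reading you already flagged (fixed $k,\rho$ with $nd\to\infty$, so the additive $1$ in the exponent and the $\poly(k)$ prefactor are absorbed into the $O(\cdot)$); it is worth saying explicitly that the bound as written does \emph{not} hold uniformly in $\rho$ when $\rho\gg k$. For the paper's downstream uses (Theorems~\ref{thm:CSE-multiplicative} and~\ref{thm:NMF-multiplicative}) only the consequence $\|\bm{A}\|_F^2/\Delta_k^2\le (nd\gamma)^{O(k)}$ is needed, and this follows from your inequality since $\rho/(\rho-k)\le k+1$ for all $\rho>k$.
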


\section{Applications}

We present several applications to illustrate our framework.
\subsection{Constrained Subspace Estimation \cite{santamaria2017constrained}}\label{sec:subspace-estimation}

In constrained subspace estimation, we are given a collection  of target subspaces $T_1,T_2,\dots,T_m$ and a model subspace $W$. The goal is to find a subspace $V$ of dimension $k$ such that $\textnormal{dim}(V \cap W)\geq \ell$ that maximizes the average overlap between the subspace $V$ and $T_1,\dots,T_m$. More formally, the problem can be formulated as mathematical program:
\begin{align}
\label{prog:subspace_estimation_max}
    \max&: \langle \overline{\bm{P}}_T, \bm{P}_V\rangle \tag{CSE-max}\\
    &\textnormal{dim}(V)=k,\;\textnormal{dim}(V \cap W)\geq \ell, \\
    & \overline{\bm{P}}_T =\frac{1}{m}\sumL_{i=1}^m\bm{P}_{T
    _i},\\
    &\bm{P}_{T_i} \textnormal{ and } \bm{P}_V \textnormal{ are the projection matrices onto the subspaces } T_i \textnormal{ and } V \textnormal{ respectively.}
\end{align}
Let us assume that the constraint $\textnormal{dim}(V \cap W)\geq \ell$ is actually an exact constraint $\textnormal{dim}(V \cap W)= \ell$ because we can solve for $k-\ell+1$ different cases $\textnormal{dim}(V \cap W)= i$ for each $\ell \leq i\leq k$. Since $\overline{\bm{P}}_T$ is a PSD matrix, let it be $\bm{A}\bm{A}^T$ for some $\bm{A}\in \RR^{d\times d}$. Changing the optimization problem from a maximization problem to a minimization problem, we get
\begin{align}
\label{prog:subspace_estimation_min}
    \min&: \langle \bm{A}\bm{A}^T, \bm{I}-\bm{P}_V\rangle = \|\bm{A}-\bm{P}_{V}\bm{A}\|_F^2 \tag{CSE-min}\\
    &\bm{P}_V \textnormal{ is the projection matrix onto } V\\
     &\textnormal{dim}(V)=k,\; \textnormal{dim}(V \cap W)= \ell.
\end{align}

\begin{lemma}
   The \ref{prog:subspace_estimation_min} problem is a special case of \ref{prog:CSA}.  
\end{lemma}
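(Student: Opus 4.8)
The plan is to exhibit an explicit choice of data matrix and feasible collection $\cS$ that turns \ref{prog:subspace_estimation_min} verbatim into an instance of \ref{prog:CSA} with $p=2$. First I would record the elementary identity that, for any rank-$k$ orthogonal projection $\bm{P}$ and any $\bm{A}\in\RR^{d\times d}$,
\[
\|\bm{A}-\bm{P}\bm{A}\|_F^2
= \operatorname{tr}\!\big(\bm{A}^T(\bm{I}-\bm{P})^2\bm{A}\big)
= \operatorname{tr}\!\big((\bm{I}-\bm{P})\bm{A}\bm{A}^T\big)
= \langle \bm{A}\bm{A}^T,\, \bm{I}-\bm{P}\rangle,
\]
where we used that $\bm{I}-\bm{P}$ is symmetric and idempotent and the cyclic property of the trace. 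I would also note that $\|\bm{M}\|_{2,2}^2 = \sum_i \|\bm{M}_{\cdot,i}\|_2^2 = \|\bm{M}\|_F^2$, so the $p=2$ instance of the $(2,p)$-norm objective in \ref{prog:CSA} coincides with the Frobenius objective appearing in \ref{prog:subspace_estimation_min} once we write the PSD matrix $\overline{\bm{P}}_T$ as $\bm{A}\bm{A}^T$.

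Next I would specify the feasible collection. Taking $\bm{A}$ to be any matrix with $\bm{A}\bm{A}^T=\overline{\bm{P}}_T$ (which exists since $\overline{\bm{P}}_T\succeq 0$), I set
\[
\cS \;:=\; \big\{\, \bm{P}_V \;:\; V\subseteq\RR^d \text{ a subspace with } \dim(V)=k \text{ and } \dim(V\cap W)=\ell \,\big\},
\]
where $\bm{P}_V$ is the orthogonal projection onto $V$. Every element of $\cS$ is a rank-$k$ projection matrix, so $(\bm{A},\cS)$ is a legitimate instance of \ref{prog:CSA}; the collection is described implicitly, exactly as permitted by the framework. The key observation is that an orthogonal projection $\bm{P}$ is in bijection with its column span $V=\textnormal{ColumnSpan}(\bm{P})$, so the constraints ``$\dim(V)=k$'' and ``$\dim(V\cap W)=\ell$'' of \ref{prog:subspace_estimation_min} are precisely the membership condition $\bm{P}\in\cS$. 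Combining this with the identity from the first paragraph, minimizing $\|\bm{A}-\bm{P}\bm{A}\|_{2,2}^2$ over $\bm{P}\in\cS$ is literally \ref{prog:subspace_estimation_min}: same feasible set, same objective, hence same optimal value and optimizers.

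I do not expect a genuine obstacle; the content is bookkeeping. The only points deserving a sentence of care are: (i) the reduction from the inequality constraint $\dim(V\cap W)\ge\ell$ to the equality constraint $\dim(V\cap W)=\ell$, which is already justified in the preceding text by solving the $k-\ell+1$ cases $i=\ell,\dots,k$ and returning the best, and (ii) noting that $\cS$ is nonempty precisely when $\ell\le\dim(W)$ and $k-\ell\le d-\dim(W)$ (so that a $k$-dimensional $V$ with $\dim(V\cap W)=\ell$ exists); any value of $i$ in the enumeration for which the corresponding $\cS$ is empty simply contributes nothing. With these remarks the lemma follows.
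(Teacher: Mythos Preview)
Your proposal is correct and follows essentially the same approach as the paper: the paper's proof is a one-sentence statement that setting $p=2$ and $\cS=\{\bm{P}_V:\dim V=k,\ \dim(V\cap W)=\ell\}$ in \ref{prog:CSA} yields \ref{prog:subspace_estimation_min}, and you have simply supplied the supporting details (the Frobenius/trace identity, the bijection between projections and subspaces, and the nonemptiness caveat) that the paper leaves implicit.
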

\begin{proof}
 Setting $p=2$ and $\cS$ as the set of $k$ dimensional projection matrices $P_V$ such that $\textnormal{dim}(V \cap W)=\ell$ in \ref{prog:CSA} gives \ref{prog:subspace_estimation_min}.
\end{proof}

Let $\bm{B}\in \RR^{d\times r},\; r=k+k/\varepsilon$ be the reduced matrix obtained as in \Cref{lem:coreset_p=2}. Using \Cref{lem:approximate_soln}, it is sufficient to focus on the reduced instance with $\bm{A}$ replaced instead of $\bm{B}$. 

Any subspace $V$ such that $\textnormal{dim}(V)=k,\; \textnormal{dim}(V \cap W)= \ell$ can be represented equivalently as 
\begin{align*}
    V&= \textnormal{Span}(u_1,u_2,\dots,u_\ell, v_1,v_2,\dots,v_{k-\ell})\\
    &u_i \in W, \;v_j \in W^{\perp} \quad \forall i\in [\ell],\; j\in [k-\ell]. 
\end{align*}
Using these observations and \Cref{lem:csa-alt-equiv}, we can focus on the following subspace estimation program
\begin{align}
    \min&: \|\bm{B}-\bm{U}\bm{C}\|_F^2\\
\label{eqn:u_ortho-CSE}    &\bm{U} \textnormal{ is a orthogonal basis for } \textnormal{Span}(u_1,\dots,u_{\ell}, v_1,\dots,v_{k-\ell})\\
     &u_i \in W, \;v_j \in W^{\perp} \quad \forall i\in [\ell],\; j\in [k-\ell].
\intertext{
Since $\bm{C}$ is unconstrained, we can replace the condition in \Cref{eqn:u_ortho-CSE} with the much simpler condition $\bm{U}=[u_1,\dots,u_{\ell}, v_1,\dots,v_{\ell}]$. This gives 
}
\label{prog:subspace_estimation_min_simple}
    \min&: \|\bm{B}-\bm{U}\bm{C}\|_F^2 \tag{CSE-min-reduced}\\
   &\bm{U} =[u_1,\dots,u_{\ell}, v_1,\dots,v_{\ell}] \\
     &u_i \in W, \;v_j \in W^{\perp} \quad \forall i\in [\ell],\; j\in [k-\ell].
\end{align}

\begin{lemma}
    \label{lem:CSE-regression}
    For any fixed $\bm{B}\in \RR^{d\times r}$ and $\bm{C}\in \RR^{k \times r}$, the \Cref{prog:subspace_estimation_min_simple} can be solved exactly in $\poly(n)$ time.
\end{lemma}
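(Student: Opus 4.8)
The plan is to observe that, once $\bm B$ and $\bm C$ are fixed, \Cref{prog:subspace_estimation_min_simple} is a linear least-squares problem in disguise: the residual $\bm B-\bm U\bm C$ depends linearly on the entries of $\bm U$, and the feasible set $\{\bm U=[u_1,\dots,u_\ell,v_1,\dots,v_{k-\ell}] : u_i\in W,\ v_j\in W^\perp\}$ is a linear subspace of $\RR^{d\times k}$. Thus the objective is a convex quadratic restricted to a linear subspace, and such a problem always has a closed-form minimizer obtained by solving a single linear system.

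Concretely, I would first compute, by exact Gaussian elimination on the given description of $W$, a matrix $\bm Q_W\in\RR^{d\times w}$ whose columns form a basis of $W$ (with $w=\dim W$) and a matrix $\bm Q_\perp\in\RR^{d\times(d-w)}$ whose columns form a basis of $W^\perp$ (the null space of $\bm Q_W^{\,T}$); this takes $\poly(d)$ time. The constraint $u_i\in W$ is then equivalent to $u_i=\bm Q_W\alpha_i$ for a free vector $\alpha_i\in\RR^{w}$, and $v_j\in W^\perp$ to $v_j=\bm Q_\perp\beta_j$ for free $\beta_j\in\RR^{d-w}$, and this substitution maps $\RR^{\ell w+(k-\ell)(d-w)}$ bijectively onto the feasible set. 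Writing $\bm C^{(1)}$ for the first $\ell$ rows of $\bm C$, $\bm C^{(2)}$ for the last $k-\ell$ rows, $\bm A_\alpha=[\alpha_1\mid\cdots\mid\alpha_\ell]$ and $\bm A_\beta=[\beta_1\mid\cdots\mid\beta_{k-\ell}]$, the objective becomes
\[
\bigl\|\,\bm B-\bm Q_W\bm A_\alpha\bm C^{(1)}-\bm Q_\perp\bm A_\beta\bm C^{(2)}\,\bigr\|_F^2 ,
\]
which is linear in $(\bm A_\alpha,\bm A_\beta)$. Stacking the entries of $\bm A_\alpha$ and $\bm A_\beta$ into one vector $\bm z$ turns this into a standard least-squares instance $\min_{\bm z}\|\,\mathrm{vec}(\bm B)-\bm M\bm z\,\|_2^2$ with an explicitly computable $\bm M$; I would solve it exactly via the normal equations (equivalently, by \Cref{lem:regression-time} applied with $\bm M$ in place of the design matrix) and read off the optimal $\bm U$ from the minimizing $\bm z$.

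Here $\bm M$ has $dr$ rows and at most $kd$ columns, and $r=k+k/\eps$, so forming and solving the associated linear system costs time polynomial in $d$, $k$ and $1/\eps$, i.e., $\poly(n)$. I do not anticipate a genuine obstacle; the one point that warrants care is the word \emph{exactly} in the statement, which is why I use plain rational bases for $W$ and $W^\perp$ (rather than orthonormalizing, which would introduce square roots): all intermediate matrices and the least-squares solution then stay rational with polynomially bounded bit-length, so no precision loss is incurred beyond what is already present in $\bm B$. The real content of the lemma is simply the structural fact that fixing the coefficient matrix $\bm C$ linearizes the residual, so the non-convex constrained subspace problem degenerates into linear least squares.
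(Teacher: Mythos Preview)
Your proposal is correct and follows essentially the same approach as the paper: both observe that for fixed $\bm B,\bm C$ the objective is convex quadratic in $\bm U$ while the constraints $u_i\in W,\ v_j\in W^\perp$ are linear, so the problem is a linearly constrained convex quadratic (least-squares) solvable in polynomial time. You simply flesh out the implementation details (explicit parametrization via bases of $W,W^\perp$ and normal equations) that the paper's one-line proof leaves implicit.
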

\begin{proof}
For fixed $\bm{B}$ and $\bm{C}$, the objective is convex quadratic in $\bm{U}$ and the constraints are linear on $\bm{U}$. Linear constrained convex quadratic program can be efficiently solved. 
\end{proof}

\begin{cor}[Additive approximation for CSE]\label{cor:CSE-additive}
    Using \Cref{lem:structural-CSA}, we can get a subspace $V$ such that $\textnormal{dim}(V)=k,\; \textnormal{dim}(V \cap W)= \ell$ and 
    \begin{align*}
       \|\bm{A}-\bm{P}_V\bm{A}\|_F^2 \leq (1+\varepsilon)\opt + O(\delta \|\bm{A}\|_F^2)
   \end{align*}
    for any choice of $0<\delta<1$ in time $\poly(n)\cdot (1/\delta)^{O(k^2/\varepsilon)}$.  
\end{cor}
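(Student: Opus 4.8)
The plan is to obtain \Cref{cor:CSE-additive} as a direct instantiation of the generic framework in \Cref{lem:structural-CSA}, specialized to the instance $(\bm{A},\cS)$ with $p=2$ and $\cS$ the collection of rank-$k$ projections $\bm{P}_V$ with $\dim(V\cap W)=\ell$, which we have already observed is a special case of \ref{prog:CSA}. Concretely, I need to supply three ingredients to \Cref{lem:structural-CSA}: (i) a reduced instance $(\bm{B},\cS)$ together with its construction time $T_s$ and column count $r$; (ii) an algorithm and running time $T_r$ for the constrained regression \eqref{eqn:constrained_regression} associated with this $\cS$; and (iii) the value of the additive term $\Delta$ when $p=2$.

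For (i), I take $\bm{B}=\sum_{i=1}^{r}\sigma_i p_iq_i^T$ with $r=k+k/\varepsilon$ as in \Cref{lem:coreset_p=2}; this satisfies \eqref{eqn:approximate_cost} (indeed the stronger \eqref{eqn:approximate_cost_2}) with some $c\ge 0$, and is obtained from a single truncated SVD of $\bm{A}$, so $T_s=\poly(n)$ by \Cref{lem:svd-time} (with $\poly(n)$ absorbing the ambient-dimension and bit-complexity factors, since the entries are polynomially bounded). For (ii), once a guess $\bm{C}\in\RR^{k\times r}$ is fixed, the constrained regression over $\bm{U}$ with $\bm{U}\bm{U}^T\in\cS$ is exactly the program \ref{prog:subspace_estimation_min_simple}: because $\bm{C}$ plays the role of the free factor $\bm{H}$, only the column span of $\bm{U}$ matters, so the ``orthogonal-basis for $\mathrm{Span}(u_1,\dots,u_\ell,v_1,\dots,v_{k-\ell})$'' description can be relaxed to the linear constraints $\bm{U}=[u_1,\dots,u_\ell,v_1,\dots,v_{k-\ell}]$ with $u_i\in W$ and $v_j\in W^\perp$. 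By \Cref{lem:CSE-regression} this is a linearly-constrained convex quadratic program and is solved exactly in $T_r=\poly(n)$ time. For (iii), with $p=2$ we get $\Delta=(1+\varepsilon)\|\bm{A}\|_{2,2}^2\big((1+\delta)^2-1\big)=(1+\varepsilon)\|\bm{A}\|_F^2(2\delta+\delta^2)=O(\delta\|\bm{A}\|_F^2)$ for $\delta\in(0,1)$.

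Substituting these into \Cref{lem:structural-CSA}: the $\delta$-net guesses each of the $r$ columns of $\bm{C}$ inside a $k$-dimensional ball, giving $O((1/\delta)^{kr})$ candidates, and since $kr=k(k+k/\varepsilon)=O(k^2/\varepsilon)$, the total running time is $T_s+T_r\cdot O((1/\delta)^{kr})=\poly(n)\cdot(1/\delta)^{O(k^2/\varepsilon)}$. The conclusion of \Cref{lem:structural-CSA}, which already performs the transfer from $\bm{B}$ back to $\bm{A}$ via \eqref{eqn:approximate_cost} (cf.\ \Cref{lem:approximate_soln}), then gives $\|\bm{A}-\bm{P}_V\bm{A}\|_F^2\le(1+\varepsilon)\opt+\Delta=(1+\varepsilon)\opt+O(\delta\|\bm{A}\|_F^2)$, which is the claimed bound.

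The one point needing care — and the step I expect to be the main obstacle — is reconciling the relaxed regression solved by \Cref{lem:CSE-regression} with the hypothesis of \Cref{lem:structural-CSA} that the returned matrix is an honest rank-$k$ projection lying in $\cS$: the minimizer $\widehat{\bm{U}}$ of \ref{prog:subspace_estimation_min_simple} need not have orthonormal, or even linearly independent, columns, and its span may fail to have dimension exactly $k$ or to meet $W$ in exactly $\ell$ dimensions, so $\widehat{\bm{U}}\widehat{\bm{U}}^T$ need not be a valid projection. I would resolve this by returning, instead of $\widehat{\bm{U}}\widehat{\bm{U}}^T$, the orthogonal projection $\bm{P}_{V'}$ onto a subspace $V'\supseteq\mathrm{ColumnSpan}(\widehat{\bm{U}})$ obtained by orthonormalizing the $W$-part and the $W^\perp$-part of $\widehat{\bm{U}}$ separately and then appending generic directions from $W$ (to make the intersection with $W$ exactly $\ell$-dimensional) and from $W^\perp$ (to reach total dimension $k$). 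Enlarging a subspace only decreases the Frobenius projection cost, so $\|\bm{B}-\bm{P}_{V'}\bm{B}\|_F^2\le\|\bm{B}-\widehat{\bm{U}}\widehat{\bm{C}}\|_F^2$, which is precisely the inequality the framework analysis relies on; and since the $\ell$ completed $u$-directions span a subspace of $W$ while all $v$-directions lie in $W^\perp$, the enlarged $V'$ satisfies $\dim V'=k$ and $\dim(V'\cap W)=\ell$ exactly, restoring feasibility with no loss in objective.
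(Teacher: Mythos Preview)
Your proposal is correct and follows exactly the route the paper intends: the corollary is stated without proof because it is meant as an immediate instantiation of \Cref{lem:structural-CSA} with the coreset from \Cref{lem:coreset_p=2} and the regression solver from \Cref{lem:CSE-regression}, and you have filled in precisely those details with the right parameter accounting ($r=k+k/\varepsilon$, $kr=O(k^2/\varepsilon)$, $\Delta=O(\delta\|\bm{A}\|_F^2)$ for $p=2$).

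Your final paragraph raises a point the paper glosses over. The paper relaxes the orthonormality constraint on $\bm{U}$ when passing from \eqref{eqn:u_ortho-CSE} to \ref{prog:subspace_estimation_min_simple}, justified by ``$\bm{C}$ is unconstrained''; but in the framework of \Cref{lem:structural-CSA} the coefficient matrix $\bm{C}$ is \emph{fixed} by the guess, so the relaxed minimizer $\widehat{\bm{U}}$ need not have orthonormal (or even independent) columns, and $\widehat{\bm{U}}\widehat{\bm{U}}^T$ need not lie in $\cS$. Your repair---project onto the span of $\widehat{\bm{U}}$ and, if necessary, enlarge that span by adding directions from $W$ and $W^\perp$ until $\dim V'=k$ and $\dim(V'\cap W)=\ell$---is the right one: because the $u$-columns already lie in $W$ and the $v$-columns in $W^\perp$, the padded subspace has the required intersection structure, and enlarging a subspace can only decrease the Frobenius residual, preserving the chain of inequalities in the proof of \Cref{lem:structural-CSA}. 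This is a genuine (if minor) gap in the paper's presentation that your argument closes.
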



\Cref{lemma:subspace_error_lb} gives a lower bound for $\opt$ when the entries of the input matrix $\bm{A}$ are integers bounded in magnitude by $\gamma$.

\begin{theorem}[Multiplicative approximation for CSE]\label{thm:CSE-multiplicative}
    Given an instance $(\bm{A}\in \RR^{d\times n}, k, W)$ of constrained subspace estimation with integer entries of absolute value at most $\gamma$ in $\bm{A}$, there is an algorithm that obtains a subspace $V$ such that $\textnormal{dim}(V)=k,\; \textnormal{dim}(V \cap W)= \ell$ and 
    \begin{align*}
        \|\bm{A}-\bm{P}_V\bm{A}\|_F^2 \leq (1+\varepsilon)\opt
    \end{align*}
 in $ O(nd\gamma/\varepsilon)^{O(k^3/\varepsilon)}$ time. 
\end{theorem}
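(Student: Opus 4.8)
The plan is to invoke \Cref{cor:CSE-additive} and then choose its granularity parameter $\delta$ small enough that the additive slack $O(\delta\|\bm{A}\|_F^2)$ is absorbed into $\varepsilon\cdot\opt$, turning the additive guarantee into a multiplicative one. Concretely, \Cref{cor:CSE-additive} already provides, for every $0<\delta<1$, a feasible subspace $V$ (with $\textnormal{dim}(V)=k$ and $\textnormal{dim}(V\cap W)=\ell$) satisfying $\|\bm{A}-\bm{P}_V\bm{A}\|_F^2\le(1+\varepsilon)\opt+O(\delta\|\bm{A}\|_F^2)$ in time $\poly(n)\cdot(1/\delta)^{O(k^2/\varepsilon)}$; so it only remains to pin down the right $\delta$ and bound the resulting running time.

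The one new ingredient is a lower bound on $\opt$ relative to $\|\bm{A}\|_F^2$. I would argue that every feasible $\bm{P}_V$ is a rank-$k$ orthogonal projection, so $\opt\ge\min_{\textnormal{rank}(\bm{P})\le k}\|\bm{A}-\bm{P}\bm{A}\|_F^2=\|\bm{A}-\bm{A}_k\|_F^2=:\Delta_k^2$. Since $\bm{A}$ has integer entries of magnitude at most $\gamma$, \Cref{lemma:subspace_error_lb} (with $\rho=\textnormal{rank}(\bm{A})\ge k+1$) gives $\|\bm{A}\|_F/\Delta_k\le(nd\gamma)^{O(k/(\rho-k))}\le(nd\gamma)^{O(k)}$, hence $\|\bm{A}\|_F^2\le(nd\gamma)^{O(k)}\cdot\opt$. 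Taking $\delta=\varepsilon\cdot(nd\gamma)^{-\Theta(k)}$ then makes the additive term at most $\varepsilon\cdot\opt$, so the output satisfies $\|\bm{A}-\bm{P}_V\bm{A}\|_F^2\le(1+2\varepsilon)\opt$; rescaling $\varepsilon$ by a constant finishes the approximation claim. The degenerate case $\textnormal{rank}(\bm{A})\le k$ (where $\Delta_k=0$) should be dispatched separately: there $\textnormal{ColumnSpan}(\bm{A})$ already lies in a $k$-dimensional subspace, so either it extends to a feasible $V$ of cost $0$, or $\opt>0$ is forced and a bit-complexity bound in terms of $\sigma_{\min}(\bm{A})$ lower-bounds $\opt$ instead.

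For the running time, with $\delta=\varepsilon(nd\gamma)^{-\Theta(k)}$ we have $\log(1/\delta)=O(k\log(nd\gamma/\varepsilon))$, so $(1/\delta)^{O(k^2/\varepsilon)}=\exp\big(O(k^3/\varepsilon)\cdot\log(nd\gamma/\varepsilon)\big)=(nd\gamma/\varepsilon)^{O(k^3/\varepsilon)}$; the $\poly(n)$ prefactor (dominated by the SVD of \Cref{lem:svd-time} and the convex quadratic solves of \Cref{lem:CSE-regression}) and the $k-\ell+1$ branches over $\textnormal{dim}(V\cap W)=i$ are absorbed into the base of the exponent, yielding total time $O(nd\gamma/\varepsilon)^{O(k^3/\varepsilon)}$ as claimed.

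I expect the main obstacle to be exactly the $\opt$ lower bound: one must notice that even with the intersection constraint the optimum stays above the unconstrained tail $\Delta_k$, so that the bit-complexity estimate of \Cref{lemma:subspace_error_lb} can be applied, and it is here that the integrality of $\bm{A}$ is indispensable — absent a $\poly(nd\gamma)$ separation between $\|\bm{A}\|_F$ and $\Delta_k$, no net fine enough to convert the additive bound of \Cref{cor:CSE-additive} into a multiplicative one can be afforded. Everything else is routine once \Cref{cor:CSE-additive} is in hand.
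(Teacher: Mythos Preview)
Your proposal is correct and follows essentially the same route as the paper: invoke \Cref{cor:CSE-additive}, use \Cref{lemma:subspace_error_lb} to bound $\|\bm{A}\|_F^2/\|\bm{A}-\bm{A}_k\|_F^2\le (nd\gamma)^{O(k)}$, and set $\delta=\varepsilon(nd\gamma)^{-\Theta(k)}$ so the additive term becomes $\varepsilon\cdot\opt$. The paper's proof is in fact terser than yours and omits the degenerate rank-$\le k$ case and the explicit observation $\opt\ge\Delta_k^2$ that you spell out.
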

\begin{proof}
Using \Cref{lemma:subspace_error_lb}, we know that $\|\bm{A}\|_F^2/\|\bm{A}-\bm{A}_k\|_F^2\leq (nd\gamma)^{O(k)}$. Setting $\delta= \varepsilon\|\bm{A}-\bm{A}_k\|_F^2/\|\bm{A}\|_F^2\geq \varepsilon (nd\gamma)^{-O(k)}$ in \Cref{cor:CSE-additive} gives the desired time complexity.  
\end{proof}

\subsection{Partition Constrained $\ell_p$-Subspace Approximation}\label{sec:subspace-approximation-main}
We now consider the \fairSAp{p} problem, which generalizes the subspace approximation and subspace estimation problems. 

\begin{defn}[Partition Constrained $\ell_p$-Subspace Approximation]\label{def:constrained-subspace-selection}

In the \fairSAp{p} problem, we are given a set of target vectors $\{a_1,a_2,\dots,a_n\}\subseteq \RR^d$ as columns of a matrix $\bm{A}\in \RR^{d\times n}$, a set of $\ell$ subspaces $S_1,\dots,S_\ell \subseteq \RR^d$, and a sequence of capacity constraints $k_1, \cdots, k_\ell$ where $k_1 + \cdots + k_\ell = k$. The goal is to select $k$ vectors in total, $k_i$ from subspace $S_i$, such that their span captures as much of $\bm{A}$ as possible. Formally, the goal is to select vectors $\{v_{i,t_i}\}_{i\le \ell, t_i\le k_i}$, such that for every $i\le \ell$, $v_{i,1},\dots, v_{i,k_i} \in S_i$, so as to minimize $ \sum_{i\in [n]}\|\proj^{\bot}_{\textnormal{span}(\{v_{i,t_i}\}_{i\le \ell, t_i\le k_i})}(a_i)\|_2^p $.  
\end{defn}

Our results will give algorithms with running times exponential in $\poly(k)$ for \fairSAp{}. Given this goal, we can focus on the setting where $k_i=1$, since we can replace each $S_i$ in the original formulation with $k_i$ copies of $S_i$, with a budget of $1$ for each copy.

\subparagraph*{\fairSAp{} with Unit Capacity.}
Given a set of vectors $\{a_1,a_2,\dots,a_n\}\subseteq \RR^d$ as columns of a matrix $\bm{A}\in \RR^{d\times n}$ and subspaces $S_1,\dots,S_k \subseteq \RR^d$, select a vector $v_i \in S_i$ for $i\in [k]$ in order to minimize $\sum_{i\in [n]}\|\proj^{\perp}_{\textnormal{span}(v_1,\dots,v_k)}(a_i)\|_2^p$, where $p\ge 1$ is a given parameter. A more compact formulation is 
\begin{align}
  \label{prog:fairSAp-geo}  \min&: \sumL_{i=1}^n \|a_i-\widehat{a}_i\|_2^p  \tag{PC-$\ell_p$-SA-geo}\\
  &\widehat{a}_i \in \textnormal{Span}(v_1,\dots,v_k) \quad \forall i \in [n]\\
  & v_{j} \in S_{j} \quad \forall j \in [k].
\intertext{
Using \Cref{lem:csa-alt-equiv}, the two other equivalent formulations are  }
    \label{prog:fairSAp} \min&: \|\bm{A}-\bm{U}\bm{U}^T\bm{A}\|_{2,p}^p \tag{PC-$\ell_p$-SA}\\
\label{eqn:u_ortho-PCSA}    &\bm{U} \textnormal{ is an orthogonal basis for Span}(v_1,v_2,\dots,v_k) \\ 
    &v_{i}\in S_{i} \quad \forall i \in [k].\\
\label{prog:fairSAp-fac}    \min&: \|\bm{A}-\bm{V}\bm{C}\|_{2,p}^p \tag{PC-$\ell_p$-SA-fac}\\
   &\bm{V} =[v_1,\dots,v_{k}] \\
     &v_{i}\in S_i \quad \forall i \in [k].
\end{align}

In what follows, we thus focus on the unit capacity version. We can use our general framework to derive an additive error approximation, for any $p$. 
\newcommand{\bfB}{\bm{B}}

\subsubsection{Additive Error Approximation}\label{app:subspace-approximation-additive}

\begin{theorem}\label{thm:additive-subspace-approx}
There exists an algorithm for \fairSAp{p} with runtime $(\kappa/\varepsilon)^{\poly(k/\varepsilon)}\cdot \poly(n)$ which returns a solution with additive error at most $O(\varepsilon p) \cdot \|A\|_{p,2}^p$, where $\kappa$ is the condition number of an optimal solution $\bm{V}^*=\left [v_1^*,v_2^*,\dots,v_k^*\right]$ for the \fairSA{}problem \ref{prog:fairSAp-fac}.  
\end{theorem}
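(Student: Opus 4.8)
The plan is to run the coreset--guess--solve framework of \Cref{lem:structural-CSA}, but applied to the \emph{factorization} formulation \ref{prog:fairSAp-fac} rather than the projection formulation \ref{prog:fairSAp}, because this is the reformulation in which the inner ``solve'' step becomes a convex program. First I would invoke \Cref{thm:coresets-lp} (for $p\neq 2$) or \Cref{lem:coreset_p=2} (for $p=2$) to replace $\bm{A}$ by a coreset $\bm{B}\in\RR^{d\times r}$ with $r=\poly(k/\varepsilon)$ columns satisfying \Cref{eqn:approximate_cost} with some additive slack $c\ge 0$. By \Cref{lem:approximate_soln,lem:structural-CSA} it then suffices to produce a near-optimal solution for the reduced instance $(\bm{B},\cS)$ and translate it back to $\bm{A}$ through \Cref{eqn:approximate_cost} at the end, losing only a $(1+\varepsilon)$ factor and the advertised additive term.

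The guessing step is where $\kappa$ enters, and setting it up correctly is the key point. Let $\bm{V}^{*}=[v_1^{*},\dots,v_k^{*}]$ be an optimal solution of \ref{prog:fairSAp-fac}, which I may assume has full column rank (else $\kappa=\infty$ and nothing is claimed). Since each $S_i$ is a subspace and only $\mathrm{span}(v_1^{*},\dots,v_k^{*})$ affects the objective, I may rescale \emph{all} columns by a common scalar so that $\sigma_{\max}(\bm{V}^{*})=1$; condition number is scale-invariant, so now $\sigma_{\min}(\bm{V}^{*})=1/\kappa$ and $\|(\bm{V}^{*})^{+}\|_{\mathrm{op}}=\kappa$. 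Let $\bm{P}^{*}=\bm{V}^{*}(\bm{V}^{*})^{+}$ be the orthogonal projection onto $\mathrm{span}(\bm{V}^{*})$, so that $\bm{P}^{*}\bm{B}=\bm{V}^{*}\widetilde{\bm{C}}$ with $\widetilde{\bm{C}}:=(\bm{V}^{*})^{+}\bm{B}\in\RR^{k\times r}$ and $\|\widetilde{\bm{C}}_{.,i}\|_2\le\kappa\|b_i\|_2$. Hence I would net, for each $i\in[r]$, a $(\varepsilon\|b_i\|_2)$-net of the radius-$\kappa\|b_i\|_2$ ball in $\RR^{k}$, giving $(\kappa/\varepsilon)^{O(k)}$ choices per column and $(\kappa/\varepsilon)^{O(kr)}$ candidate matrices $\bm{C}$ in total; one of them, $\overline{\bm{C}}$, satisfies $\|\overline{\bm{C}}_{.,i}-\widetilde{\bm{C}}_{.,i}\|_2\le\varepsilon\|b_i\|_2$ for all $i$. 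The crucial subtlety is that the right object to approximate is $\widetilde{\bm{C}}=(\bm{V}^{*})^{+}\bm{B}$, not $(\bm{U}^{*})^{T}\bm{B}$ as in the generic framework, and this is exactly why the coefficient radius carries the factor $\kappa$.

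For each enumerated $\bm{C}$, the ``solve'' step is $\min\{\,\|\bm{B}-\bm{V}\bm{C}\|_{2,p}^{p}\ :\ v_i\in S_i\ \forall i\,\}$, which is convex: for fixed $\bm{B},\bm{C}$ the map $(v_1,\dots,v_k)\mapsto b_i-\sum_j \bm{C}_{ji}v_j$ is affine, $t\mapsto t^{p}$ is convex nondecreasing for $p\ge 1$, so each summand is convex, the objective is their sum, and $v_i\in S_i$ are linear constraints; it is thus solvable to polynomially small additive accuracy in $\poly(n,d)$ time (a linearly constrained least-squares problem, hence exact in closed form, when $p=2$). I would take the pair $(\widehat{\bm{V}},\widehat{\bm{C}})$ minimizing the objective over all enumerated $\bm{C}$ and output $\widehat{\bm{P}}$, the orthogonal projection onto $\mathrm{span}(\widehat{\bm{V}})$, which lies in $\cS$. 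The wrap-up is the triangle-inequality chain of \Cref{lem:structural-CSA}: since $\widehat{\bm{P}}$ is the best projection onto $\mathrm{span}(\widehat{\bm{V}})$ and $(\bm{V}^{*},\overline{\bm{C}})$ is feasible, $\|\bm{B}-\widehat{\bm{P}}\bm{B}\|_{2,p}^{p}\le\|\bm{B}-\widehat{\bm{V}}\widehat{\bm{C}}\|_{2,p}^{p}\le\|\bm{B}-\bm{V}^{*}\overline{\bm{C}}\|_{2,p}^{p}$; then $\|\bm{V}^{*}(\overline{\bm{C}}_{.,i}-\widetilde{\bm{C}}_{.,i})\|_2\le\varepsilon\|b_i\|_2$ (using $\sigma_{\max}(\bm{V}^{*})=1$) and monotonicity of $x\mapsto(x+s)^{p}-x^{p}$ on $[0,\|b_i\|_2]$ give $\|\bm{B}-\bm{V}^{*}\overline{\bm{C}}\|_{2,p}^{p}\le\|\bm{B}-\bm{P}^{*}\bm{B}\|_{2,p}^{p}+\|\bm{B}\|_{2,p}^{p}\big((1+\varepsilon)^{p}-1\big)$; finally pushing through \Cref{eqn:approximate_cost} (using $c\ge 0$ and $\|\bm{B}\|_{2,p}^{p}\le(1+\varepsilon)\|\bm{A}\|_{2,p}^{p}$) yields $\|\bm{A}-\widehat{\bm{P}}\bm{A}\|_{2,p}^{p}\le(1+\varepsilon)\opt+(1+\varepsilon)\|\bm{A}\|_{2,p}^{p}\big((1+\varepsilon)^{p}-1\big)$. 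Since the statement is trivial when $\varepsilon p>1$ (then $O(\varepsilon p)\|\bm{A}\|_{2,p}^{p}\ge\|\bm{A}\|_{2,p}^{p}\ge\opt$), I may assume $\varepsilon p\le 1$, so $(1+\varepsilon)^{p}-1=O(\varepsilon p)$ and, together with $(1+\varepsilon)\opt\le\opt+\varepsilon\|\bm{A}\|_{2,p}^{p}$, this gives additive error $O(\varepsilon p)\|\bm{A}\|_{2,p}^{p}$. The running time is $T_s+T_r\cdot(\kappa/\varepsilon)^{O(kr)}=(\kappa/\varepsilon)^{\poly(k/\varepsilon)}\cdot\poly(n)$.

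The main obstacle is conceptual rather than computational: one cannot ``solve'' directly over $\cS$ --- the set of orthonormal $\bm{U}$ whose span is spanned by one vector from each $S_i$ --- because that set is non-convex, so one is forced to relax to the factorization form $\bm{V}=[v_1,\dots,v_k]$ with the purely linear constraints $v_i\in S_i$ and re-project at the end; and the price of expressing $\bm{P}^{*}\bm{B}$ in a possibly ill-conditioned, non-orthonormal basis is that the coefficients blow up by $\kappa$, which is precisely what produces the $\kappa$ in the running time. The only other non-routine point is the normalization of $\bm{V}^{*}$ by a single scalar so that $\sigma_{\max}(\bm{V}^{*})=1$ and hence $\|(\bm{V}^{*})^{+}\|_{\mathrm{op}}=\kappa$, which keeps the net granularity free of any extra $\poly(k)$ factor.
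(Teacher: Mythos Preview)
Your proposal is correct and follows essentially the same coreset--guess--solve argument as the paper: reduce to $\bm{B}$ via \Cref{thm:coresets-lp}/\Cref{lem:coreset_p=2}, net the coefficients $\widetilde{\bm{C}}=(\bm{V}^*)^+\bm{B}$ inside balls of radius $\kappa\|b_i\|_2$, solve the convex linearly-constrained $\ell_p$ regression for each guess, and push the error through the triangle-inequality chain of \Cref{lem:structural-CSA}. The only cosmetic difference is your normalization (a single global scalar so $\sigma_{\max}(\bm{V}^*)=1$) versus the paper's (each column a unit vector, hence $\sigma_{\max}\ge 1$ and $\sigma_{\min}\le 1$); both give $\|(\bm{V}^*)^+\|_{\mathrm{op}}\le\kappa$ and $\|\bm{V}^*(\overline{\bm{C}}_{.,i}-\widetilde{\bm{C}}_{.,i})\|_2\le\varepsilon\|b_i\|_2$, so the analysis and running time are the same.
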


The algorithm we present below assumes a given bound on $\kappa$, the condition number. In practice, we can search for it via doubling and stop when a sufficiently small approximation error is reached or a certain time complexity is reached. We also note that it may happen that the \emph{optimal} solution uses a large $\kappa$, but there is an approximately optimal solution with small $\kappa$. In this case, our result can be applied with the smaller $\kappa$, and it gives a guarantee relative to the latter solution.

\begin{proof}
As a first step, we will find an additive approximation to the smaller instance obtained by replacing the $\bm{A}$ matrix with the smaller $\bm{B}$ matrix as in \Cref{lem:coreset_p=2}. Our proof mimics the argument from \Cref{lem:structural-CSA}, but we need a slight change in the analysis because $\{v_j\}$ are not orthogonal. Note that we can assume without loss of generality that the columns of $\bm{V}^*$ are unit vectors, $\sigma_{\max} (\bm{V}^*) \ge 1$ and $\sigma_{\min} (\bm{V}^*) \le 1$, and thus $\kappa \ge 1$. Given a bound on $\kappa$, the algorithm is simply the following: we first create a $\delta$-net for the Ball of radius $\kappa$ in $\RR^{k}$, with $\delta = \eps / \kappa$, and for each $i$, we form a guess for the coefficient vector $\bm{C}_{.,i}$ as $\norm{b_i}_2 \cdot u$, where $u$ is a vector from the net and $b_i$ is the $i^{\textnormal{th}}$ column of $\bm{B}$. For each guess $\widehat{\bm{C}}$, we solve for $\bm{V}$ that minimizes $\norm{\bm{B} - \bm{V} \widehat{\bm{C}}}_{2,p}$ subject to $v_j \in S_j$. Note that we can drop the unit vector constraints at this point; this makes the above optimization problem convex (specifically, it is the well-studied problem of $\ell_p$ regression~\cite{adil2024}), which can be solved in polynomial time.

To bound the error, we first note that the optimum coefficients $\bm{C}^*$ satisfy the condition that for each $i$, 
\[ \norm{\bm{C}^*_{., i}} \le \frac{\norm{b_i}_2}{\sigma_{\min}(\bm{V}^*)} \le \norm{b_i}_2 \cdot \kappa . \]
Now suppose we focus on one target vector $b_i$. By choice, in one of our guessed solutions, say $\overline{\bm{C}}$, we will have $\norm{\overline{\bm{C}}_{., i} - \bm{C}^*_{., i}} \le \norm{b_i}_2 \cdot \delta$. Thus, we have
\[  \norm{\overline{b}_i - b_i^*}_2 \le \norm{\bm{V}^* (\overline{\bm{C}}_{., i} - \bm{C}^*_{., i}) }_2 \le \sigma_{\max} (\bm{V}^*) \norm{\overline{\bm{C}}_{., i} - \bm{C}^*_{., i}}_2 \le \kappa \cdot \norm{b_i}_2 \cdot \delta. \]
Thus, analyzing the error $\Delta_i$ as in the proof of \Cref{lem:structural-CSA}, we obtain
\[ \Delta_i \le \norm{b_i}_2^p \cdot \left( (1+\varepsilon/p)^p -1 \right).  \]
This yields the desired additive guarantee to the reduced instance. Using the coreset property from \Cref{eqn:approximate_cost}, we know that the cost of the solution we find is at most $(1+\varepsilon)\opt+O(\varepsilon p)\cdot \|\bm{A}\|_{2,p}^p$. Using the fact that $\opt \leq \|\bm{A}\|_{2,p}^p$ completes the proof. 
\end{proof}

\subsubsection{Multiplicative Approximation Using Polynomial System Solving}\label{sec:fairsa-mult}
For the special case of $p=2$, it turns out that we can obtain a $(1+\eps)$-multiplicative approximation, using a novel idea.  

As described in our framework, we start by constructing  the reduced instance $\bfB, \cS$, where $\bfB = \{b_1, b_2, \dots, b_r\} \subset \RR^d$ is a set of target vectors and $\cS = \{ S_1, S_2,  \dots, S_k\}$ is the given collection of subspaces of $\RR^d$. We define $\bm{P}_j$ to be some fixed orthonormal basis for the space $S_j$. Recall that any solution to \fairSAp{2} is defined by (a) the vector $x_j$ that expresses the chosen $v_j$ as $v_j = \bm{P}_j x_j$ (we have one $x_j$ for each $j \in [k]$), and (b) a set of combination coefficients $c_{ij}$ used to represent the vectors $b_i$ using the vectors $\{v_j\}_{j=1}^k$. We collect the vectors $x_j$ into one long vector $\bm{x}$ and the coefficients $c_{ij}$ into a matrix $\bm{C}$. 

\begin{theorem}\label{thm:multiplicative}
Let $\bfB, \cS$ be an instance of \fairSAp{2}, where $\bfB = \{b_1, b_2, \dots b_r\}$, and suppose that the bit complexity of each element in the input is bounded by $H$. Suppose there exists an (approximately) optimal solution is defined by the pair $(\bm{x}^*, \bm{C}^*)$ with bit complexity $\text{poly}(n, H)$.  There exists an algorithm that runs in time $n^{O(k^2/\varepsilon)}\cdot poly(H)$ and outputs a solution whose objective value is within a $(1+\eps)$ factor of the optimum objective value. We denote $s=\sum_{j=1}^k s_j$ and $s_j =dim(S_j)$; $n$ for this result can be set to $\max(s,d,  k/\eps)$. 
\end{theorem}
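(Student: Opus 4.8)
The plan is to run our framework down to the coreset instance and then, in the ``solve'' step, \emph{avoid} discretizing the coefficient matrix $\bm C$ (which only buys additive error, as in \Cref{thm:additive-subspace-approx}); instead we keep the entries of $\bm C$ as formal variables, eliminate the remaining variables in closed form, and solve the resulting rational optimization problem exactly. Concretely, first apply \Cref{lem:coreset_p=2} with $r=k+k/\eps$ to replace $\bm A$ by $\bm B\in\RR^{d\times r}$; by item~2 of \Cref{lem:approximate_soln}, an exact minimizer of $\|\bm B-\bm P\bm B\|_F^2$ over $\cS$ costs at most $\opt+\eps\|\bm A-\bm A_k\|_F^2\le(1+\eps)\opt$ on the original instance, using $\|\bm A-\bm A_k\|_F^2\le\opt$. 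So it suffices to solve \ref{prog:fairSAp-fac} exactly on $\bm B$. (Since the singular vectors defining $\bm B$ need not be rational, we work with a rational approximation of $\bm B$ to $\poly(n,H)$ bits; the hypothesis that an optimal pair $(\bm x^*,\bm C^*)$ has bit complexity $\poly(n,H)$ is what makes this harmless.)

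\textbf{Eliminating the basis variables.} Write $v_j=\bm P_j x_j$ with $\bm P_j$ a fixed orthonormal basis of $S_j$ and $x_j\in\RR^{s_j}$, collect $\bm x=(x_1,\dots,x_k)\in\RR^s$, and observe
\[
\|\bm B-\bm V(\bm x)\bm C\|_F^2=\sum_{i=1}^r\Bigl\|b_i-\sum_{j=1}^k c_{ji}\,\bm P_j x_j\Bigr\|_2^2=\bigl\|\tilde b-\tilde M(\bm C)\,\bm x\bigr\|_2^2,
\]
where $\tilde M(\bm C)$ is the matrix whose $i$-th block-row is $[\,c_{1i}\bm P_1\mid\cdots\mid c_{ki}\bm P_k\,]$ and $\tilde b$ stacks the $b_i$. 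For fixed $\bm C$ this is an ordinary least-squares problem, so the optimal $\bm x$ equals $(\tilde M^\top\tilde M)^{-1}\tilde M^\top\tilde b$, whose entries are, by Cramer's rule, rational functions of $\bm C$ with numerator and denominator of degree $O(s)$ and coefficient bit length $\poly(n,H)$. Substituting back collapses the objective into a \emph{single rational function} $g(\bm C)=N(\bm C)/D(\bm C)$ in the $kr=O(k^2/\eps)$ entries of $\bm C$, with $\deg N,\deg D=O(s)$. (On the measure-zero locus where $\tilde M^\top\tilde M$ is singular, one either argues the infimum is approached off this locus, or regularizes with an infinitesimal $\tilde M^\top\tilde M+\tau\bm I$; this is a technicality.)

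\textbf{Solving the polynomial optimization.} It remains to minimize $g$ over the ball of radius $2^{\poly(n,H)}$ in $\RR^{kr}$, which by hypothesis contains an (approximately) optimal $\bm C$. Writing the stationarity conditions $D\nabla N-N\nabla D=0$ (together with the boundary sphere and $D\ne 0$) yields a system of $O(k^2/\eps)$ polynomial (in)equalities of degree $\poly(n)$ and bit length $\poly(n,H)$; invoking standard algorithms for real polynomial system solving / the first-order theory of the reals (e.g.\ Renegar's algorithm, or the sample-point constructions of Basu--Pollack--Roy) lets us compute a point meeting every connected component of the relevant semialgebraic set, hence a minimizer, in time $(\text{degree})^{O(\#\text{variables})}\cdot\poly(\text{bit length})=\poly(n)^{O(k^2/\eps)}\cdot\poly(H)=n^{O(k^2/\eps)}\cdot\poly(H)$. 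From the resulting $\bm C$ we recover $\bm x^*$ by least squares, set $v_j=\bm P_j x_j^*$, and output the projection onto $\mathrm{span}(v_1,\dots,v_k)$; \Cref{lem:approximate_soln} then certifies a $(1+\eps)$-approximation for the original instance.

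\textbf{Main obstacle.} The conceptual move — turning the non-convex constrained factorization into a rational minimization in only $O(k^2/\eps)$ variables by eliminating $\bm x$ in closed form — is short. The bulk of the work is the real-algebraic-geometry bookkeeping: tracking degrees and bit complexities through the Cramer's-rule substitution, guaranteeing that the stationarity system is (or can be perturbed to be) zero-dimensional, handling the unbounded domain and the degeneracy locus of $\tilde M^\top\tilde M$, and controlling the rational approximation of the irrational coreset $\bm B$ precisely enough that the final guarantee stays $(1+\eps)$-multiplicative rather than merely additive. These steps are standard but require care.
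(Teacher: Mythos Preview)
Your high-level plan coincides with the paper's: reduce to the coreset $\bm B$, write the problem as $\min_{\bm C,\bm x}\|\bm P(\bm C)\bm x-\bm b\|_2^2$ in the $kr=O(k^2/\eps)$ coefficients and the $s$ basis variables, eliminate $\bm x$ in closed form via the normal equations and Cramer's rule, and then invoke a Renegar/Basu--Pollack--Roy style solver on a polynomial system in the $c_{ij}$ alone. The paper phrases the last step as a feasibility test $\{\det(\bm D)=1,\ \|\bm b\|^2-\bm b^\top\bm P\bm x\le t\}$ combined with a binary search on $t$ (using the gap bound of \cite{JPT13} to control the search depth), whereas you write down stationarity conditions for $N/D$; these are interchangeable variants of the same reduction.

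The one place where you and the paper genuinely diverge is the degeneracy locus $\det(\tilde M^\top\tilde M)=0$, and here your proposal is too optimistic. You call it ``a technicality'' and offer either a continuity argument or a $+\tau\bm I$ regularization. The continuity claim is correct in that $f(\bm C)=\min_x\|\tilde M(\bm C)x-\tilde b\|^2$ is upper semicontinuous, so the infimum over the full-rank locus equals the global optimum; but if the optimum is only \emph{attained} on the singular locus, your stationarity system has no witness there, and the rational function $N/D$ does not extend to a polynomial identity you can feed to the solver. The $+\tau\bm I$ fix alters the least-squares solution and there is no clean way to pass to the limit inside a single polynomial system of bounded degree. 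The paper handles this with a concrete device you did not anticipate: for each guess $j\in[s]$ of the rank of $\bm P(\bm C^*)$, draw a random $R\in\RR^{s\times j}$ and solve the reduced regression $\min_{\bm C,\bm x}\|\bm P(\bm C)R\bm x-\bm b\|^2$. A separate lemma shows that when $j$ matches the true rank, $\bm P(\bm C^*)R$ has full column rank with probability $\ge 3/4$, so the Cramer's-rule elimination is valid \emph{at the optimum} and the feasibility system is actually satisfiable for $t=\opt$. This rank-guess-plus-random-projection step (together with a discretized-Gaussian argument to keep $R$ finite-precision) is the bulk of the paper's proof and is precisely the ``bookkeeping'' you flagged but did not resolve.
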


\subparagraph*{Algorithm Overview.} Recall that $\bm{P}_j$ specifies an orthonormal basis for $S_j$. Let $\bm{P}_{ij}:=c_{ij}\bm{P}_j$, where $c_{ij}$ are variables.  Define $\bm{P}$ to be the $\RR^{rd\times s}$ matrix consisting of $r \times k$ blocks; the $(i,j)^{\textnormal{th}}$ block is $\bm{P}_{ij}$ and we let $\bm{x}, \bm{b}$ be the vectors representing all the $x_j, b_i$ stacked vertically respectively as shown below: 
\begin{align*}
   \bm{P}= \left[\begin{array}{c|c|c|c}
\bm{P}_{1,1}  & \bm{P}_{1,2}  & \cdots & \bm{P}_{1,k}   \\ \hline
\bm{P}_{2,1}  & \bm{P}_{2,2}  & \cdots & \bm{P}_{2,k}   \\ \hline
  \vdots & \vdots & \ddots & \vdots \\ \hline
 \bm{P}_{r,1}  & \bm{P}_{r,2} & \cdots & \bm{P}_{r,k}\\ 
\end{array}\right] , \quad 
\bm{x}=  \left[\begin{array}{c}
x_1   \\ \hline
x_2 \\ \hline
  \vdots  \\ \hline
x_k 
\end{array}\right], \quad 
\bm{b}= \left[\begin{array}{c}
b_1  \\ \hline
b_2 \\ \hline
  \vdots  \\ \hline
b_r  
\end{array}\right].
\end{align*} 
The problem \fairSAp{2} can now be expressed as the regression problem: 
\begin{align}
 \label{eqn:CSS-compact}   \min_{\bm{C,x}} : \|\bm{P}\bm{x}-\bm{b}\|_2^2.
\end{align}
Written this way, it is clear that for any $\bm{C}$, the optimization problem with respect to $\bm{x}$ is simply a regression problem.  For the sake of exposition, suppose that for the optimal solution $(\bm{C}^*, \bm{x}^*)$, the matrix $\bm{P}$ turns out to have a full column rank (i.e., $\bm{P}^T \bm{P}$ is invertible). In this case, the we can write down the normal equation $\bm{P}^T\bm{P}\bm{x}=\bm{P}^T\bm{b}$ and solve it using Cramer's rule! More specifically, let $\bm{D}=\bm{P}^T\bm{P}$ and $\bm{D}_j^{(i)}$ be the matrix obtained by replacing the $i^{\textnormal{th}}$ column in the $j^{\textnormal{th}}$ column block of $\bm{D}$ with the column $\bm{P}^T\bm{b}$ for $j\in [k], i\in [s_j]$. Using Cramer's rule, we have $x_j^{(i)}= \det(\bm{D}_j^{(i)})/\det(\bm{D})$. 

The key observation now is that substituting this back into the objective yields an optimization problem over (the variables) $\bm{C}$.  First, observe that using the normal equation, the objective can be simplified as
\[  \|\bm{P}\bm{x}-\bm{b}\|_2^2 = \bm{x}^T \bm{P}^T \bm{P} \bm{x} - \bm{x}^T \bm{P}^T \bm{b} - \bm{b}^T \bm{P} \bm{x} + \norm{\bm{b}}^2 = \norm{\bm{b}}^2 - \bm{b}^T \bm{P} \bm{x}. \]

Suppose $t$ is a real valued parameter that is a guess for the objective value. We then consider the following feasibility problem:
\begin{align}
    &\|\bm{P}\bm{x}-\bm{b}\|_2^2 =   \|\bm{b}\|_2^2  -\bm{b}^T\bm{P}\bm{x} \leq t \\
\label{eqn:polynomial-binsearch}    & \iff \|\bm{b}\|_2^2 -t  \leq  \sumL_{j\in [k], i\in [s_j]}(\bm{b}^T\bm{P})_j^{(i)}\det(\bm{D}_j^{(i)}) , \quad\det(\bm{D})=1. 
\end{align}
The idea is to solve this feasibility problem using the literature on solving polynomial systems. This leaves two main gaps: guessing $t$, and handling the case of $\bm{P}$ not having a full column rank in the optimal solution. We handle the former issue using known quantitative bounds on the solution value to polynomial systems, and the latter using a pre-multiplication with random matrices of different sizes.

\paragraph{Core Solver.}  We begin by describing the details of solving~\eqref{eqn:polynomial-binsearch}, assuming a feasible guess for $t$, and assuming that $\bm{P}$ has full column rank. Note that this is an optimization problem in the variables $c_{ij}$, and so the number of variables is $rk = O(k^2 / \eps)$. Furthermore, the degree of the polynomials is $O(s)$, and the bit sizes of all the coefficients is $\text{poly}(n, H)$.

We can thus use a well-known result on solving polynomial systems over the reals:
\begin{theorem}[\cite{renegar1992computationala,renegar1992computationalb,BPR96}]
\label{thm:poly_system_solving}
Given a real polynomial system $P(x_1,\dots,x_v)$ having $v$ variables and $m$ polynomial constraints $f_i(x_1,\dots,x_v) \Delta_i 0$, where $\Delta_i \in \{\geq , =, \leq \}$, where $d$ is the maximum degree of all polynomials, and $H$ is the maximum bit-size of the coefficients of the polynomials, one can find a solution to $P$ (or declare infeasibility) in time $(md)^{O(v)}\textnormal{poly}(H)$.
\end{theorem}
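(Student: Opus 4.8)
The statement is the classical effective bound for the existential theory of the reals restricted to a single conjunctive block, and the plan is to reconstruct the \emph{critical point method} proof of Renegar and of Basu--Pollack--Roy. First I would reduce the feasibility/witness task to a \emph{sampling} task: it suffices to produce a finite set $\mathcal{N}$ of points meeting every connected component of the realization of every sign condition of the family $\mathcal{F}=\{f_1,\dots,f_m\}$, i.e.\ of every nonempty set $\{x\in\RR^v:\ f_i\,\sigma_i\,0\ \forall i\}$ with $\sigma_i\in\{<,=,>\}$. Since the feasible set $S=\{x:\ f_i\,\Delta_i\,0\ \forall i\}$ is a union of such realizations, if $S\neq\emptyset$ then some point of $\mathcal{N}$ lies in $S$; given $\mathcal{N}$ one determines the sign of each $f_i$ at each sample point by Thom-encoding based sign determination and returns a point of $S$ if one exists, otherwise declares infeasibility.

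The core subroutine is sampling one point per connected component of a real algebraic set $\{g=0\}$. Fix a generic $c\in\RR^v$ and a radius $R$ exceeding the (known) magnitude bound on the coordinates of a point in a nonempty semialgebraic set, or, cleaner, take $R=1/\epsilon$ with $\epsilon$ an infinitesimal; on each connected component of $\{g=0\}\cap\{\|x\|^2\le R\}$ the function $x\mapsto\|x-c\|^2$ attains its minimum, which is either a critical point of this function restricted to $\{g=0\}$ --- giving the square system ``$g=0$ together with the vanishing of all $2\times 2$ minors of the $2\times v$ Jacobian of $(g,\ \|x-c\|^2)$'' --- or lies on the boundary sphere, which is handled by recursing in one fewer dimension. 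Singularity/compactness failures (a singular $\{g=0\}$, or the minimum hiding at a point where $\nabla g=0$) are removed by an infinitesimal deformation: one works over the real closed field $\RR\langle\epsilon\rangle$ of algebraic Puiseux series, replaces $g$ by a generic bounded smooth deformation, computes its sample points there, and applies the $\lim_\epsilon$ operation to pull them back to $\RR^v$. Each auxiliary system is zero-dimensional with $O(v)$ variables and total degree $d^{O(v)}$ by a B\'ezout-type bound, so a rational univariate representation of its solution set is computed in $d^{O(v)}\,\poly(H)$ bit operations and has $d^{O(v)}$ solutions; the recursion over sphere boundaries multiplies the exponent only by a constant.

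To lift this to the whole family $\mathcal{F}$ I would apply the deformation in ``general position'': introduce pairwise separated infinitesimals $\epsilon_1\gg\epsilon_2\gg\cdots$ and sample points of the zero set of $\prod_i (f_i-\epsilon_i)(f_i+\epsilon_i)$ (together with the appropriate limits), which, as in the hyperplane-arrangement perturbation argument, guarantees that every cell of the original arrangement is detected in the limit. The governing polynomial now has degree $O(md)$ in $v$ variables, so the single-polynomial analysis yields complexity $(md)^{O(v)}\,\poly(H)$ and at most $(md)^{O(v)}$ sample points; combining with the sign-determination pass, itself $(md)^{O(v)}\,\poly(H)$, gives the claimed running time, with correctness following from the sampling guarantee of the first paragraph.

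The main obstacle is not any single calculation but the bookkeeping around \emph{non-smooth, non-compact} sets: one must (a) set up the infinitesimal deformations over a non-archimedean real closed extension and prove that $\lim_\epsilon$ carries the sample points of the deformed set to a set meeting every connected component of the original (this is the technically delicate part, requiring semialgebraic triviality/curve-selection arguments), and (b) organize the recursion on sphere boundaries and the product over the $m$ polynomials so that the number of variables entering each zero-dimensional solve stays $O(v)$, thereby keeping the exponent $O(v)$ rather than $O(v^2)$ or doubly exponential as in naive cylindrical algebraic decomposition. Once the zero-dimensional systems are in hand, their solution (rational univariate representation via the multiplication tensor of the quotient algebra, or via subresultants) and Thom-lemma sign determination are standard.
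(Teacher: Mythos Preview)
The paper does not prove this theorem at all: it is quoted verbatim from the cited references \cite{renegar1992computationala,renegar1992computationalb,BPR96} and used as a black box (``Applying this Theorem to our setting, we obtain a running time of $s^{O(rk)}\cdot\text{poly}(H)$\dots''). There is therefore no ``paper's own proof'' to compare against.

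Your sketch is a faithful high-level outline of the critical-point method as developed in those references, and the overall structure (reduce to sampling one point per sign condition, critical points of a generic distance function, infinitesimal deformation over $\RR\langle\epsilon\rangle$ to handle singular/noncompact cases, product perturbation to handle the whole family, rational univariate representation plus Thom-encoding sign determination) is correct. If your goal is to supply a self-contained proof for the paper, be aware that turning this outline into an actual argument is a book-length undertaking---the delicate points you flag in your last paragraph (correctness of $\lim_\epsilon$, keeping the recursion at $O(v)$ variables) are precisely where the cited works spend most of their effort. For the purposes of this paper the appropriate ``proof'' is simply the citation.
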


Applying this Theorem to our setting, we obtain a running time of $s^{O(rk)} \cdot \text{poly}(H) = n^{O(k^2/\varepsilon)}\cdot \textnormal{poly}(H)$, where $H$ is the maximum bit-size of the entries in the matrices $\bm{P}_j$ and the target vectors $\bm{b}$. 

\paragraph{Guessing $t$.} The solver step assumes that we are able to guess a feasible value of $t$. In order to perform a binary search, we need some guarantees on the range of the objective value. First, we note that the value of the objective is in the range $[0,\|\bm{b}\|_2^2]$, so we have an obvious upper bound. We can also test if the problem is feasible for $t=0$. If $t=0$ is infeasible, we need a non-trivial lower bound on $t$. Fortunately, this problem has been well-studied in the literature on polynomial systems. We will use the following Theorem:

\begin{theorem}[\cite{JPT13}]
\label{thm:poly_system_lowerbound}
Let $T=\{x\in \RR^v | f_1(x)\geq 0, \dots, f_{\ell}(x)\geq 0, f_{\ell+1}(x)=0,\dots, f_{m}(x)=0\}$ be the feasibility set for a polynomial system, where $f_1,\dots,f_m \in \ZZ[x_1,\dots,x_v]$ are polynomials with degrees bounded by an even integer $d$ and coefficients of absolute value at most $G$, and let $C$ be a compact connected component of $T$. Let $g\in \ZZ[x_1,\dots,x_v]$ be a polynomial of degree at most $d$ and coefficients of absolute value bounded by $H$. Then the minimum value that $g$ takes over $C$ if not zero, has absolute value at least 
    \begin{align*}
        (2^{4-v/2}\tilde{G}d^v)^{-v2^vd^v},
    \end{align*}
    where $\tilde{G}=\max\{G,2v+2m\}$.
\end{theorem}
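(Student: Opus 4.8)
\textbf{Proof proposal for Theorem~\ref{thm:poly_system_lowerbound}.} The plan is to show that $m^\star := \min_{x\in C} g(x)$, when it is nonzero, is a real algebraic number arising as a coordinate of an isolated solution of an explicitly constructed auxiliary polynomial system of controlled degree and coefficient height, and then to invoke a Cauchy-type root-separation bound. First, since $C$ is a \emph{compact} connected component of $T$, the minimum is attained at some $x^\star\in C$; at $x^\star$ a subset $I\subseteq\{1,\dots,\ell\}$ of the inequality constraints is active, and $x^\star$ is a local minimum of $g$ on the variety cut out by $\{f_i=0: i\in I\}$ together with the equality constraints $f_{\ell+1},\dots,f_m$. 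The obstacle is that this variety may be singular at $x^\star$, so Lagrange/critical-point conditions need not hold there directly. To get around this one introduces an infinitesimal $\varepsilon$ and works over the real closure $\RR\langle\varepsilon\rangle$ of $\RR(\varepsilon)$: for each subset $I$ one considers the deformed varieties $\{f_i=\eta_i:i\in I\}\cup\{f_j=0: j>\ell\}$ with the $\eta_i$ chosen, via a Sard-type genericity argument over the real closed field, so that the deformed variety is smooth of the expected dimension and the restriction of $g$ to it has only nondegenerate critical points. Those critical points are cut out by the original (deformed) equations together with the vanishing of all maximal minors of the Jacobian of $\big(g,\{f_i\}_{i\in I},\{f_j\}_{j>\ell}\big)$, plus a fresh variable $t$ subject to $t=g(x)$. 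One shows that $m^\star$ is the limit, as $\varepsilon\to 0$, of the $t$-coordinate of one such deformed critical point lying in the relevant deformation of $C$.

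Second, bound the combinatorics and geometry of this construction. There are at most $2^{\ell}\le 2^{m}$ subsets $I$; for each, the deformed critical-point system has $v+1$ variables $(x,t)$ and polynomials of degree at most $d$, so by a B\'ezout-type bound it has at most $d^{O(v)}$ isolated solutions over the algebraic closure $\overline{\RR\langle\varepsilon\rangle}$. Applying effective elimination — introduce a generic linear form in $(x,t)$ and take iterated resultants, or use the geometric-resolution / Basu--Pollack--Roy machinery — produces a univariate ``eliminant'' $R(t,\varepsilon)\in\ZZ[\varepsilon][t]$ vanishing at all these $t$-coordinates, with $\deg_t R$ and $\deg_\varepsilon R$ both bounded by roughly $d^{O(v)}$ and with coefficient heights bounded by $\tilde G^{\,d^{O(v)}}$; the augmented coefficient bound $\tilde G=\max\{G,2v+2m\}$ enters precisely because the Jacobian minors, the extra variable $t$, the deformation parameters, and the clearing of denominators introduce the combinatorial quantities $v$ and $m$ alongside the original height $G$. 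Specializing $\varepsilon\to 0$ and, if necessary, extracting the lowest-order coefficient in $\varepsilon$ (to handle the case where the naive specialization vanishes identically in $t$), one obtains a \emph{nonzero} $\widetilde R(t)\in\ZZ[t]$ of degree $D$ and height $\mathcal H$ with $D\le v\,2^{v}d^{v}$ and $\log\mathcal H\le v\,2^{v}d^{v}\cdot\log\!\big(2^{4-v/2}\,\tilde G\,d^{v}\big)$ — here the factor $2^v$ absorbs the union over the subsets $I$ and the doubling inherent in the sign-condition bookkeeping, while the constant $2^{4-v/2}d^v$ is the explicit per-step growth of the height in the elimination — such that $\widetilde R(m^\star)=0$.

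Third, apply root separation. If $m^\star\ne 0$, then $m^\star$ is a nonzero real root of $\widetilde R\in\ZZ[t]$. Write $\widetilde R(t)=a_Dt^D+\cdots+a_e t^{e}$ with $a_e\ne 0$ its lowest nonvanishing coefficient; then $m^\star$ is a nonzero root of $a_D t^{D-e}+\cdots+a_e\in\ZZ[t]$, whose constant term $a_e$ is a nonzero integer, hence $|a_e|\ge 1$. Cauchy's bound on the smallest root in modulus gives $|m^\star|\ge |a_e|/\big(|a_e|+\max_i|a_i|\big)\ge 1/(1+\mathcal H)\ge \mathcal H^{-1}$, and plugging in the bound on $\log\mathcal H$ yields $|m^\star|\ge \big(2^{4-v/2}\,\tilde G\,d^{v}\big)^{-v\,2^{v}d^{v}}$, exactly as claimed.

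\textbf{Main obstacle.} The crux is the deformation-and-limit argument of the first step: proving that a suitable infinitesimal perturbation makes every deformed critical-point system zero-dimensional and nonsingular, that $m^\star$ is genuinely recovered as an $\varepsilon\to 0$ limit of a $t$-coordinate of a point lying over $C$ (and not, say, over a different component of $T$), and that all of this can be arranged while keeping the degrees in the extra variables and the \emph{explicit} growth of coefficient heights within the stated bounds. This forces careful work with algebraic Puiseux series, Sard-type genericity over real closed fields, and sharp (not merely asymptotic) effective-elimination estimates; the precise constant $2^{4-v/2}$ and the exact exponent $v\,2^{v}d^{v}$ emerge only if each of these ingredients is carried out with explicit bookkeeping of constants rather than $O(\cdot)$ notation.
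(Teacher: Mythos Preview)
The paper does \emph{not} prove Theorem~\ref{thm:poly_system_lowerbound}; it is quoted verbatim from~\cite{JPT13} and used as a black box to lower-bound the minimum nonzero value of the polynomial in~\eqref{eqn:min_granularity}. There is therefore nothing in the paper to compare your proposal against.

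That said, your sketch is broadly in the spirit of how such bounds are obtained in the real-algebraic-geometry literature: reduce to a zero-dimensional critical-point system via deformation, eliminate to a univariate polynomial with controlled degree and height, and then apply a Cauchy-type root bound. Two caveats are worth flagging. First, your height and degree bookkeeping is asserted rather than derived; the specific constants $2^{4-v/2}$ and the exponent $v\,2^v d^v$ in~\cite{JPT13} come from a particular elimination scheme, and recovering them (as opposed to some $d^{O(v)}$-type bound) would require tracking constants through that scheme rather than generic resultant estimates. Second, and more substantively, the bound in the theorem does not depend on $H$ (the height of $g$), only on $G$ via $\tilde G$; your construction folds $g$ into the critical-point equations and into $t=g(x)$, so the eliminant height would a priori involve $H$ as well. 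Explaining why $H$ drops out (or why one may assume $H\le G$ after a normalization) is a genuine gap in your outline. For the purposes of the present paper, however, none of this matters: you should simply cite the result, as the authors do.
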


We can use Theorem~\ref{thm:poly_system_lowerbound} to obtain a lower bound on the minimum non-zero value attainable for the polynomial
\begin{align}
\label{eqn:min_granularity}    \min_{\bm{C}}: \|\bm{b}\|_2^2 - \sumL_{j\in [k], i\in [s_j]}(\bm{b}^T\bm{P})_j^{(i)}\det(\bm{D}_j^{(i)}), \quad \text{over the set defined by } \det(\bm{D})=1.
\end{align}

Using our assumption that the bit complexity of the (near-) optimal solution $\bm{C}^*$ is $\Delta = \text{poly}(n, H)$, we can add box constraints for each of the variables, making the feasible set compact. Thus, we have polynomials of degree $O(s)$ defining the constraints and the objective, and the number of variables is $rk = O(k^2/\eps)$. Using Theorem~\ref{thm:poly_system_lowerbound}, we have that if the minimum value is non-zero, it must be at least
\[  \left( 2^{\text{poly}(n,H)} s^v   \right)^{-v 2^v s^v},  \text{ where } v = \frac{k^2}{\eps}.  \]
This implies that a binary search takes time $O\left(  (2s)^{k^2/\eps} \text{poly}(n,H) \frac{k^4}{\eps^2} \log s \right)$.  Note that this time roughly matches the complexity of the core Solver procedure.

\paragraph{Column Rank of $\bm{P}$.}  The algorithm above requires the matrix $\bm{D} = \bm{P}^T \bm{P}$ to have full rank, for a (near) optimum $\bm{C}^*$. If this does not hold, the idea will be to search over all the possible values of the rank. One natural idea is to solve the problem for all subsets of the columns $[s]$, but note that this takes time $\exp(s)$, which is much larger than our target running time $\exp(\text{poly}(k))$.  We thus perform a randomized procedure: for every guess $j$ for the column rank of $\bm{P}$, we take a random matrix $R \in \RR^{s \times j}$, and consider the regression problem
\begin{equation}\label{eq:regression-guess-j} \min_{\bm{C}, \bm{x}} \| (\bm{P} R) \bm{x} - \bm{b} \|^2 \le t. 
\end{equation}

Let $\bm{M}$ be the $\bm{P}$ matrix corresponding to an optimal solution $\bm{C}^*$, and suppose $j$ is its rank. In Lemma~\ref{lem:rand_rotation}, we show that if the entries of $R$ are drawn IID from $\cN(0,1)$, then with probability at least $3/4$, the matrix $\bm{M} R$ has full column rank. Thus, if we were to solve~\eqref{eq:regression-guess-j} with the optimal value of $t$ as our guess using the Solver discussed above, we would obtain an approximately optimal solution.

This leads to the following overall algorithm: 
\begin{itemize}
\item  For $j = 1, 2, \dots, s$:
\begin{itemize}
\item Sample a random matrix $R \in \RR^{s \times j}$ with entries drawn i.i.d. from $\cN(0,1)$.
\item Guess value of $t$ for the formulation~\eqref{eq:regression-guess-j} as discussed above.
\item If the determinant is not identically zero, call the core Solve subroutine and check obtained solution for feasibility.
\end{itemize}
\item Return the solution found as above with the least $t$.
\end{itemize}

Note that we can boost the success probability by sampling multiple $R$ for each guess of $j$.  
This completes the proof of our result, Theorem~\ref{thm:multiplicative}, modulo Lemma~\ref{lem:rand_rotation} which we prove below.

\begin{lemma}
\label{lem:rand_rotation}
    Let $M\in \RR^{d' \times s}$ be a matrix of rank $j$, and $R\in \RR^{s\times j}$ be a random matrix with entries drawn i.i.d. from $\mathcal{N}(0,1)$. Then the rank of $MR$ is equal to $j$ with probability $\ge 3/4$.
\end{lemma}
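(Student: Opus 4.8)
The plan is to prove the \emph{stronger} statement that $\mathrm{rank}(MR)=j$ with probability exactly $1$; since this trivially implies the claimed bound of $3/4$, it suffices. As $R$ has only $j$ columns we always have $\mathrm{rank}(MR)\le j$, so the only thing to show is that $MR$ has full column rank almost surely, i.e.\ that the map $x\mapsto MRx$ on $\RR^j$ is injective.

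The main step is to reduce to a square Gaussian matrix. Since $M$ has rank $j$, its row space is a $j$-dimensional subspace of $\RR^s$; I would pick $V\in\RR^{s\times j}$ whose columns form an orthonormal basis of that row space, so that $M=NV^T$ for some $N\in\RR^{d'\times j}$. Because $V^T:\RR^s\to\RR^j$ is surjective, $\mathrm{rank}(N)=\mathrm{rank}(NV^T)=j$, so $N$ has full column rank and hence is injective. Now $MR=N(V^TR)$, and by rotational invariance of the standard Gaussian, each column $V^T r_m$ of $V^TR$ is $\mathcal{N}(0,V^TV)=\mathcal{N}(0,I_j)$ with the columns independent, so $V^TR\in\RR^{j\times j}$ has i.i.d.\ $\mathcal{N}(0,1)$ entries. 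A square matrix with i.i.d.\ absolutely continuous entries is nonsingular with probability $1$, since the vanishing locus of its determinant is a proper algebraic subvariety of $\RR^{j\times j}$ and thus Lebesgue-null. Composing the invertible map $V^TR$ with the injective map $N$ yields an injective map $MR$, as desired.

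An essentially equivalent route, which I would mention as an alternative, is combinatorial: $\mathrm{rank}(MR)<j$ is equivalent to $\mathrm{colspace}(R)\cap\ker(M)\neq\{0\}$, i.e.\ to the $s$ vectors consisting of the $j$ columns of $R$ together with a fixed basis of the $(s-j)$-dimensional kernel $\ker(M)$ being linearly dependent. The determinant of the resulting $s\times s$ matrix is a polynomial in the entries of $R$ that is not identically zero (choose the columns of $R$ to extend a basis of $\ker(M)$ to all of $\RR^s$), so by the same measure-zero argument it vanishes with probability $0$.

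I do not expect a genuine obstacle here; the only points requiring care are (i) the degenerate cases $j=0$ and $\ker(M)=\{0\}$ (i.e.\ $j=s$), both of which are immediate, and (ii) making precise the standard fact that a not-identically-zero polynomial evaluated at continuous random variables is almost surely nonzero. I would also add a remark addressing why the weaker probability $3/4$ is stated: if one needs $R$ to be \emph{discretized} (entries drawn from a finite grid, as is relevant for bit-complexity bounds), the determinant polynomial above has degree at most $j$ in each variable, so Schwartz--Zippel over a grid of size at least $4j$ already gives success probability at least $3/4$, and independent repetitions boost it arbitrarily close to $1$.
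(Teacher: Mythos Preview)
Your argument is correct and in fact establishes the stronger conclusion that $\mathrm{rank}(MR)=j$ almost surely; the decomposition $M=NV^T$ with $V^TR$ a $j\times j$ i.i.d.\ Gaussian matrix is clean, and the measure-zero argument for nonsingularity is standard and sound. The alternative route via the $s\times s$ determinant is also fine.

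The paper, however, deliberately proves a \emph{quantitative} version: it shows $\sigma_j(MR)\ge \sigma_j(M)/(4j^{3/2})$ with probability at least $3/4$, via a leave-one-out argument (condition on all columns of $R$ but one, use Gaussian anti-concentration to lower-bound the distance of the remaining column of $MR$ to the span of the others, then union bound over columns). This explains the $3/4$: it is the residual probability after the union bound, not a slack in a qualitative statement. The quantitative bound is what the paper actually needs downstream, because the very next paragraph replaces the exact Gaussian $R$ by a finite-precision surrogate, and for that one must know not merely that $\sigma_j(MR)>0$ but that it is bounded away from zero by an explicit amount. Your measure-zero argument gives no such bound.

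Your closing remark about Schwartz--Zippel on a grid is a reasonable alternative way to handle discretization (and your degree bound is even loose: the determinant is multilinear in the entries of $R$, total degree $j$). The paper instead constructs a discretized Gaussian with controlled anti-concentration so that the leave-one-out argument survives, which additionally preserves a quantitative least-singular-value bound rather than only nonsingularity. Both approaches work for the lemma as stated; the paper's buys the singular-value control it relies on later.
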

Note that since the ranks are equal, the column spans of $M$ and $MR$ must be the same.
\begin{proof}
We will prove a quantitative version of the statement. Suppose we denote the $j$th largest singular value of $M$ as $\tau = \sigma_j (M)$.  We will show that $\sigma_j (MR) \ge \frac{\sigma_j}{4 j^{3/2}}$ with probability $\ge 3/4$. 

Let $\cS$ be the span of the columns of $M$. For a random $x$ whose entries are drawn from $\cN(0,1)$, note that the distribution of $Mx$ is a (non-spherical) Gaussian on $\cS$. Moreover the covariance matrix has all eigenvalues $\ge \tau^2$. Let us now consider the matrix $MR$. We use a leave-one-out argument to show linear independence of its columns. I.e., we claim that every column of $MR$ has a projection of length at least $\frac{\tau}{4 j}$ orthogonal to the span of the other columns, with probability at least $3/4$. This implies (e.g., see~\cite{rudelson2009smallestsingularvaluerandom}) that $\sigma_j (MR) \ge  \frac{\sigma_j}{4 j^{3/2}}$ with probability at least $3/4$.

To see the claim, suppose we condition on all but the $i$th column of the matrix $R$, for some $1\le i\le j$. Then by the earlier observation, the $i$th column of $MR$  (denoted by $V_i$) is distributed as a non-spherical Gaussian on $\cS$ whose covariance matrix has all eigenvalues $\ge \tau^2$. Thus its projection to the space orthogonal to the span of the remaining columns of $MR$ (which are all fixed after conditioning) behaves as a Gaussian with at least one dimension and standard deviation $\ge \tau$. Thus by using the anti-concentration bound for a Gaussian (which states that the probability mass in any $\delta \tau$ sized interval is $\le \delta$), $V_i$'s projection orthogonal to the span of the other columns is at least $\frac{1}{4j} \cdot \tau$, with probability $1- \frac{1}{4 j}$. We can now take a union bound over all $1 \le i \le j$ to obtain a success probability of $3/4$. This completes the proof.
\end{proof}

\paragraph{Remark about precision.} The result above assumes that we use infinite precision for $R$. We now show how to avoid this assumption, with a slight loss in the parameters. 
Note that the key step in the above argument is showing that conditioned on the randomness in all but the $i$th column of $R$, the vector $V_i = \sum_{l=1}^s R_{il} M_l$ has a sufficiently large norm in the direction orthogonal to the span of the other columns in $MR$ (that are fixed due to the conditioning). For convenience, let $\Pi$ be the projector orthogonal to the span of the other columns, and denote $v_l = \Pi M_l$, and $X_{l} := R_{il}$. By the assumption on the least singular value, for any $\Pi$, we have that $\sum_l \norm{v_l}^2 \ge \tau^2$. This implies that there exists a coordinate $t \in [d']$ such that $\sum_l v_{lt}^2 \ge \frac{\tau^2}{d'}$. Just focusing on this coordinate, we could note that $\sum_l v_{lt} X_l$ is distributed as a Gaussian and thus conclude that the probability of the coordinate being $ < \frac{\tau}{4 j \sqrt{d'}}$ is $< 1/4j$. This leads to a slightly weaker (by a $\sqrt{d'}$ factor) bound on the least singular value, with the same success probability. However, this argument is more flexible, it lets us use ``discretized'' Gaussians.

\begin{lemma} For any $\delta, \eta \in (0, 1/2)$, there exists a centered distribution $\cY$ with the following properties:
\begin{enumerate}
\item The support of $\cY$ and the probability masses at each point in the support are all rational numbers of bit length $b = O(\log \frac{s}{\delta \eta} )$.
\item For all $\{a_i\}_{i=1}^s$ with $\sum_i a_i^2 = 1$, if $Y_1, Y_2, \dots, Y_s$ are independent random variables drawn from $\cY$, 
\[ \Pr[ |\sum_i a_i Y_i | < \delta ] \le 2\delta + \eta. \]
\end{enumerate}
\end{lemma}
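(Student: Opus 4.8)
The plan is to realize $\cY$ as a standard Gaussian $\cN(0,1)$ subjected to three rounding operations---truncation, spatial discretization, and quantization of the probability masses---and then to prove the anti-concentration bound by coupling each $Y_i$ drawn from $\cY$ back to a genuine $Z_i\sim\cN(0,1)$. Concretely, fix $h$ to be the largest dyadic rational with $h\le\delta/\sqrt s$, fix $T$ to be a dyadic rational slightly larger than $\sqrt{2\log(8s/\eta)}$, and set $M=\lceil T/h\rceil+1$. Let $\cY_0$ be supported on the grid $\{jh:|j|\le M\}$, with $\cY_0(\{jh\})=\Pr_{Z\sim\cN(0,1)}[(j-\tfrac12)h\le Z<(j+\tfrac12)h]$ for $|j|<M$ and $\cY_0(\{\pm Mh\})$ equal to the remaining upper/lower tail mass. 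Since $\cN(0,1)$, the grid, and the truncation are all symmetric about $0$, $\cY_0$ is symmetric and hence centered, and each support point $jh$ is a dyadic rational of bit length $O(\log(s/(\delta\eta)))$ because $|j|\le M=O(T\sqrt s/\delta)$.

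To get rational masses, pick $N=2^b$ with $b=O(\log(s/(\delta\eta)))$ large enough that $(2M{+}1)/N\le\eta/(4s)$, round each $\cY_0(\{jh\})$ down to the nearest multiple of $1/N$ (symmetrically, so that the masses at $jh$ and $-jh$ stay equal), and add the remaining mass---a nonnegative multiple of $1/N$ of size at most $(2M{+}1)/N$---to the atom at $0$. Call the result $\cY$. It is centered, supported on at most $2M{+}1$ points, and all its masses are rationals of bit length $O(\log(s/(\delta\eta)))$; together with the bound on the support points this gives the first claimed property. Moreover $d_{\mathrm{TV}}(\cY,\cY_0)\le(2M{+}1)/N\le\eta/(4s)$.

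For the second property, fix $\{a_i\}$ with $\sum_i a_i^2=1$ and build the coupling: draw $Z_i\sim\cN(0,1)$, round it to the appropriate grid atom to obtain $Y_i^0\sim\cY_0$, and maximally couple $Y_i^0$ with $Y_i\sim\cY$ so that $\Pr[Y_i\ne Y_i^0]\le d_{\mathrm{TV}}(\cY,\cY_0)$; the pairs $(Z_i,Y_i)$ are i.i.d. On the event $G$ that $Y_i=Y_i^0$ and $|Z_i|\le(M-\tfrac12)h$ for every $i$, the rounding error obeys $|Y_i^0-Z_i|\le h$, so Cauchy--Schwarz gives $\bigl|\sum_i a_iY_i-\sum_i a_iZ_i\bigr|\le h\sum_i|a_i|\le h\sqrt s\le\delta$; thus on $G$ the event $\{|\sum_i a_iY_i|<\delta\}$ forces $\{|\sum_i a_iZ_i|<2\delta\}$. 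Since $\sum_i a_iZ_i\sim\cN(0,1)$, standard Gaussian anti-concentration gives $\Pr[|\sum_i a_iZ_i|<2\delta]\le 2\delta$, while a union bound together with the Gaussian tail bound and our choices of $N,T$ give $\Pr[G^c]\le s\cdot d_{\mathrm{TV}}(\cY,\cY_0)+s\cdot e^{-((M-1/2)h)^2/2}\le\eta/4+\eta/8$. Combining, $\Pr[|\sum_i a_iY_i|<\delta]\le\Pr[G^c]+\Pr[|\sum_i a_iZ_i|<2\delta]<2\delta+\eta$, as required.

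\textbf{Main obstacle.} The Gaussian estimates above are routine; the delicate part is the mass-quantization step, which must simultaneously (a) preserve exact centeredness (forcing a symmetric rounding rule with the leftover mass corrected only at $0$), (b) keep \emph{both} the support points and the masses of bit length $O(\log(s/(\delta\eta)))$ (forcing $h$ and $T$ to be dyadic approximations of $\delta/\sqrt s$ and $\sqrt{2\log(8s/\eta)}$ rather than their exact values), and (c) keep $d_{\mathrm{TV}}(\cY,\cY_0)$ below $\eta/(4s)$ so that the union bound over the $s$ coordinates still leaves slack. Weaving these three approximation layers into a single coupling is what needs care. It is also worth noting that the factor $2\delta$ in the statement, rather than $\delta$, is exactly the unavoidable cost of the spatial discretization: a window of width $2\delta$ for $\sum_i a_iY_i$ only certifies that $\sum_i a_iZ_i$ lies in the doubled window of width $4\delta$.
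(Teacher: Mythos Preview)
Your proof is correct and follows essentially the same route as the paper: discretize a standard Gaussian onto a finite grid, round the resulting probability masses to short rationals, and transfer the Gaussian anti-concentration via a coupling plus a total-variation/union-bound argument. The only differences are organizational---you absorb the tail truncation into a single ``bad event'' inside one coupling (whereas the paper handles it as a separate density-ratio comparison with a truncated normal), and you use Cauchy--Schwarz to bound the discretization error by $h\sqrt{s}$ rather than the paper's cruder $s/(2M)$, permitting a slightly coarser grid; neither change affects the structure or the final bound.
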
 
\begin{proof}
The proof goes by a discretization of the Gaussian distribution and a sequence of reductions. First, let $X_i$ be independent random variables distributed as $\cN(0,1)$.  Let $X_i'$ be independent random variables distributed as a \emph{truncated normal}, the distribution obtained from $\cN(0,1)$ by removing the mass at points $> \sqrt{\log( s/\eta)}$ and rescaling. We begin by noting that
\begin{equation}\label{eq:prob-bound-xxprime}
\Pr[ | \sum_i  a_i X_i' | < \delta ] \le \left( 1+\eta  \right) \Pr[ |\sum_i a_i X_i | < \delta ].
\end{equation}
To see this, let us write $\bm{X}$ to be the vector $(X_1, X_2, \dots, X_s)$. If $f(\bm{X})$ and $f'(\bm{X})$ denote the probability density functions of the multi-dimensional normal and the truncation version respectively, by the choice of the truncation, we have, for all $\bm{X}$,
\[ f' (\bm{X} ) \le \left( 1+\frac{\eta}{s} \right)^s f( \bm{X} ) \le (1+\eta) f(\bm{X}). \]
Furthermore, if any of the coordinates of $\bm{X}$ is $> \sqrt{\log (s/\eta) }$, we have $f' (\bm{X} ) = 0$. 
Next, note that the LHS of~\eqref{eq:prob-bound-xxprime} can be written out as an integral, and every term also appears on the probability on the right, albeit with the measure $f'$ replaced with $f$. Thus, using the bound above,~\eqref{eq:prob-bound-xxprime} follows.

Next, we discretize the interval $[ -\lceil  \sqrt{\log (s/\eta)} \rceil, \lceil  \sqrt{\log (s/\eta)} \rceil ] $ into integral multiples of $1/M$, where $M$ is a parameter we will choose later. To the point $i/M$, we assign the mass that the truncated Gaussian assigns to the interval $[i - \frac{1}{2M}, i+\frac{1}{2M}] $. We call this discrete distribution $\cD$.  Let $Y_1, Y_2, \dots, Y_s$ be IID samples from $\cD$. We can write $Y_i = X_i' + Z_i$, where $X_i'$ is drawn from the truncated Gaussian, and $|Z_i| \le \frac{1}{2M}$. Thus, for $M > \frac{s}{\delta}$, we claim that
\[ \Pr[ |\sum_i a_i Y_i| < \delta] \le \Pr[ | \sum_i a_i X_i' | <  2\delta ]. \]
This follows because $|\sum_{i \in [s]} a_i Z_i| \le \frac{s}{2M} < \delta$.  Using the earlier bounds, this implies that
\[ \Pr[ |\sum_i a_i Y_i| < \delta] \le (1+\eta) \cdot 2\delta. \]
This almost completes the proof, because we have obtained a discrete distribution with the desired anti-concentration bound. But as such, note that the probability values can require very high precision. This turns out to be easy to correct: we can take $\cY$ to be any distribution on the same support as $\cD$ with $d_{TV} (\cY, \cD) < \epsilon$, and we can conclude (using a coupling argument and a union bound), that if $W_1, W_2, \dots, W_s$ are drawn IID from $\cY$,
\[ \Pr[ |\sum_i a_i W_i| < \delta] \le \Pr[ | \sum_i a_i Y_i | <  \delta ] + \epsilon s. \]
To complete the argument, we need to ensure that $\epsilon < \frac{\eta}{s}$; this can be achieved using probability values with bit complexity only $O(\log (s/\eta))$, thus completing the proof.
\end{proof}

\subsection{Projective Non-negative Matrix Factorization}
In projective non-negative matrix factorization, the basis matrix $\bm{U}\in \RR^{d\times k}$ is constrained to have non negative entries. More formally, the mathematical program formulation for Projective Non-negative Matrix Factorization (NMF) is
\begin{align}
\label{prog: NMF}\min&: \|\bm{A}-\bm{U}\bm{U}^T\bm{A}\|_F^2 \tag{NMF}\\
    \bm{U}^T\bm{U}&=\bm{I}_k, \; \bm{U}\in \RR^{d\times k}_{\geq 0}.
\end{align}
There is a alternate formulation of \ref{prog: NMF} that better aligns with the name of the problem and is well suited to apply our framework:
\begin{align}
    \label{prog: NMF_original} \min&: \|\bm{A}-\bm{W}\bm{H}\|_F^2 \tag{NMF-alternate}\\
    \bm{W}\in \RR^{d\times k}_{\geq 0}& \textnormal{ has orthogonal columns }.
\end{align}
\begin{lemma}
    The programs \ref{prog: NMF} and \ref{prog: NMF_original} are equivalent.
\end{lemma}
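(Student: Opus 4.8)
The plan is to show that the two programs have equal optimal value, with optimal solutions transferable between them, assuming (as throughout) that $k\le d$. Two elementary observations drive the argument. First, a matrix $\bm{U}\in\RR^{d\times k}$ satisfies $\bm{U}^T\bm{U}=\bm{I}_k$ and $\bm{U}\ge 0$ if and only if its columns are unit vectors with pairwise disjoint supports (non-negativity forces any two orthogonal columns to have disjoint supports). Second, for any $\bm{W}$ and free $\bm{H}$ we have $\min_{\bm{H}}\|\bm{A}-\bm{W}\bm{H}\|_F^2=\|(\bm{I}-\bm{P}_{\bm{W}})\bm{A}\|_F^2$, the residual of the orthogonal projection $\bm{P}_{\bm{W}}$ onto $\textnormal{ColumnSpan}(\bm{W})$, attained at $\bm{H}=\bm{W}^+\bm{A}$. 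By \Cref{lem:csa-alt-equiv} applied with $p=2$ and $\cS=\{\bm{U}\bm{U}^T : \bm{U}^T\bm{U}=\bm{I}_k,\ \bm{U}\ge 0\}$, \ref{prog: NMF} is equivalent to $\min\{\|\bm{A}-\bm{U}\bm{H}\|_F^2 : \bm{U}^T\bm{U}=\bm{I}_k,\ \bm{U}\ge 0,\ \bm{H}\in\RR^{k\times n}\}$. Comparing this with \ref{prog: NMF_original}, the only difference is that the left factor there is merely required to have orthogonal (rather than orthonormal) non-negative columns; since rescaling a nonzero column of $\bm{W}$ is absorbed into $\bm{H}$, the real content of the lemma is that allowing orthogonal non-negative $\bm{W}$ --- in particular zero columns, hence a column span of dimension $<k$ --- does not help.

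For the easy inequality, every feasible $\bm{U}$ for \ref{prog: NMF} yields a feasible pair $(\bm{W},\bm{H})=(\bm{U},\bm{U}^T\bm{A})$ for \ref{prog: NMF_original} of the same cost, so $\mathrm{opt}(\ref{prog: NMF_original})\le\mathrm{opt}(\ref{prog: NMF})$. For the reverse, take an optimal $(\bm{W}^*,\bm{H}^*)$ for \ref{prog: NMF_original}; by the second observation we may assume $\bm{H}^*=(\bm{W}^*)^+\bm{A}$, so the cost is $\|(\bm{I}-\bm{P})\bm{A}\|_F^2$ with $\bm{P}$ the projection onto $V:=\textnormal{ColumnSpan}(\bm{W}^*)$. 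Dropping the zero columns of $\bm{W}^*$ and normalizing the rest gives unit vectors $u_1,\dots,u_j\ge 0$ ($j\le k$) with pairwise disjoint supports and $\textnormal{Span}(u_1,\dots,u_j)=V$. The key step is to extend this list to $k$ unit vectors $u_1,\dots,u_k\ge 0$ with pairwise disjoint supports whose span contains $V$: while the supports of the current vectors do not cover $[d]$, append a standard basis vector $e_t$ for some uncovered coordinate $t$; once the supports cover $[d]$ (so $\sum_i|\supp(u_i)|=d\ge k$, while there are still fewer than $k$ vectors, forcing some $\supp(u_i)$ to have size $\ge 2$), repeatedly pick such a $u_i$, partition $\supp(u_i)$ into two nonempty sets, and replace $u_i$ by the two normalized restrictions of $u_i$ to these sets. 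Each replacement preserves non-negativity and disjointness of supports, raises the count by one, and does not shrink the span since $u_i$ is a positive combination of its two restrictions; as $d\ge k$, the process reaches exactly $k$ vectors. Setting $\bm{U}:=[u_1,\dots,u_k]$ gives a feasible point of \ref{prog: NMF} with $\textnormal{ColumnSpan}(\bm{U})\supseteq V$, hence cost $\|(\bm{I}-\bm{U}\bm{U}^T)\bm{A}\|_F^2\le\|(\bm{I}-\bm{P})\bm{A}\|_F^2=\mathrm{opt}(\ref{prog: NMF_original})$. Therefore $\mathrm{opt}(\ref{prog: NMF})\le\mathrm{opt}(\ref{prog: NMF_original})$, and the two optima coincide.

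The one genuinely non-trivial point --- and the step I expect to need the most care --- is the rank-deficient case: the orthogonality constraint in \ref{prog: NMF_original} permits zero columns, so an optimal $\bm{W}^*$ may span a subspace of dimension $j<k$, and a set of $j$ disjoint-support unit vectors cannot in general be completed to $k$ such vectors while leaving the original ones untouched (e.g.\ $u_1=(1,1)/\sqrt{2}$ in $\RR^2$ with $k=2$, whose support already covers $[2]$). The split-and-append procedure above circumvents this at the cost of perturbing the $u_i$'s, which is harmless since we only need the completed span to \emph{contain} $V$. All remaining steps are routine bookkeeping; and since every construction above is explicit, an optimal (or near-optimal) solution of either program is converted in polynomial time into one of the other with the same objective value.
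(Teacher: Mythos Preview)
Your approach matches the paper's: both invoke \Cref{lem:csa-alt-equiv} to pass from \ref{prog: NMF} to the factorization form $\min\{\|\bm{A}-\bm{U}\bm{H}\|_F^2:\bm{U}^T\bm{U}=\bm{I}_k,\ \bm{U}\ge 0\}$, and then argue that the unit-norm constraint on the columns can be dropped since $\bm{H}$ absorbs scalings. The paper stops at that sentence; you go further and treat the edge case the paper glosses over, namely that \ref{prog: NMF_original} allows zero columns in $\bm{W}$, so an optimal $\bm{W}^*$ may have column rank $j<k$ and cannot simply be normalized columnwise into a feasible $\bm{U}$. Your extension argument---pad with standard basis vectors, then split supports---is the right idea and makes the equivalence airtight where the paper's one-line justification does not.

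One small slip in the procedure: your first phase (``append $e_t$ until the supports cover $[d]$'') can overshoot $k$. For instance, with $j=1$, $u_1=e_1$, $d=100$, $k=2$, the loop appends $e_2,\dots,e_{100}$, producing $100>k$ vectors, and the splitting phase only increases the count. The trivial fix is to also terminate the first phase as soon as you have $k$ vectors; at that point the original $u_1,\dots,u_j$ are untouched, so the span still contains $V$, and you are done without ever splitting. With this amendment your proof is complete and, on the rank-deficient point, more careful than the paper's.
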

\begin{proof}
    Using \Cref{lem:csa-alt-equiv}, we know that \ref{prog: NMF} is equivalent to 
    \begin{align*}
        \min&: \|\bm{A}-\bm{U}\bm{H}\|_F^2 \\
    \bm{U}^T\bm{U}&=\bm{I}_k, \; \bm{U}\in \RR^{d\times k}_{\geq 0}.
\intertext{
    Since $\bm{H}$ is unconstrained, it can absorb the normalization of the columns of $\bm{U}$. This gives 
}
       \min&: \|\bm{A}-\bm{W}\bm{H}\|_F^2 \\
   \bm{W}\in \RR^{d\times k}_{\geq 0} & \textnormal{ has orthogonal columns }
    \end{align*}
    which is exactly \Cref{prog: NMF_original}.
\end{proof}
\begin{lemma}
\label{lem:NMF-W-characterization}
The set of matrices 
\begin{align}
    \cW&:=\{\bm{W}\in \RR^{d\times k}_{\geq 0}:  \bm{W} \textnormal{ has orthogonal columns }\}
\intertext{ is equal to the set }
     \overline{\cW}&:= \{\bm{W}\in \RR^{d\times k}_{\geq 0}: \|\bm{W}_{i,.}\|_0\leq 1 \quad \forall i\in [d]\}.
 \end{align}
\end{lemma}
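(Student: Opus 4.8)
The plan is to prove the set equality $\cW = \overline{\cW}$ by establishing the two inclusions separately; in both directions the single decisive fact is that a sum of non-negative reals vanishes only if every summand vanishes.

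First I would handle $\overline{\cW} \subseteq \cW$. Take any $\bm{W} \in \overline{\cW}$, so each row $\bm{W}_{i,.}$ has at most one non-zero entry. Fix distinct column indices $j \neq j'$. For every row $i$, at least one of $W_{i,j}, W_{i,j'}$ is zero (otherwise row $i$ would contain two non-zeros), hence $W_{i,j} W_{i,j'} = 0$. Summing over $i$ gives $\langle \bm{W}_{.,j}, \bm{W}_{.,j'}\rangle = \sum_i W_{i,j} W_{i,j'} = 0$, so the columns of $\bm{W}$ are pairwise orthogonal; non-negativity is inherited from $\overline{\cW}$, so $\bm{W} \in \cW$.

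For the reverse inclusion $\cW \subseteq \overline{\cW}$, take $\bm{W} \in \cW$. For each pair $j \neq j'$ orthogonality gives $0 = \langle \bm{W}_{.,j}, \bm{W}_{.,j'}\rangle = \sum_i W_{i,j} W_{i,j'}$, and since $\bm{W} \geq 0$ every summand $W_{i,j} W_{i,j'}$ is non-negative; therefore each summand is zero, i.e.\ $W_{i,j} W_{i,j'} = 0$ for all $i$ and all $j \neq j'$. Consequently, in any fixed row $i$ no two entries can be simultaneously non-zero, so $\|\bm{W}_{i,.}\|_0 \leq 1$ for every $i$, which means $\bm{W} \in \overline{\cW}$.

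The argument is genuinely short, so the only point worth flagging is conceptual rather than technical: the claim is false for general real matrices — orthogonality of columns imposes no row-support restriction on its own — so the proof must use non-negativity in an essential way, and it does so exactly through the ``sum of non-negative terms equals zero'' step, which is the common engine of both inclusions. The one routine item I would double-check is that $\bm{W}$ with orthogonal columns need not have \emph{unit} columns here, so I avoid invoking $\bm{W}^T\bm{W} = \bm{I}_k$ and argue only from $\langle \bm{W}_{.,j}, \bm{W}_{.,j'}\rangle = 0$ for $j \neq j'$.
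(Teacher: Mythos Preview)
Your proof is correct and follows essentially the same approach as the paper: both directions hinge on the observation that, under non-negativity, $\langle \bm{W}_{.,j}, \bm{W}_{.,j'}\rangle = \sum_i W_{i,j}W_{i,j'} = 0$ forces every product $W_{i,j}W_{i,j'}$ to vanish. The only cosmetic difference is that the paper phrases the $\cW \subseteq \overline{\cW}$ direction as a contradiction (a row with two positive entries would make the inner product strictly positive), whereas you argue it directly.
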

\begin{proof}
    For any $\bm{W}\in \cW$, if there exists a row $i\in [d]$ and two distinct indices $j,j' \in [k]$ such that $\bm{W}_{i,j},\bm{W}_{i,j'}\neq 0$, then by the non-negativity constraint, these non zero values are in fact strictly positive. The dot product of the columns $\bm{W}_{.,j}$ and $\bm{W}_{.,j'}$is at least $\bm{W}_{i,j}\cdot\bm{W}_{i,j'}>0$ which contradicts the orthogonality of the columns of $\bm{W}$. This implies that there is at most one non-zero entry in every row of $\bm{W}$ which further implies that $\bm{W}\in \overline{\cW}$. 

    For any $\bm{W}\in \overline{\cW}$, the orthogonality of the columns is straight forward because for any two distinct indices $j,j'\in [k]$, the dot product of the columns $\bm{W}_{.,j}$ and $\bm{W}_{.,j'}$ is zero because either of $\bm{W}_{i,j}, \bm{W}_{i,j'}$ is equal to zero for every $i\in [d]$.
\end{proof}

\begin{lemma}
\label{lem:NMF-regression}
    For any given $\bm{B}\in \RR^{d\times r}$ and $\bm{H}\in \RR^{k\times r}$, we can solve the program 
    \begin{align*}
        \min&: \|\bm{B}-\bm{W}\bm{H}\|_F^2 \\
   \bm{W}\in \RR^{d\times k}_{\geq 0} & \textnormal{ has orthogonal columns }
    \end{align*}
     exactly in time $O(dkr)$.
\end{lemma}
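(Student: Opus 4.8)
The plan is to exploit the combinatorial characterization of $\cW$ from \Cref{lem:NMF-W-characterization}: a non-negative $\bm{W}$ has orthogonal columns if and only if every row $\bm{W}_{i,.}$ has at most one non-zero entry. Hence the feasible set $\cW = \overline{\cW}$ is a product over the $d$ rows (any independent per-row choice of an active coordinate and a non-negative value produces a matrix in $\cW$), and since $\norm{\bm{B}-\bm{W}\bm{H}}_F^2 = \sumL_{i=1}^d \norm{\bm{B}_{i,.}-\bm{W}_{i,.}\bm{H}}_2^2$, the problem decomposes into $d$ independent row subproblems. First I would reduce to solving, for each $i\in[d]$, the subproblem of minimizing $\norm{\bm{B}_{i,.}-\bm{W}_{i,.}\bm{H}}_2^2$ over non-negative row vectors $\bm{W}_{i,.}$ having at most one non-zero coordinate.

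Next, I would solve each row subproblem by brute force over the $k$ possible locations of the non-zero coordinate (plus the all-zero option, which is always available). Fixing the active coordinate $j$, the subproblem becomes the one-dimensional non-negative least squares problem $\min_{w\ge 0}\norm{\bm{B}_{i,.}-w\,\bm{H}_{j,.}}_2^2$, which has a closed form: if $\bm{H}_{j,.}=0$ the cost is $\norm{\bm{B}_{i,.}}_2^2$; otherwise the unconstrained minimizer is $w^\star=\langle \bm{B}_{i,.},\bm{H}_{j,.}\rangle/\norm{\bm{H}_{j,.}}_2^2$, so we set $w=\max(0,w^\star)$, achieving cost $\norm{\bm{B}_{i,.}}_2^2 - \big(\max(0,\langle \bm{B}_{i,.},\bm{H}_{j,.}\rangle)\big)^2/\norm{\bm{H}_{j,.}}_2^2$. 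For row $i$ we then select the coordinate $j\in[k]$ minimizing this cost (equivalently, maximizing $\max(0,\langle \bm{B}_{i,.},\bm{H}_{j,.}\rangle)^2/\norm{\bm{H}_{j,.}}_2^2$), defaulting to the all-zero row when no inner product is positive, and record the corresponding entry of $\bm{W}$.

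For the running time, I would precompute $\norm{\bm{H}_{j,.}}_2^2$ for all $j\in[k]$ in $O(kr)$ time, and the matrix $\bm{B}\bm{H}^T\in\RR^{d\times k}$, whose $(i,j)$ entry is exactly $\langle \bm{B}_{i,.},\bm{H}_{j,.}\rangle$, in $O(dkr)$ time. Given these quantities, evaluating the $k$ candidate costs for a fixed row and picking the best takes $O(k)$ time, hence $O(dk)$ over all rows, and filling in $\bm{W}$ takes $O(d)$ additional time. The total is $O(dkr)$, as claimed.

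There is no real obstacle here; the only points requiring care are the justification that the row-wise decoupling is lossless — which is precisely the content of \Cref{lem:NMF-W-characterization}, together with the observation that zero columns of $\bm{W}$ are permitted and cause no issue — and the routine handling of the two boundary cases ($\bm{H}_{j,.}=0$, and a negative unconstrained optimum forcing $w=0$) in the one-dimensional least squares step.
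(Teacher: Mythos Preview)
Your proposal is correct and follows essentially the same argument as the paper: invoke \Cref{lem:NMF-W-characterization} to decouple the problem into $d$ independent one-sparse non-negative row regressions, solve each by trying all $k$ coordinates with the closed-form optimum $w=\max(0,\langle \bm{B}_{i,.},\bm{H}_{j,.}\rangle/\|\bm{H}_{j,.}\|_2^2)$, and observe that the dominant cost is the $O(dkr)$ computation of the inner products $\bm{B}\bm{H}^T$. Your treatment of the edge cases ($\bm{H}_{j,.}=0$ and the all-zero row) is slightly more explicit than the paper's, but there is no substantive difference in approach.
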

\begin{proof}
    Using \Cref{lem:NMF-W-characterization}, we can re-write the optimization problem as 
    \begin{align*}
         \min&: \|\bm{B}-\bm{W}\bm{H}\|_F^2 \\
   \|\bm{W}\|_{0} & \leq 1 \quad \forall i\in [d],\; \bm{W}\in \RR^{d\times k}_{\geq 0}. 
    \end{align*}
    Since the rows of $\bm{W}$ are independent variables, we can decompose the problem into 
    \begin{align*}
        \sumL_{i=1}^d \min_{ w_i\in \RR^{k}_{\geq 0},\,\|w_i\|_0\leq 1}: \|b_i-\bm{H}^T w_i\|_2^2
    \end{align*}
    where $b_i,w_i$ are the $i^{th}$ columns of $\bm{B}^T$ and $\bm{W}^T$ respectively. Each problem $\min_{ w_i\in \RR^{k}_{\geq 0},\,\|w_i\|_0\leq 1}: \|b_i-\bm{H}^T w_i\|_2^2$ can be solved by looking at the $k$ cases for the non-zero entry if $w_i$. For every choice of non-zero entry, say $j\in [k]$, the resulting minimization problem is $min_{\lambda \geq 0}: \|b_i - \lambda h_{j}\|_2^2$ where $h_j$ is the $j^{th}$ column of $\bm{H}^T$. The optimal choice of $\lambda$ is $\max(0,\langle b_i,h_j\rangle/\|h_j\|_2^2)$. The only computation we had to do is to evaluate the dot products between $b_i,h_j$ for $i\in [d], j\in [k]$ which takes $O(dkr)$ time. 
\end{proof}

\begin{theorem}[Additive approximation for NMF] \label{thm:NMF-additive}
Given an instance $\bm{A}\in \RR^{d\times n}$ of Non-negative matrix factorization, there is an algorithm that computes a $\bm{U}\in \RR^{d\times k}_{\geq 0}, \; \bm{U}^T\bm{U}=\bm{I}_k$ such that 
\begin{align*}
    \|\bm{A}-\bm{U}\bm{U}^T\bm{A}\|_F^2 \leq (1+\varepsilon)\cdot \opt + O(\delta\cdot \|\bm{A}\|_F^2)
\end{align*}
in time $O(dk^2/\varepsilon)\cdot (1/
\delta)^{O(k^2/\varepsilon)}$. For any $0<\delta<1$. 
\end{theorem}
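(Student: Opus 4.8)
This theorem is a direct instantiation of the general framework \Cref{lem:structural-CSA} with $p=2$ and $\cS$ the set of PNMF projections $\{\bm{U}\bm{U}^T:\bm{U}\in\RR^{d\times k}_{\ge 0},\ \bm{U}^T\bm{U}=\bm{I}_k\}$, so the plan is simply to supply the two ingredients the lemma needs and then read off the parameters. The first ingredient is a coreset satisfying \Cref{eqn:approximate_cost}: since $p=2$, \Cref{lem:coreset_p=2} lets us take $\bm{B}$ to be the rank-$r$ truncated SVD of $\bm{A}$ with $r=k+k/\varepsilon=O(k/\varepsilon)$ columns, which satisfies \Cref{eqn:approximate_cost} with additive constant $c\ge 0$ and also gives $\|\bm{B}\|_F^2\le(1+\varepsilon)\|\bm{A}\|_F^2$; computing $\bm{B}$ takes $T_s=T_0$ time, polynomial in the input and dominated by the enumeration below.

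The second ingredient is an exact solver for the constrained regression \Cref{eqn:constrained_regression} for this $\cS$. Here I would use the equivalence of \ref{prog: NMF} with the factorized form \ref{prog: NMF_original}, together with the characterization $\cW=\overline{\cW}$ of \Cref{lem:NMF-W-characterization}: after dropping the harmless column-normalization constraint, solving \Cref{eqn:constrained_regression} for a guessed coefficient matrix $\bm{C}$ amounts to minimizing $\|\bm{B}-\bm{W}\bm{C}\|_F^2$ over $\bm{W}\in\cW$, which by \Cref{lem:NMF-regression} can be done exactly in $T_r=O(dkr)=O(dk^2/\varepsilon)$ time by decomposing over rows. The one point of care is that \Cref{lem:NMF-regression} optimizes over matrices with merely \emph{orthogonal} (not orthonormal) non-negative columns; having obtained the minimizer $\widehat{\bm{W}}$, I rescale its nonzero columns to unit norm (padding with unused standard basis vectors if some columns vanish) to obtain an orthonormal non-negative $\widehat{\bm{U}}$ whose span contains that of $\widehat{\bm{W}}$, so that $\|\bm{B}-\widehat{\bm{U}}\widehat{\bm{U}}^T\bm{B}\|_F^2\le\|\bm{B}-\widehat{\bm{W}}\bm{C}\|_F^2$. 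Because the optimal PNMF basis $\bm{U}^*$ is itself a member of $\cW$, every inequality in the proof of \Cref{lem:structural-CSA} continues to hold with $\bm{W}$-minimization in place of $\bm{U}$-minimization, so this relaxation is free; this is the step I expect to require the most care to state cleanly.

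With both ingredients in hand, \Cref{lem:structural-CSA} applied with granularity $\delta$ produces, in time $T_s+T_r\cdot O((1/\delta)^{kr})$, a feasible $\widehat{\bm{U}}$ with $\|\bm{A}-\widehat{\bm{U}}\widehat{\bm{U}}^T\bm{A}\|_F^2\le(1+\varepsilon)\opt+\Delta$ where $\Delta=(1+\varepsilon)\|\bm{A}\|_F^2\big((1+\delta)^2-1\big)$. Since $kr=O(k^2/\varepsilon)$, the running time is $O(dk^2/\varepsilon)\cdot(1/\delta)^{O(k^2/\varepsilon)}$ (the one-time SVD cost $T_0$ being lower order), and since $(1+\delta)^2-1=2\delta+\delta^2=O(\delta)$ for $0<\delta<1$ we get $\Delta=O(\delta\|\bm{A}\|_F^2)$, which is exactly the claimed guarantee. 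Everything beyond the orthogonal-versus-orthonormal bookkeeping is just substituting $p=2$ and $r=O(k/\varepsilon)$ into the framework.
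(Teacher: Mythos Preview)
Your proposal is correct and follows essentially the same route as the paper: reduce to the coreset $\bm{B}$ via \Cref{lem:coreset_p=2}, enumerate coefficient matrices as in \Cref{lem:structural-CSA}, and solve each constrained regression using \Cref{lem:NMF-regression}, reading off the same running time and additive error. Your explicit treatment of the orthogonal-versus-orthonormal conversion (rescaling $\widehat{\bm{W}}$ and padding zero columns) is in fact more careful than the paper, which simply asserts the output is orthonormal without spelling this out.
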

\begin{proof}
   Let $\bm{U}^*$ be the optimal solution to \ref{prog: NMF}. This implies that $\bm{H}={\bm{U}^T}^*\bm{A}$ is an optimal solution to \ref{prog: NMF_original}. We first replace the instance with a smaller instance $\bm{B}$. Then we search for every row of $\bm{H}$ exactly as in the proof of \Cref{lem:structural-CSA} to obtain a solution $\bm{U}\in \RR^{d\times k}_{\geq 0}, \; \bm{U}^T\bm{U}=\bm{I}_k$ such that 
   \begin{align*}
    \|\bm{A}-\bm{U}\bm{U}^T\bm{A}\|_F^2 \leq (1+\varepsilon)\cdot \opt + O(\delta\cdot \|\bm{A}\|_F^2)
\end{align*}
in time $T_0+T_1\cdot(1/\delta)^{O(rk)}$. Where $T_0$ is the time required to obtain $\bm{B}$ and $T_1$ is the time required to solve for the optimal $\bm{W}$ in the program 
    \begin{align*}
        \min&: \|\bm{B}-\bm{W}\bm{H}\|_F^2 \\
   \bm{W}\in \RR^{d\times k}_{\geq 0} & \textnormal{ has orthogonal columns }
    \end{align*}
    We know that $T_1= O(dkr)$ using \Cref{lem:NMF-regression} and $T_0=O(nrd^2\cdot H)$ from \Cref{lem:svd-time}. We hide $T_0$ as it is negligible. 
\end{proof}

\begin{theorem}[Multiplicative approximation for NMF]\label{thm:NMF-multiplicative}
Given an instance $\bm{A}\in \RR^{d\times n}$ of Non-negative matrix factorization with integer entries of absolute value at most $\gamma$ in $\bm{A}$, there is an algorithm that computes a $\bm{U}\in \RR^{d\times k}_{\geq 0}, \; \bm{U}^T\bm{U}=\bm{I}_k$ such that 
\begin{align*}
    \|\bm{A}-\bm{U}\bm{U}^T\bm{A}\|_F^2 \leq (1+\varepsilon)\cdot \opt 
\end{align*}
in time $(nd\gamma/\varepsilon)^{O(k^3/\varepsilon)}$. 
\end{theorem}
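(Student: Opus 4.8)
The plan is to mirror the argument used for \Cref{thm:CSE-multiplicative}: we already have the additive guarantee of \Cref{thm:NMF-additive}, namely a feasible $\bm{U}$ with $\|\bm{A}-\bm{U}\bm{U}^T\bm{A}\|_F^2 \le (1+\varepsilon)\opt + O(\delta\|\bm{A}\|_F^2)$ computable in time $O(dk^2/\varepsilon)\cdot(1/\delta)^{O(k^2/\varepsilon)}$ for any $\delta\in(0,1)$, and the only remaining task is to choose $\delta$ small enough that the additive term $O(\delta\|\bm{A}\|_F^2)$ is absorbed into $O(\varepsilon)\cdot\opt$, while keeping $1/\delta$ at most $\poly(nd\gamma)/\varepsilon$ so that the running time stays $(nd\gamma/\varepsilon)^{O(k^3/\varepsilon)}$.

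The key steps, in order, are the following. (1) Observe that every feasible $\bm{P}=\bm{U}\bm{U}^T$ (with $\bm{U}\in\RR^{d\times k}$, $\bm{U}^T\bm{U}=\bm{I}_k$) is a rank-$k$ orthogonal projection, so $\|\bm{A}-\bm{P}\bm{A}\|_F^2=\|(\bm{I}-\bm{P})\bm{A}\|_F^2 \ge \|\bm{A}-\bm{A}_k\|_F^2$, and hence $\opt \ge \|\bm{A}-\bm{A}_k\|_F^2$. (2) Since $\bm{A}$ has integer entries of magnitude at most $\gamma$ and (as in the CSE case) $\textnormal{rank}(\bm{A})\ge k+1$, \Cref{lemma:subspace_error_lb} gives $\|\bm{A}\|_F^2/\|\bm{A}-\bm{A}_k\|_F^2 \le (nd\gamma)^{O(k)}$; moreover $\|\bm{A}-\bm{A}_k\|_F^2$ is computed exactly in polynomial time via the SVD. (3) Set $\delta := \varepsilon\,\|\bm{A}-\bm{A}_k\|_F^2/\|\bm{A}\|_F^2$, which lies in $(0,1)$ because $\|\bm{A}-\bm{A}_k\|_F^2 \le \|\bm{A}\|_F^2$. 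Invoking \Cref{thm:NMF-additive} with this value of $\delta$, the additive term becomes $O(\delta\|\bm{A}\|_F^2)=O(\varepsilon\|\bm{A}-\bm{A}_k\|_F^2)\le O(\varepsilon)\cdot\opt$, so the returned $\bm{U}$ satisfies $\|\bm{A}-\bm{U}\bm{U}^T\bm{A}\|_F^2 \le (1+O(\varepsilon))\opt$; rescaling $\varepsilon$ by a constant yields the claimed $(1+\varepsilon)$-approximation. (4) For the running time, $1/\delta = \|\bm{A}\|_F^2/(\varepsilon\|\bm{A}-\bm{A}_k\|_F^2)\le (nd\gamma)^{O(k)}/\varepsilon$, so $(1/\delta)^{O(k^2/\varepsilon)} = (nd\gamma/\varepsilon)^{O(k^3/\varepsilon)}$, which dominates the $O(dk^2/\varepsilon)$ prefactor from \Cref{thm:NMF-additive} and the polynomial-time SVD.

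I do not anticipate a genuine obstacle here: essentially all of the work is already packaged in \Cref{thm:NMF-additive}, which in turn rests on the coreset-guess-solve framework (\Cref{lem:structural-CSA}), the structural characterization of valid $\bm{W}$'s (\Cref{lem:NMF-W-characterization}), and the $O(dkr)$-time constrained regression solver (\Cref{lem:NMF-regression}); the step remaining here is just the $\delta$-calibration above. The only points needing a word of care are the degenerate case $\textnormal{rank}(\bm{A})\le k$ — where $\|\bm{A}-\bm{A}_k\|_F^2=0$ and the additive-to-multiplicative conversion breaks, which we exclude exactly as in \Cref{thm:CSE-multiplicative} since \Cref{lemma:subspace_error_lb} is only meaningful when $\rho-k\ge 1$ — and the implicit assumption $\gamma=\poly(n)$, needed to keep the exponent at $O(k^3/\varepsilon)$ rather than letting a $\log\gamma$ factor creep in; both are inherited from the CSE argument and are harmless in the intended regime.
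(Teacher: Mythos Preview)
Your proposal is correct and follows essentially the same approach as the paper: invoke \Cref{lemma:subspace_error_lb} to get $\|\bm{A}\|_F^2/\|\bm{A}-\bm{A}_k\|_F^2\le (nd\gamma)^{O(k)}$, set $\delta=\varepsilon\|\bm{A}-\bm{A}_k\|_F^2/\|\bm{A}\|_F^2$, and plug into the additive guarantee of \Cref{thm:NMF-additive}. Your write-up is in fact more careful than the paper's (which contains a typo, citing \Cref{cor:CSE-additive} in place of \Cref{thm:NMF-additive}), and your explicit handling of $\opt\ge\|\bm{A}-\bm{A}_k\|_F^2$ and the rank-deficient edge case are welcome clarifications.
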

\begin{proof}
Using \Cref{lemma:subspace_error_lb}, we know that $\|\bm{A}\|_F^2/\|\bm{A}-\bm{A}_k\|_F^2\leq (nd\gamma)^{O(k)}$. Setting $\delta= \varepsilon\|\bm{A}-\bm{A}_k\|_F^2/\|\bm{A}\|_F^2\geq \varepsilon (nd\gamma)^{-O(k)}$ in \Cref{cor:CSE-additive} gives the desired time complexity.  
\end{proof}

\subsection{$k$-means clustering \cite{cohen2015dimensionality}}\label{sec:k-means}
In the $k$-means problem, we are given a collection of data points $a_1,\dots,a_n \in \RR^d$. The objective is to find $k$-centers $c_1,\dots,c_k \in \RR^d$ and an assignment $\pi: [n]\rightarrow [k]$ that minimizes:
\begin{align}
\label{prog:k-means-geo}    \sumL_{i=1}^n \|a_i-c_{\pi(i)}\|_2^2 \tag{$k$-means}.
\end{align}
Let $\bm{A}\in \RR^{n\times d}$ and  $\bm{C} \in \RR^{k\times d}$ matrices with $a_i$ and $c_i$ as their $i^{\textnormal{th}}$ rows respectively (note that this differs from the notation we used for previous applications). Let $\bm{\Pi}\in \RR^{n\times k}$ be the matrix such that $\bm{\Pi}_{i,j}=\mathbbm{1}[\pi(i)=j]$. Using this notation, the \ref{prog:k-means-geo} problem can be written as 
\begin{align}
    \label{prog:k-means-fac} \min&: \|\bm{A}- \bm{\Pi}\bm{C}\|_F^2 \tag{$k$-means-matrix}\\
    &\textnormal{Each row of } \bm{\Pi} \textnormal{ is a standard basis vector}. \nonumber
\end{align}
 Observe that \ref{prog:k-means-fac} is a special case of \ref{prog: NMF_original} where $\bm{W}$ is additionally constrained to have all non-zero entries to be equal to $1$. Also, the \ref{prog:k-means-geo} and \ref{prog:k-means-fac} correspond to the \ref{prog:CSA-geo} and \ref{prog:CSA-factorization} formulations of the same problem. The corresponding \ref{prog:CSA} version is 
\begin{align*}
 \label{prog:k-means-csa} \min&: \|\bm{A}-\bm{U}\bm{U}^T\bm{A}\|_F^2 \tag{$k$-means-CSA}\\
 \bm{U}_{i,j}&=1/\sqrt{\|U_{.,j}\|_0} \quad \forall i\in [n],\, j\in [k].  
\end{align*}
The three main steps in our algorithm are:
\begin{enumerate}
    \item \textbf{Reduction:} The first step is to reduce the number of rows and columns of the target matrix $\bm{A}$. 
    \begin{enumerate}
        \item \textbf{Columns:} Replace the matrix $\bm{A}$ with the matrix $\bm{B}$ as in \Cref{lem:coreset_p=2}. This reduces the number of columns (dimension of the data-points) to $r=k+k/\varepsilon$. 
        \item \textbf{Rows:} For any fixed set of centers (selected from the rows of) $\bm{C}$, the cost induced by the centers is defined as
\begin{align*}
  \textnormal{Cost}(\bm{A}, \bm{C}):= \sumL_{i=1}^n \textnormal{dist}(a_i,\bm{C})^2.
\end{align*}
Where $\textnormal{dist}(a,\bm{C}):= \min_{c\in \bm{C}}\|a-c\|_2$. A strong coreset for the $k$-means instance $\bm{A}$ is a subset $S\subseteq [n]$ of indices and weights $w_i$ corresponding to each index $i\in S$ such that for any set of centers $\bm{C}$, we have 
\begin{align*}
   \textnormal{Cost}_{w,S}(\bm{A}, \bm{C}):=\sumL_{i\in S}w_i \textnormal{dist}(a_i, \bm{C}) \in (1\pm \varepsilon)\cdot \textnormal{Cost}(\bm{A}, \bm{C}).
\end{align*}
Coresets for $k$-means of optimal size $\widetilde{O}(k\varepsilon^{-2}\min\{\sqrt{k}, \varepsilon^{-2}\})$ are known (See \cite{cohenkmeanscoreset2022} for upper-bound and \cite{Huang2024coresetoptimallb} for matching lower-bound ). Any algorithm that efficiently computes a coreset of size $q=\poly(k/\varepsilon)$ can be used as a black box for our purposes. After using such a coreset, the new formulation is 
    \begin{align}
\label{prog:k-means-reduced}     \min&: \|\bm{B}- \bm{\Pi}\bm{C}\|_F^2 \tag{k-means-reduced}\\
 \label{eqn:row-condition-k-means-reduced}   &\textnormal{Each row of } \bm{\Pi} \textnormal{ has exactly one non-zero entry equal to } w_i. 
\end{align}
where $\bm{B}\in \RR^{q\times r}, \; \bm{\Pi}\in \RR^{q\times k}$ with their rows indexed by the set $S$ defined by the coreset and the rows of $\bm{B}$ are the scaled rows of $\bm{A}$ according to the weight defined by the coreset for that row.  
    \end{enumerate} 

\item \textbf{Enumeration:} A naive approach is to simply enumerate all the $k^q$ possible $\bm{\Pi}$ matrices by choosing the position of the non-zero element in each row. Simply put, we go through all possible $k$-clusterings of the coreset elements. Optimal choice of centers can be computed as the weighted mean of the coreset elements in each cluster. This allows us to identify the optimal $\bm{\Pi}$.

Let $\bm{\Pi}^*$ be the optimal choice of $\bm{\Pi}$ in the \ref{prog:k-means-reduced} program. Using \Cref{lem:kmeans-leverage-score} and \Cref{lem:k-means-enum-smart}, we enumerate over the $O(\log n \cdot k \cdot \poly(k/\varepsilon))^{O(k\log k +k/\varepsilon)}=O(k/\varepsilon \cdot \log n )^{\widetilde{O}(k/\varepsilon)}$ number of possible pairs of matrices $\bm{S}\bm{B}$ and $\bm{S}\bm{\Pi}$. For each such pair, we find the $\bm{C}$ that minimizes $\|\bm{S}\bm{B}- \bm{S}\bm{\Pi}\bm{C}\|_F^2$. For every such $\bm{C}$, evaluate the cost induced by these centers with the coreset $(w, S)$. Let $\overline{\bm{C}}$ be the set of centers that has the lowest cost with respect to the coreset from the enumeration described before. The cost induced by this set of centers is  
\begin{align}
  \min_{\bm{\Pi}}\|\bm{B}-\bm{\Pi}\overline{\bm{C}}\|_F^2  &\leq   \min_{\bm{\Pi}}\|\bm{B}-\bm{\Pi}\widehat{\bm{C}}\|_F^2 \\
  & \leq \|\bm{B}-\bm{\Pi}^*\widehat{\bm{C}}\|_F^2 \\
  & \leq (1+\varepsilon)\|\bm{B}-\bm{\Pi}^*\bm{C}^*\|_F^2 \\
  & \leq (1+\varepsilon)^2 \cdot \opt.
  \intertext{Using the coreset property, we imply that the cost of the centers $\overline{C}$ on the original instance $\bm{A}$ is at most $(1+\varepsilon)^3\cdot \opt$.}
\end{align}

\end{enumerate}

We start with the following known result (see Theorem 38 of~\cite{clarkson2017low}).
\begin{lemma}
\label{lem:kmeans-leverage-score}
Given matrices $\bm{B}\in \RR^{q\times r}$ and $\bm{\Pi}^* \in \RR^{q\times k}$, there exists a matrix $\bm{S}\in \RR^{t\times q}$ and such that 
    \begin{enumerate}
   \item  \label{item:kmeans-leverage-score-1} Each row of $\bm{S}$ contains exactly one positive non-zero element from the set $W=\{2^{i}: 0\leq i\leq N\}$.
    \item  \label{item:kmeans-leverage-score-2} If $\bm{C}^*=\argmin_{\bm{C}\in \RR^{k\times r}}\|\bm{B}-\bm{\Pi}^*\bm{C}\|_F^2$ and $\widehat{\bm{C}}=\argmin_{\bm{C}\in \RR^{k\times r}}\|\bm{S}\bm{B}-\bm{S}\bm{\Pi}^*\bm{C}\|_F^2$, then 
    \begin{align}
        \|\bm{B}-\bm{\Pi}^*\widehat{\bm{C}}\|_F^2\leq (1+\varepsilon)\cdot \|\bm{B}-\bm{\Pi}^*\bm{C}^*\|_F^2.
    \end{align}
    \item $t=O(k\log k +k/\varepsilon)$.
    \end{enumerate}
\end{lemma}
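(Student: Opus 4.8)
The plan is to derive this from the standard ``sketch-and-solve'' analysis of multiple-response least-squares regression via leverage-score sampling (essentially Theorem~38 of~\cite{clarkson2017low}), taking care to round the sampling weights so that every nonzero entry of $\bm{S}$ is a power of two. First I would compute the leverage scores of $\bm{\Pi}^*$: since $\operatorname{rank}(\bm{\Pi}^*) =: k' \le k$, write $\bm{\Pi}^* = \bm{U}\bm{M}$ with $\bm{U}\in\RR^{q\times k'}$ having orthonormal columns, and set $\ell_i := \norm{\bm{U}_{i,.}}_2^2 \in [0,1]$, so $\sum_i \ell_i = k'\le k$. For a sufficiently large absolute constant $c$, I would set the ideal inclusion probability of row $i$ to $\rho_i := \min\{1,\, c\,\ell_i\,(\log k + 1/\varepsilon)\}$, then let $q_i$ be the largest power of $4$ that is at most $\max\{\rho_i,\, 1/q^2\}$ (the floor $1/q^2$ only matters for the bit-complexity bound). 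Finally I would form $\bm{S}$ by including, independently for each $i$, the row $q_i^{-1/2}\,\bm{e}_i^{\top}$ with probability $q_i$. Since each $q_i$ is a power of $4$ in $(0,1]$, its reciprocal square root is a power of $2$; and because $q_i \ge 1/q^2$, these powers of two lie in $\{2^0,\dots,2^{N}\}$ with $N = O(\log q)$, giving property~(1).

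For property~(3): the expected number of sampled rows is $\sum_i q_i \le \sum_i \rho_i + q\cdot(1/q^2) \le c(\log k + 1/\varepsilon)\sum_i \ell_i + 1 = O(k\log k + k/\varepsilon)$, and a Chernoff bound shows the realized count is $O(k\log k + k/\varepsilon)$ with high probability; if the draw is too large one simply resamples. This gives $t = O(k\log k + k/\varepsilon)$.

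For property~(2), I would invoke the usual two-ingredient regression argument. The rounded probabilities still satisfy $q_i \ge c'\,\ell_i\log k$ and $q_i \ge c''\,\ell_i/\varepsilon$ (after adjusting constants for the factor-$4$ loss from rounding). The first bound makes $\bm{S}$ a constant-distortion subspace embedding for $\operatorname{colspan}(\bm{\Pi}^*)$ by the matrix-Chernoff inequality; the second makes $\bm{S}$ satisfy the approximate-matrix-product guarantee $\norm{(\bm{S}\bm{U})^{\top}(\bm{S}\bm{E}) - \bm{U}^{\top}\bm{E}}_F^2 \le (\varepsilon/k)\,\norm{\bm{U}}_F^2\,\norm{\bm{E}}_F^2 \le \varepsilon\,\norm{\bm{E}}_F^2$ for the \emph{fixed} residual $\bm{E} := \bm{B} - \bm{\Pi}^*\bm{C}^*$. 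Writing $\bm{B}-\bm{\Pi}^*\widehat{\bm{C}} = \bm{E} + \bm{\Pi}^*(\bm{C}^*-\widehat{\bm{C}})$, using that $\widehat{\bm{C}}$ solves the sketched normal equations, and combining the subspace-embedding control on $\norm{\bm{\Pi}^*(\bm{C}^*-\widehat{\bm{C}})}_F$ with the matrix-product bound on the cross term, the Pythagorean identity yields $\norm{\bm{B}-\bm{\Pi}^*\widehat{\bm{C}}}_F^2 \le (1+\varepsilon)\norm{\bm{B}-\bm{\Pi}^*\bm{C}^*}_F^2$; the detailed calculation is exactly that of Theorem~38 of~\cite{clarkson2017low}. \textbf{The main obstacle} is reconciling the three requirements at once: pushing the sample size down to $O(k\log k + k/\varepsilon)$ rather than the naive $O(k\log k/\varepsilon^2)$ needed for a full $\varepsilon$-subspace embedding forces the finer sketch-and-solve analysis (constant-factor embedding plus $\sqrt{\varepsilon/k}$-approximate matrix product), and doing this while also constraining every reweighting factor to be an exact power of two is precisely what the power-of-$4$ rounding (and its bookkeeping in the constants) is there to handle.
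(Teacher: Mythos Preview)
Your proposal is correct and follows essentially the same approach as the paper: the paper simply cites Theorem~38 of~\cite{clarkson2017low} and remarks that the leverage-score weights can be discretized to powers of two while preserving the guarantees, with $N=O(\log n)$ coming from the range of the leverage scores. Your write-up spells out the details of that discretization (rounding probabilities to powers of $4$, flooring at $1/q^2$) and of the sketch-and-solve analysis more carefully than the paper does, but the underlying argument is the same.
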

Note that~\cite{clarkson2017low} do not require the non-zero element of $\bm{S}$ to come from $W$. Indeed, it will be proportional to the leverage score. However, note that we can ``discretize'' the leverage scores (while keeping a factor two approximation to each one), and still obtain all the guarantees that we require. Finally, since the leverage scores add up to the matrix dimension, we have the bound $N = O(\log n)$.

\begin{lemma}
    \label{lem:k-means-enum-smart}
   Given a matrix $\bm{B}\in \RR^{q\times r}$, the number of possible matrices
   \begin{enumerate}
       \item of the form $\bm{S}\bm{B}$ is at most $O(Nq)^t$.
       \item of the form $\bm{S}\bm{\Pi}$ where $\bm{\Pi}$ satisfies \Cref{eqn:row-condition-k-means-reduced} is at most $O(Nkq)^t$.
   \end{enumerate}
where $\bm{S}$ satisfies property \ref{item:kmeans-leverage-score-1} in \Cref{lem:kmeans-leverage-score}.
\end{lemma}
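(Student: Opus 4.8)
The plan is to reason purely from the structural form of the rows of $\bm{S}$. By the property of $\bm{S}$ guaranteed in \Cref{lem:kmeans-leverage-score}, each of its $t$ rows is a scaled standard basis vector $2^a \be_c^T$ with exponent $a\in\{0,1,\dots,N\}$ and index $c\in[q]$; so any single row of $\bm{S}$ has only $(N+1)q = O(Nq)$ possible forms, and left-multiplying a matrix by such a row simply picks one of its rows and rescales it by a power of two.

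For part (1) I would observe that the $l$-th row of $\bm{S}\bm{B}$ is $2^{a_l}\bm{B}_{c_l,\cdot}$, a power-of-two multiple of a row of $\bm{B}$. There are at most $(N+1)q$ such vectors, and since $\bm{S}\bm{B}$ has $t$ rows, each of which (for the purpose of an upper bound) is one of these vectors, the number of distinct matrices $\bm{S}\bm{B}$ is at most $((N+1)q)^t = O(Nq)^t$. We are overcounting here — different $\bm{S}$ may yield the same product, and the $t$ rows are not genuinely independent — but overcounting only strengthens the upper bound.

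For part (2) I would first recall that by \Cref{eqn:row-condition-k-means-reduced} every row of $\bm{\Pi}$ has the form $w_i\be_{\pi(i)}^T$ for some $\pi(i)\in[k]$, i.e.\ a $k$-vector supported on a single coordinate. Hence the $l$-th row of $\bm{S}\bm{\Pi}$ equals $2^{a_l}w_{c_l}\be_{\pi(c_l)}^T$, which is again supported on a single coordinate $j\in[k]$, with nonzero value lying in the set $\{\,2^a w_c : a\le N,\ c\le q\,\}$ of size at most $(N+1)q$. So each row of $\bm{S}\bm{\Pi}$ has at most $(N+1)qk = O(Nkq)$ possibilities, and across all $t$ rows we get at most $O(Nkq)^t$ distinct matrices, as claimed.

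There is no real technical obstacle; the one point worth stressing is conceptual rather than computational. Even though there are $k^q$ candidate assignment matrices $\bm{\Pi}$ — exponential in the coreset size $q$ — the number of \emph{distinct} sketched matrices $\bm{S}\bm{\Pi}$ depends exponentially only on $t = O(k\log k + k/\varepsilon)$, because $\bm{S}$ has just $t$ rows and each resulting row admits a short description (a coordinate in $[k]$ and a value from a set of size $O(Nq)$). This collapse is exactly what makes the enumeration step of the $k$-means algorithm run in time $O(Nkq)^{O(t)}$ rather than $k^q$, once part (1) and part (2) are combined over all pairs $(\bm{S}\bm{B},\bm{S}\bm{\Pi})$.
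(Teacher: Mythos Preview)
Your proposal is correct and follows essentially the same approach as the paper: both argue that each of the $t$ rows of $\bm{S}\bm{B}$ is a power-of-two scaling of one of the $q$ rows of $\bm{B}$ (giving $(N{+}1)q$ choices per row), and that each row of $\bm{S}\bm{\Pi}$ is determined by a coordinate in $[k]$, a weight $w_c$ with $c\in[q]$, and a scaling exponent in $\{0,\dots,N\}$ (giving $(N{+}1)kq$ choices per row). Your write-up is somewhat more explicit about the row structure and the overcounting point, but the underlying counting argument is identical to the paper's.
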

\begin{proof}
    Each row of $\bm{S}\bm{B}$ is simply a row of $\bm{B}$ that is scaled by $2^{i}$ for some $0\leq i\leq N$. This leaves $Nq$ choices for each of the $t$ rows of $\bm{S}\bm{B}$ which is $((N+1)q)^t$ possibilities. Each row of $\bm{S}\bm{\Pi}$ is a row of $\bm{\Pi}$ scaled by $2^{i}$ for some $0\leq i \leq N$. The choices to make for each row of $\bm{S}\bm{\Pi}$ is a row of $\bm{\Pi}$ (which includes choices for non-zero element and weight $w_j$ for $j\in [q]$) and a scaling factor from $\bm{S}$. This leaves $(N+1)kq$ choices for each of the $t$ rows of $\bm{S}\bm{\Pi}$. 
\end{proof}

\begin{theorem}
    \label{thm:k-means-runtime}
    Given an instance $\bm{A}\in \RR^{n\times d}$ of $k$-means, there is an algorithm that computes a $(1+\varepsilon)$-approximate solution to \ref{prog:k-means-geo} in $O(nnz(\bm{A})+ 2^{\widetilde{O}(k/\varepsilon)}+n^{o(1)})$ time. 
\end{theorem}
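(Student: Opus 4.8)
The plan is to assemble the three-step outline preceding this theorem, but with \emph{input-sparsity-time} preprocessing in place of the crude SVD/regression bounds, and with the enumeration replaced by the ``smart'' sketch-based enumeration of \Cref{lem:kmeans-leverage-score} and \Cref{lem:k-means-enum-smart}. First I would preprocess $\bm{A}$ in $\widetilde{O}(\textnormal{nnz}(\bm{A}))$ time (plus lower-order terms): compute an $O(k/\varepsilon)$-dimensional projection-cost-preserving sketch to replace $\bm{A}$ with a matrix having only $r=O(k/\varepsilon)$ columns (a fast approximate version of \Cref{lem:coreset_p=2}, whose guarantee preserves the $k$-means objective of \emph{every} partition up to $(1+\varepsilon)$), and then compute a strong $k$-means coreset of size $q=\poly(k/\varepsilon)$ on the resulting point set. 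As noted in the discussion of \ref{prog:k-means-geo}, any coreset of size $\poly(k/\varepsilon)$ suffices as a black box, and near-input-sparsity constructions of both objects are known; the slow $T_0$ bound of \Cref{lem:svd-time} is only a fallback. The outcome is a weighted instance $\bm{B}\in\RR^{q\times r}$ with $q,r=\poly(k/\varepsilon)$, such that any $(1+\varepsilon)^{O(1)}$-approximate solution on $\bm{B}$ lifts to one on $\bm{A}$.

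On the reduced instance, instead of enumerating all $k^{q}$ partitions $\bm{\Pi}$, I would enumerate all pairs $(\bm{S}\bm{B},\bm{S}\bm{\Pi})$, where $\bm{S}$ ranges over leverage-score--style sampling matrices as in \Cref{lem:kmeans-leverage-score}: each of the $t=O(k\log k+k/\varepsilon)=\widetilde{O}(k/\varepsilon)$ rows of $\bm{S}$ selects one of the $q$ rows of $\bm{B}$ (equivalently of $\bm{\Pi}$) and scales it by one of the $N+1=O(\log n)$ admissible powers of two, while the nonzero position of the corresponding row of $\bm{\Pi}$ ranges over $[k]$. By \Cref{lem:k-means-enum-smart} there are only $O(\log n\cdot k\cdot q)^{t}$ such pairs. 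For each pair I would solve the $t\times r$ least-squares problem $\min_{\bm{C}}\|\bm{S}\bm{B}-\bm{S}\bm{\Pi}\bm{C}\|_F^2$ in $\poly(k/\varepsilon)$ time to get a candidate center set $\widehat{\bm{C}}\in\RR^{k\times r}$, evaluate its cost $\min_{\bm{\Pi}}\|\bm{B}-\bm{\Pi}\widehat{\bm{C}}\|_F^2$ against the coreset in $O(qkr)$ time, and retain the best candidate $\overline{\bm{C}}$. Since one enumerated pair equals $(\bm{S}^{*}\bm{B},\bm{S}^{*}\bm{\Pi}^{*})$ for the sketch $\bm{S}^{*}$ guaranteed by \Cref{lem:kmeans-leverage-score} and the optimal reduced partition $\bm{\Pi}^{*}$, the chain of inequalities in the discussion preceding \Cref{lem:kmeans-leverage-score} gives $\min_{\bm{\Pi}}\|\bm{B}-\bm{\Pi}\overline{\bm{C}}\|_F^2\le(1+\varepsilon)^{2}\cdot(\text{optimum on }\bm{B})$; composing with the coreset and dimension-reduction guarantees and rescaling $\varepsilon$ yields a $(1+\varepsilon)$-approximation to \ref{prog:k-means-geo}. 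We output the $k$ centers (lifted back to $\RR^{d}$), following the output convention of \cite{feldman2007ptas}.

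For the running time, the dominant terms are $\widetilde{O}(\textnormal{nnz}(\bm{A}))$ for preprocessing and, for the enumeration, $O(\log n\cdot k\cdot q)^{t}\cdot\poly(k/\varepsilon)=(\log n)^{\widetilde{O}(k/\varepsilon)}\cdot 2^{\widetilde{O}(k/\varepsilon)}$, since $\log q,\log k$ are absorbed into the $\widetilde{O}$. The step I expect to be the most delicate is converting the factor $(\log n)^{\widetilde{O}(k/\varepsilon)}$ into $2^{\widetilde{O}(k/\varepsilon)}+n^{o(1)}$: if $k/\varepsilon\le\sqrt{\log n}$ then $(k/\varepsilon)\log\log n=o(\log n)$, so this factor is $n^{o(1)}$; otherwise $\log\log n\le 2\log(k/\varepsilon)$, so $(k/\varepsilon)\log\log n=\widetilde{O}(k/\varepsilon)$ and the factor is $2^{\widetilde{O}(k/\varepsilon)}$. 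Combining the two cases gives the claimed $O(\textnormal{nnz}(\bm{A})+2^{\widetilde{O}(k/\varepsilon)}+n^{o(1)})$ bound, matching the $k,\varepsilon$ dependence of \cite{feldman2007ptas}. The other point requiring care — and the only place the proof departs from the lemmas already established — is ensuring that both the projection-cost-preserving sketch and the $k$-means coreset are computed in near-input-sparsity time rather than via the generic bounds of \Cref{lem:svd-time} and \Cref{lem:coreset_p=2}; this is handled by invoking the known fast approximate variants.
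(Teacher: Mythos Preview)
Your proposal is correct and follows essentially the same approach as the paper: assemble the three-step reduction described just before the theorem, enumerate via \Cref{lem:kmeans-leverage-score} and \Cref{lem:k-means-enum-smart}, and then split on the threshold $k/\varepsilon \gtrless \sqrt{\log n}$ (equivalently $\log n \gtrless (k/\varepsilon)^2$) to absorb the $(\log n)^{\widetilde{O}(k/\varepsilon)}$ factor into $2^{\widetilde{O}(k/\varepsilon)}+n^{o(1)}$. The paper's own proof is just this last case split stated in two sentences, relying on the preceding discussion for everything else; your write-up is simply a more explicit version of the same argument, including the point (left implicit in the paper) that the dimension reduction and coreset must be built in near-input-sparsity time rather than via the generic SVD bound of \Cref{lem:svd-time}.
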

\begin{proof}
    The time complexity of the three step procedure is dominated by the enumeration step which takes time $O(\log n \cdot k/\varepsilon)^{\widetilde{O}(k/\varepsilon)}$ time. If $\log n \leq (k/\varepsilon)^2$, then this running time is $O(k/\varepsilon)^{\widetilde{O}(k/\varepsilon)}=2^{\widetilde{O}(k/\varepsilon)}$. Otherwise, if $\log n \geq (k/\varepsilon)^2$, then the running time is $(\log n)^{\widetilde{O}(\sqrt{\log n})} =n^{o(1)}$. 
\end{proof}

\subsection{Sparse-PCA \cite{del2022sparse}}
The sparse PCA problem is a well-studied variant of PCA in which the components found are required to be sparse. In other words, the basis matrix $\bm{U}\in \RR^{d\times k}$ is constrained to have sparsity requirements. There are two natural ways to formalize this question: the first is by requiring $\bm{U}$ to have at most $s$ non-zero entries in total. Another is to require the number of non-zero {\em rows} of $\bm{U}$ to be bounded by a parameter $s$. In the popular case of $d=1$, both of these definitions coincide. Let us focus on the first variant for now.\footnote{Our result follows via a black-box application of algorithms from~\cite{del2022sparse}; since their algorithms work for both variants, so do our results.} More formally, the mathematical program formulation we consider is
\begin{align}
    \label{prog:sparse-PCA-max} \max&: \langle \bm{A}\bm{A}^T, \bm{U}\bm{U}^T \rangle \tag{sparse-PCA-max}\\
    \bm{U}^T\bm{U}&=\bm{I}_k, \; \sum_{j\in [k]}\|\bm{U}_{.,j}\|_0 \leq s.
\end{align}
Program \ref{prog:sparse-PCA-max} can also be formulated as a minimization version 
\begin{align}
    \label{prog:sparse-PCA-min} \min&: \|\bm{A}-\bm{U}\bm{U}^T\bm{A}\|_F^2 \tag{sparse-PCA-min}\\
    \bm{U}^T\bm{U}&=\bm{I}_k, \; \sum_{j\in [k]}\|\bm{U}_{.,j}\|_0 \leq s.
\end{align}

The following theorem from \cite{del2022sparse} shows how to find an optimal solution to \ref{prog:sparse-PCA-max} (and hence also to \ref{prog:sparse-PCA-min}) when $\textnormal{rank}(\bm{A}\bm{A}^T)=\textnormal{rank}(\bm{A})=t$.
\begin{theorem}[Theorem 1 in \cite{del2022sparse}]
\label{thm:sparsepca}
    There is an algorithm that finds an optimal solution to \ref{prog:sparse-PCA-max} in 
    \begin{align}
        O\left(d^{\min\{k,t\}(t^2+t)}\left(\min\{k,t\}dt^2+ d\log d\right)\right),
    \end{align}
    where $t$ denotes the rank of the matrix $\bm{A}$.
\end{theorem}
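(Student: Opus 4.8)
\emph{Remark.} Theorem~\ref{thm:sparsepca} is Theorem~1 of \cite{del2022sparse}, so we do not reprove it; we only sketch the argument one would use, and we apply the theorem as a black box (e.g.\ in the subsequent sparse-PCA application, after first using \Cref{lem:coreset_p=2} to make the rank parameter $t$ equal to $O(k/\varepsilon)$).

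\textbf{Reduction to support selection.} First compute, from an SVD of $\bm{A}$, a factorization $\bm{A}\bm{A}^T=\bm{W}\bm{W}^T$ with $\bm{W}\in\RR^{d\times t}$ where $t=\mathrm{rank}(\bm{A})$, and write $w_1,\dots,w_d\in\RR^t$ for its rows. For a fixed set $S\subseteq[d]$ of allowed coordinates, the best feasible $\bm{U}$ whose rows are supported on $S$ has columns equal to the top-$k$ eigenvectors of the principal submatrix $(\bm{A}\bm{A}^T)_{S,S}=\bm{W}_S\bm{W}_S^T$, so its objective value is $\sum_{j=1}^{k}\lambda_j\big(\sum_{i\in S}w_iw_i^T\big)$ (using that $\bm{W}_S\bm{W}_S^T$ and $\bm{W}_S^T\bm{W}_S=\sum_{i\in S}w_iw_i^T$ share their nonzero spectrum). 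By the Ky Fan variational principle this equals $\max_{\bm{Z}\in\RR^{t\times k},\,\bm{Z}^T\bm{Z}=\bm{I}_k}\sum_{i\in S}\|\bm{Z}^Tw_i\|_2^2$, and since the objective only sees the column span of $\bm{U}$ inside the $t$-dimensional span of $\bm{A}$ we may take $k\le t$ (replacing $k$ by $\min\{k,t\}$, padding with dummy directions otherwise). Swapping the two maxima, $\opt=\max_{\bm{Z}:\,\bm{Z}^T\bm{Z}=\bm{I}_k}\big\{\text{the sum of the $s$ largest of the numbers }\langle\bm{Z}\bm{Z}^T,\,w_iw_i^T\rangle,\ i\in[d]\big\}$. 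Thus it suffices to enumerate the supports $S$ that can arise as the ``top-$s$'' set for some feasible $\bm{Z}$, evaluate this value for each, and return the best.

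\textbf{Bounding the number of relevant supports.} Each number $\langle\bm{Z}\bm{Z}^T,\,w_iw_i^T\rangle$ is a fixed degree-$2$ polynomial in the $t\cdot\min\{k,t\}$ entries of $\bm{Z}$. Hence, as $\bm{Z}$ moves over the semialgebraic set $\{\bm{Z}^T\bm{Z}=\bm{I}_k\}$, the relative order of the $d$ numbers $\langle\bm{Z}\bm{Z}^T,\,w_iw_i^T\rangle$, and in particular the identity of the top-$s$ of them, changes only when $\bm{Z}$ crosses one of the $\binom{d}{2}$ hypersurfaces $\langle\bm{Z}\bm{Z}^T,\,w_iw_i^T-w_jw_j^T\rangle=0$. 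By Milnor--Thom/Warren-type bounds for the number of sign patterns realized by $\poly(d)$ polynomials of degree $O(1)$ in $v=O(\min\{k,t\}\cdot t)$ variables, there are at most $d^{O(\min\{k,t\}\cdot t)}$ such cells, and an algorithm that enumerates the realizable sign conditions (returning a sample point $\bm{Z}$ per cell) lists all of them in comparable time; this is the step where the fixed-rank hypothesis is used crucially, since it keeps $v$ small. For each candidate $S$ we compute $\sum_{j=1}^{k}\lambda_j\big(\sum_{i\in S}w_iw_i^T\big)$ by a single eigendecomposition of a $t\times t$ matrix (time $O(st^2+t^3)$, plus $O(d\log d)$ to identify the top-$s$ coordinates), keep the best, and reconstruct $\bm{U}$ from the corresponding eigenvectors padded with zeros off $S$. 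Multiplying the number of candidates by the per-candidate cost gives a bound of the form $d^{O(\min\{k,t\}\cdot t)}\cdot\poly(d,t)$; a slightly different accounting of the candidate supports (tracking them directly rather than through $\bm{Z}$, as in \cite{del2022sparse}) yields the exponent $\min\{k,t\}(t^2+t)$ stated above.

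\textbf{Main obstacle.} The Ky Fan reduction and the eigendecomposition bookkeeping are routine; the crux is the middle step, which requires (i) a parametrization of the optimal solution by few variables --- here $\bm{Z}\in\RR^{t\times k}$, available only because $\mathrm{rank}(\bm{A})=t$ --- under which the optimal support is determined by the sign pattern of $O(d^2)$ low-degree polynomials, and (ii) an effective cell-enumeration procedure for the resulting semialgebraic arrangement. Handling the boundary cases cleanly --- when $\bm{W}_S$ is rank-deficient, so that ``top-$k$ eigenvectors'' must be read as top-$\min\{k,\mathrm{rank}(\bm{W}_S)\}$ eigenvectors with an arbitrary orthonormal completion, and adapting the argument to the entry-sparsity variant $\|\bm{U}\|_0\le s$ rather than the row-sparsity variant --- is where the details demand care; we refer to \cite{del2022sparse} for the full treatment.
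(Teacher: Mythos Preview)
Your proposal is correct and matches the paper's treatment: the paper does not prove Theorem~\ref{thm:sparsepca} either, but simply cites it as Theorem~1 of \cite{del2022sparse} and invokes it as a black box inside the proof of Theorem~\ref{thm:sparse-PCA}. Your additional sketch of the argument (support enumeration via sign-pattern/cell bounds in a low-dimensional parametrization) is a bonus the paper does not include.
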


\begin{theorem}\label{thm:sparse-PCA}
    Given an instance $(\bm{A}\in \RR^{d\times n},k,s)$ of sparse-PCA, there is an algorithm that runs in time 
    \begin{align}
        O\left(d^{kr^2+kr}\left(dkr^2+ d\log d\right)\right)
    \end{align}
with $r=k+k/\varepsilon$ that computes a $\varepsilon \|\bm{A}-\bm{A}_k\|_F^2$ additive approximate solution to both \ref{prog:sparse-PCA-max} and \ref{prog:sparse-PCA-min}. This is guarantees as a $(1+\varepsilon)$-approximate solution to \ref{prog:sparse-PCA-min} because $\|\bm{A}-\bm{A}_k\|_F^2$ is a lower bound to \ref{prog:sparse-PCA-min}. 
\end{theorem}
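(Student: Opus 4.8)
The plan is to combine the low-rank surrogate $\bm{B}$ of \Cref{lem:coreset_p=2} with the exact sparse-PCA algorithm of \cite{del2022sparse} (\Cref{thm:sparsepca}), used entirely as a black box. The point is that \Cref{thm:sparsepca} runs in time polynomial in $d$ but exponential in the rank $t$ of its input matrix; since $\bm{A}$ may be full rank we cannot afford to call it on $\bm{A}$ directly, but after replacing $\bm{A}$ by the rank-$r$ matrix $\bm{B}$ (with $r = k + k/\varepsilon$) the call becomes affordable, and the coreset property then transfers the quality of the returned solution back to $\bm{A}$. Because the solver is \emph{exact} on $\bm{B}$, we can invoke the clean $p=2$ clause of \Cref{lem:approximate_soln} instead of the general framework \Cref{lem:structural-CSA}, which is what yields a clean additive error with no further $\delta$-net overhead.

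Concretely, I would first compute the SVD of $\bm{A}$ and form $\bm{B}=\sum_{i=1}^{r}\sigma_i p_i q_i^{T}$ as in \Cref{lem:coreset_p=2}, which costs $T_0$ and is negligible relative to the target running time. Write $\cS$ for the set of projection matrices $\bm{U}\bm{U}^{T}$ with $\bm{U}^{T}\bm{U}=\bm{I}_k$ and $\sum_{j\in[k]}\|\bm{U}_{.,j}\|_0\le s$; this set is literally the same for the instances on $\bm{A}$ and on $\bm{B}$ (both live in $\RR^{d\times n}$), and since every $\bm{P}\in\cS$ has rank at most $k$, the coreset guarantees \Cref{eqn:approximate_cost} and the stronger \Cref{eqn:approximate_cost_2} (see \Cref{rem:stronger_PCPS_p=2}) apply to the pair $(\bm{A},\bm{B})$. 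Next, since $\mathrm{rank}(\bm{B})\le r$, I would invoke \Cref{thm:sparsepca} on $(\bm{B},k,s)$; it returns an exact optimizer $\widehat{\bm{P}}=\widehat{\bm{U}}\widehat{\bm{U}}^{T}\in\cS$ of \ref{prog:sparse-PCA-max} --- equivalently of \ref{prog:sparse-PCA-min}, since $\langle\bm{B}\bm{B}^{T},\bm{U}\bm{U}^{T}\rangle=\|\bm{B}\|_F^2-\|\bm{B}-\bm{U}\bm{U}^{T}\bm{B}\|_F^2$ --- in time $O\!\big(d^{\min\{k,r\}(r^2+r)}(\min\{k,r\}\,dr^2+d\log d)\big)$. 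Because $r=k+k/\varepsilon\ge k$ we have $\min\{k,r\}=k$, which is exactly the claimed bound $O\!\big(d^{kr^2+kr}(dkr^2+d\log d)\big)$.

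It remains to argue the error. By the $p=2$ clause of \Cref{lem:approximate_soln}, an exact optimizer of the reduced instance $(\bm{B},\cS)$ satisfies
\[
\|\bm{A}-\widehat{\bm{P}}\bm{A}\|_F^2\ \le\ \opt\ +\ \varepsilon\,\|\bm{A}-\bm{A}_k\|_F^2,
\]
which is the claimed additive approximation for \ref{prog:sparse-PCA-min} (and for \ref{prog:sparse-PCA-max} after the identity $\langle\bm{A}\bm{A}^T,\bm{U}\bm{U}^T\rangle=\|\bm{A}\|_F^2-\|\bm{A}-\bm{U}\bm{U}^T\bm{A}\|_F^2$). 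Finally, every rank-$\le k$ projection $\bm{P}$ has $\|\bm{A}-\bm{P}\bm{A}\|_F^2\ge\|\bm{A}-\bm{A}_k\|_F^2$ by Eckart--Young, so $\opt\ge\|\bm{A}-\bm{A}_k\|_F^2$ and hence $\|\bm{A}-\widehat{\bm{P}}\bm{A}\|_F^2\le(1+\varepsilon)\,\opt$, giving the stated $(1+\varepsilon)$-multiplicative guarantee for \ref{prog:sparse-PCA-min}.

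The only calculations are routine: the running-time arithmetic and the bookkeeping that converts between the $\max$ and $\min$ formulations. The one point that deserves care --- the main (mild) obstacle --- is verifying that the feasible family $\cS$ really is unchanged when we pass from $\bm{A}$ to $\bm{B}$ and that it is contained in the class of rank-$\le k$ projections to which \Cref{lem:coreset_p=2} applies; granting that, the three steps compose immediately. The identical template works for the row-sparsity variant mentioned in the footnote, since the algorithm of \cite{del2022sparse} handles both.
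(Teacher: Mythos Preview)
Your proposal is correct and follows essentially the same approach as the paper: replace $\bm{A}$ by the rank-$r$ surrogate $\bm{B}$ from \Cref{lem:coreset_p=2}, invoke \Cref{thm:sparsepca} as a black box on $\bm{B}$, and use the $p=2$ clause of \Cref{lem:approximate_soln} to transfer the guarantee. Your write-up is in fact more careful than the paper's, making explicit the $\min\{k,r\}=k$ arithmetic, the invariance of the feasible set $\cS$, and the Eckart--Young lower bound.
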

\begin{proof}
 First step is to replace $\bm{A}$ with the matrix $\bm{B}$ as in \Cref{lem:coreset_p=2}. This step takes time $T_0$. Solve for \Cref{prog:sparse-PCA-max} exactly using \Cref{thm:sparsepca}, with $t=k+k/\varepsilon$. This step takes time   $ O\left(d^{kr^2+kr}\left(dkr^2+ d\log d\right)\right)$. 
 Using \Cref{lem:approximate_soln}, the solution obtained is a $(1+\varepsilon)$-approximate solution to \ref{prog:sparse-PCA-min}. In fact, for $p=2$, we know that the error is at most $\varepsilon \|\bm{A}-\bm{A}_k\|_F^2$ which implies that this also gives an additive approximation of $\varepsilon \|\bm{A}-\bm{A}_k\|_F^2$ to both the minimization and maximization versions. Because the objective for the maximization is the negative of the minimization objective added with $\|\bm{A}\|_F^2$.
\end{proof}

\section{Hardness of Column Subset Selection with Partition Constraint}\label{sec:css-hardness}

In this section, we show that \fairCSS{} is at least as hard as the well-studied \emph{sparse regression} problem~\cite{natarajan1995sparse,FKT15,har2018approximate,gupte2021fine}. In particular, our hardness implies that \fairCSSx{} remains hard even if the number of groups is only two, or if we allow violating the given partition capacity constraints by a logarithmic factor. First, we define the \fairCSSx{} problem and the sparse regression problem formally. 

\begin{defn}[Column Subset Selection with a Partition Constraint]
\label{def:fair-css}
    In an instance of the \emph{\fairCSSx} (\emph{\fairCSS}) problem, we are 
    given a matrix $\bm{A}\in \RR^{m\times n}$ and a partition matroid $\cP=([n]=P_1\uplus\dots \uplus P_{\ell},\;\cI), \; \cI=\{S\subseteq [n]: |S\cap P_{t}|\leq k_{t}, \; \forall t \in [\ell]\}$ defined on the set of column indices $[n]$. 
    
    The objective is to select a (index) subset $S\in \cI$ of columns of $\bm{A}$ in order to minimize the squared projection cost of all the column vectors of $\bm{A}$ onto the span of the column space induced by the subset of columns corresponding to $S$
    \begin{align}
        \text{cost}_{S}(\bm{A}):= \sumL_{i\in [n]}\|\text{proj}_{\text{span}^\bot(S)}(a_i)\|_2^2 = \|\bm{A}-\bm{A}_S\bm{A}_S^+\bm{A}\|_F^2,
\end{align}  
  where $a_i$ is the column vector corresponding to column index $i$ in $\bm{A}$ and $\bm{A}_S$ is the matrix with columns from $\bm{A} $  corresponding to $S$.
\end{defn}


\begin{defn}[$(g,h)$-Sparse Regression]
Given a matrix $\bm{B} \in \mathbb{R}^{m\times n}$ and a positive integer $s$, for which there exists an unknown vector $x^* \in \mathbb{R}^n$ such that $\|x^*\|_0\leq s$ and $\bm{B}x^*= \mathbbm{1}$, the goal is to output an $x \in \mathbb{R}^n$ with $\|x\|_0 \leq s\cdot g(n)$ such that $\|\bm{B}x-\mathbbm{1}\|_2^2\leq h(m,n)$. 
\end{defn}


The sparse regression problem is known to be computationally hard. In particular, 
\begin{theorem}[\cite{FKT15}]
    Let $0<\delta <1$. If there is a deterministic polynomial time algorithm $\cA$ for $(g,h)$-sparse regression, for which $g(n)=(1-\delta)\ln n$ and $h(m,n)=m^{1-\delta}$, then $\mathrm{SAT} \in \mathrm{DTIME}(n^{O(\log \log n)})$.   
\end{theorem}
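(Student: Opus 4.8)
The plan is to prove this by a reduction from the hardness of approximating \textsc{Set Cover} in the form established by Feige: unless $\mathrm{SAT}\in\mathrm{DTIME}(n^{O(\log\log n)})$, one cannot distinguish, given a universe $U$ with $|U|=N$, a family $\mathcal{F}$ of $M=\mathrm{poly}(N)$ subsets, and a budget $k$, between the \emph{yes} case, in which $U$ has a cover of size $k$ that can moreover be taken to be a \emph{partition} (a structural feature of the partition-system construction, where the ``honest'' cover induced by a satisfying assignment consists of pairwise disjoint sets), and a \emph{robust no} case, in which every subfamily of size at most $k\cdot(1-\delta)\ln N$ leaves at least $N^{1-\delta}$ elements of $U$ uncovered. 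I would first arrange exactly this robust gap version, since the plain Feige gap only guarantees that a constant fraction of the universe is missed; the upgrade to $N^{1-\delta}$ missed elements while keeping the $(1-\delta)\ln N$ multiplicative sparsity gap is obtained by the usual amplification (blowing up the underlying partition system, equivalently taking many disjoint copies), and this is precisely where the quasi-polynomial $n^{O(\log\log n)}$ running time in the conclusion is consumed.

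Given such an instance, build the sparse-regression instance in the obvious way: let $\bm{B}\in\{0,1\}^{N\times M}$ be the element--set incidence matrix, $\bm{B}_{e,S}=1$ iff $e\in S$; set the sparsity budget to $s=k$; and take the target to be $\mathbbm{1}\in\RR^N$. In the \emph{yes} case, if $\mathcal{C}\subseteq\mathcal{F}$ is a partition cover of size $k$, then $x^*:=\mathbbm{1}_{\mathcal{C}}$ (the $0/1$ indicator of $\mathcal{C}$) satisfies $\|x^*\|_0=k=s$ and $\bm{B}x^*=\mathbbm{1}$ exactly, since disjointness means each row of $\bm{B}x^*$ counts exactly one chosen set; thus the promised exact solution $x^*$ of the $(g,h)$-sparse-regression instance exists. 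The key point for the \emph{no} case is that the columns of $\bm{B}$ are fixed: if $x\in\RR^M$ has support $T$ with $e\notin\bigcup_{S\in T}S$, then row $e$ of $\bm{B}x$ is identically $0$ no matter the real coefficients, so coordinate $e$ of $\mathbbm{1}-\bm{B}x$ equals $1$ (and overshooting on covered elements only increases the error). Hence $\|\bm{B}x-\mathbbm{1}\|_2^2$ is at least the number of elements of $U$ not covered by the sets indexing $\mathrm{supp}(x)$.

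Combining the two halves: suppose a deterministic polynomial-time algorithm $\cA$ solves $(g,h)$-sparse regression with $g(n)=(1-\delta)\ln n$ and $h(m,n)=m^{1-\delta}$. Run it on $(\bm{B},s)$, where $n=M$, $m=N$; using $M=\mathrm{poly}(N)$ and starting from a Set Cover hardness instance with a slightly smaller gap parameter $\delta'<\delta$, we may ensure $s\cdot g(M)\le k(1-\delta')\ln N$ and $h(N,M)=N^{1-\delta}$. On a \emph{yes} instance, $\cA$ must output some $x$ with $\|x\|_0\le k(1-\delta')\ln N$ and $\|\bm{B}x-\mathbbm{1}\|_2^2\le N^{1-\delta}$, so $\mathrm{supp}(x)$ indexes a subfamily of size $\le k(1-\delta')\ln N$ covering all but fewer than $N^{1-\delta}$ elements --- which the \emph{robust no} case forbids. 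Thus $\cA$ together with the (quasi-polynomial) reduction decides \textsc{Sat}, giving $\mathrm{SAT}\in\mathrm{DTIME}(n^{O(\log\log n)})$.

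\textbf{Main obstacle.} The delicate step is the \emph{robust} gap amplification in the first paragraph: ordinary Feige hardness only yields that sub-logarithmic-factor covers miss a constant fraction of the universe, whereas the additive error budget $h(m,n)=m^{1-\delta}$ requires them to miss a polynomially large number of elements \emph{simultaneously} with the $(1-\delta)\ln n$ multiplicative sparsity gap. Carrying out the partition-system blow-up (or the many-disjoint-copies trick) so that both gaps hold at once, and bookkeeping the resulting instance size to confirm the $n^{O(\log\log n)}$ bound, is where the real work lies; the incidence-matrix reduction and the ``forced-zero row'' argument are otherwise routine.
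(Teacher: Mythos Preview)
The paper does not supply a proof of this theorem; it is quoted verbatim from \cite{FKT15} and used as a black box to derive the hardness of \fairCSS{}. There is therefore no in-paper argument to compare your proposal against.

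For what it is worth, your sketch is a faithful outline of the Foster--Karloff--Thaler reduction: build the element--set incidence matrix $\bm{B}$ from a Feige-hard Set Cover instance, exploit the fact that Feige's partition-system construction yields a \emph{disjoint} cover in the YES case so that the indicator vector gives $\bm{B}x^*=\mathbbm{1}$ exactly, and in the NO case use the elementary observation that any element outside $\bigcup_{S\in\mathrm{supp}(x)}S$ forces its row of $\bm{B}x$ to vanish, hence contributes $1$ to $\|\bm{B}x-\mathbbm{1}\|_2^2$. You have also correctly isolated the one non-routine step: upgrading Feige's gap to the \emph{robust} form in which any subfamily of size $(1-\delta)k\ln N$ leaves $N^{1-\delta}$ elements uncovered (rather than merely a constant fraction), which is where the partition-system parameters and the quasi-polynomial blow-up are actually used in \cite{FKT15}.
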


Next, we prove our main hardness of approximation result for \fairCSSx{}.

\begin{theorem}\label{thm:hard-CSS}
Assuming $\mathrm{SAT} \notin \mathrm{DTIME}(n^{O(\log \log n)})$, the \fairCSSx{} problem is hard to approximate to any multiplicative factor $f$, even in the following special cases:

(i) The case of $\ell=2$ groups, where the capacities on all the groups are the same parameter $s$.

(ii) The case where the capacities on all the groups are the same parameter $s$, and we allow a solution to violate the capacity by a factor $g(n) = o(\log n)$, where $n$ is the total number of columns in the instance.
\end{theorem}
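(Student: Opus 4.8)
The natural approach is a reduction from $(g,h)$-sparse regression, using the hardness result of~\cite{FKT15} as the engine. I would start from an instance $(\bm{B}, s)$ of sparse regression, where $\bm{B}\in\RR^{m\times n}$ and there is a promised $x^*$ with $\|x^*\|_0\le s$ and $\bm{B}x^* = \mathbbm{1}$. The key idea is to encode the linear system $\bm{B}x = \mathbbm{1}$ as a column subset selection instance in which one group supplies the columns of $\bm{B}$ (with a budget of $s$), and a second group supplies a single ``target'' column that forces the $\mathbbm{1}$ vector to be (approximately) spanned. Concretely, I would build a matrix $\bm{A}$ whose columns are the $n$ columns of $\bm{B}$ (padded appropriately), together with one extra column $w$ that is a scaled copy of $\mathbbm{1}$, placed in its own group with capacity $1$; all other groups get capacity tuned so the only meaningful choice is which $s$ columns of $\bm{B}$ to pick. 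By scaling $w$ to have enormous norm $\Lambda$, the projection cost is dominated by how well $\text{span}(\text{chosen columns of }\bm{B})$ captures $\mathbbm{1}$: if some $s$-sparse $x$ gives $\bm{B}x=\mathbbm{1}$ exactly, the cost on the $w$-column is $0$; if no $s$-sparse (or even $s\cdot g(n)$-sparse) combination gets $\|\bm{B}x - \mathbbm{1}\|_2^2$ below $h(m,n)$, then the cost is at least $\Lambda^2 \cdot h(m,n)/\|\mathbbm{1}\|^2$ or so, which is strictly positive. Choosing $\Lambda$ large makes the gap between the YES case (cost $0$) and the NO case (cost $>0$) infinite in ratio, giving inapproximability to any multiplicative factor $f$.

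\textbf{Handling the two special cases.} For part (i), the $\ell = 2$ restriction is exactly the construction above: group $1$ is the columns of $\bm{B}$ with capacity $s$, group $2$ is the single column $w$ with capacity... but the theorem wants \emph{equal} capacities $s$ on both groups. To force this, I would duplicate the target column $w$ into $s$ near-identical-but-linearly-independent copies (e.g., $w, w+\epsilon e_1, \dots$) so that picking any subset of them spans essentially the same direction, and giving group $2$ capacity $s$ changes nothing; alternatively, pad group $2$ with $s-1$ all-zero (or negligible-norm) columns. Either way the effective instance is unchanged. For part (ii), the capacity-violation version: if a solution may pick $s\cdot g(n)$ columns from group $1$ with $g(n) = o(\log n)$, that exactly matches the $g(n)$-factor slack in the $(g,h)$-sparse regression hardness of~\cite{FKT15} with $g(n) = (1-\delta)\ln n$; so a multiplicative-$f$ approximation with that capacity slack would solve $(g,h)$-sparse regression with $g(n)=(1-\delta)\ln n$, $h(m,n)=m^{1-\delta}$, contradicting $\mathrm{SAT}\notin\mathrm{DTIME}(n^{O(\log\log n)})$. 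I need to check the arithmetic: the additive error $h(m,n) = m^{1-\delta}$ must translate into a \emph{nonzero} projection cost in the NO case, which it does as long as $\mathbbm{1}$ cannot be approximated to within $m^{1-\delta}$; and I should make sure the padding dimensions keep $m, n$ polynomially related so the hardness parameters carry over.

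\textbf{The main obstacle.} The delicate part is ensuring the reduction is \emph{faithful in both directions} with the right quantitative parameters. In the forward direction (sparse regression solvable $\Rightarrow$ CSS solvable) it is immediate. The subtle direction is the converse: from a good CSS solution I must extract a good sparse-regression solution, and here I need that (a) the optimal CSS solution actually \emph{uses} columns from group $1$ to approximate $\mathbbm{1}$ rather than ``wasting'' budget, which the large-$\Lambda$ scaling enforces but must be argued carefully when $g(n)$-violation is allowed; and (b) the projection cost onto the chosen columns bounds $\min_x \|\bm{B}x - \mathbbm{1}\|^2$ over $x$ supported on those columns — this is just the definition of least-squares projection, but I must confirm the padding of $\bm{B}$'s columns (to live in $\RR^{m'}$ alongside $w$) does not introduce spurious ways to capture the $w$-direction. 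I would resolve this by making the support of $w$ disjoint from, or carefully overlapping with, the padded coordinates so that $w$ is captured \emph{only} through genuine combinations of $\bm{B}$'s columns equalling $\mathbbm{1}$ on the relevant coordinates. Getting this bookkeeping exactly right — while keeping all dimensions polynomial and the capacities as stated — is where most of the real work lies; the inapproximability-to-any-factor conclusion then follows formally from the $0$-versus-positive gap.
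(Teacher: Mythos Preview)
Your reduction has a structural gap that breaks both parts. In \fairCSS{}, the selected columns come from the same matrix $\bm{A}$ whose columns you must approximate. In your construction the target column $w = \Lambda\cdot\mathbbm{1}$ lives in group $2$ with \emph{positive} capacity, so any sensible algorithm simply selects $w$; once $w \in S$ its projection cost is exactly zero, in both the YES and NO cases of sparse regression. The remaining cost is then just the cost of approximating the $B$-columns by $s$ of them (plus the direction $\mathbbm{1}$), which is the same in both cases and has nothing to do with whether an $s$-sparse $x$ solves $Bx=\mathbbm{1}$. Your padding fixes for equal capacity --- duplicating $w$, or adding $s-1$ zero columns to group $2$ --- only make this worse: with $s$ columns in group $2$ and capacity $s$, the algorithm picks all of them, including $w$. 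The ``main obstacle'' you flag (that the solution might waste budget rather than approximate $\mathbbm{1}$) is not the issue; the issue is that spending budget on $w$ is the \emph{optimal} thing to do, and it kills the gap.

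The paper repairs this with a forcing gadget. For part (i), group $2$ contains $s+1$ columns: $s$ enormous orthogonal vectors $C\cdot e_{m+1},\dots,C\cdot e_{m+s}$ supported on fresh coordinates (disjoint from $B$'s support) that \emph{must} all be picked --- omitting any one costs $\ge C^2$ --- plus the target $D\cdot(\mathbf{1}\oplus\mathbf{0}_s)$ as the $(s{+}1)$st column. With capacity exactly $s$, the target is forced out of $S$ and can only be approximated via the group-$1$ columns, which creates the gap. For part (ii) even this trick fails under capacity violation (one could then pick all $s+1$ columns of group $2$), so the paper uses $T+1$ groups with $T>2\alpha s$: groups $1,\dots,T$ each carry a copy of $B$ in disjoint coordinate blocks, and group $T+1$ contains $T$ targets, one scaled $\mathbf{1}$-vector per block. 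Even with $\alpha s$ budget on group $T+1$, at least $T/2$ targets remain unselected, and each must be approximated by at most $\alpha s$ columns of $B$ from its own block --- which is exactly the $(g,h)$-sparse-regression question with slack $g$.
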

\begin{proof}
The proof is via a reduction from sparse regression. First, we show a hardness for two groups (part (a) of the Theorem). Consider an instance of sparse regression, given by an $m\times n$ matrix $B$ and parameter $s$. Now consider a matrix $A$ whose columns are $A_1 \cup A_2$, defined as follows. $A_1$ is an $(m+s)\times n$ matrix whose $i$th column is the $i$th column of $B$ appended with $s$ zeros. $A_2$ is an $(m+s) \times (s+1)$ matrix whose columns we denote by $u_1, u_2, \dots, u_{s+1}$. We set $u_i = C \cdot e_{m+i}$ for $1 \le i \le s$, and $u_{s+1} = D \cdot (\mathbf{1} \oplus \mathbf{0}_s)$, for appropriately chosen parameters $C > D$.\footnote{As is standard, $\mathbf{1} \oplus \mathbf{0}_s$ is simply the all ones vector (here in $m$ dimensions) with $s$ zeros appended.}

Consider any solution that chooses exactly $s$ columns from $A_1$ and $A_2$. In the YES case of sparse regression, where there exists an $s$-sparse $x^*$ with $Bx^* = \mathbf{1}$, by choosing the columns corresponding to the support of  $x^*$ from $A_1$ along with the columns $u_1, \dots, u_s$ from $A_2$, we obtain an approximation error at most $\| A_1 \|_F^2$. Consider the NO case of sparse regression. We will choose $C$ large enough, so that even if one of the $u_i$ for $i\le s$ is not chosen, the error is $\ge C$. Next, suppose all the $\{u_i\}_{i \in [s]}$ are chosen. For any choice of $s$ columns from $A_1$, the error on the column $u_{s+1}$ is at least $D \cdot h(m,n)$, by assumption. Thus in either case, the approximation error is $\ge \min \left( C, D \cdot h(m,n) \right)$. We can now choose $C, D$ large enough (e.g., $> f \cdot n \| A_1 \|_F^2$), and obtain the desired hardness of approximation.

Next, suppose we are allowed to choose $\alpha s$ columns from each group, for some slack parameter $\alpha$ (assumed to be an integer $\ge 1$ and $< g(n)$, where the latter function comes from the hardness for sparse regression). Now let $T$ be a parameter we will choose later (integer $\ge 1$), and consider an instance of \fairCSSx{} where we have $(T + 1)$ groups of vectors (matrices), $A_1, A_2, \dots, A_{T+1}$, and the vectors (columns) have dimension $T m$. We view each column vector as consisting of $T$ blocks of size $m$. For $1\le j \le T$, the columns of $A_j$ are identical to those of $B$ in the $j$th block, and zero everywhere else. The matrix $A_{T+1}$ has $T$ columns, denoted $u_1, \dots, u_T$, where $u_{j}$ is the vector that has $\mathbf{1}$ in the $j$th block and zero everywhere else, scaled by parameter $D$.

As before, in the YES case of sparse regression, the approximation error is $\le T \| A_1\|_F^2$. In the NO case, consider any solution that chooses at most $\alpha s$ vectors from each $A_j$. By assumption, the error in the $j$th block (of $u_{j}$) is at least $h(m,n)$, for any vector $u_{j}$ that is not picked from $A_{T+1}$. If we set $T > 2\alpha s$, then at least $(T/2)$ of the vectors $u_{j}$ cannot be picked, and so the total error is at least $D \cdot (T/2) h(m, n)$. Again, we can choose $D$ large enough to obtain the desired hardness.
\end{proof}
This strong hardness of approximation further motivates the study of a relaxed variant, in which the set of vectors in the small-size summary $S$, rather than being a subset of $\bm{A}^{(1)}, \cdots, \bm{A}^{(\ell)}$, are instead required to belong to the {\em subspaces} spanned by the columns in each group $\bm{A}^{(1)}, \cdots, \bm{A}^{(\ell)}$. This is precisely our \fairSA{}problem.

{\bf Acknowledgments:} Aditya Bhaskara was supported by NSF CCF-2047288. David P. Woodruff was supported by a Simons Investigator Award and Office of Naval Research award number N000142112647.

\bibliographystyle{alpha}
\bibliography{bibdb}

\end{document}